\begin{document}

\title[Importance sampling versus standard averages]{
Importance sampling correction versus standard averages of reversible MCMCs in terms of the asymptotic variance}
\author{Jordan Franks}
\address{School of Mathematics, Statistics and Physics, Newcastle University, NE1 7RU Newcastle, United Kingdom}
\email{franks@iki.fi}

\author{Matti Vihola}
\address{Department of Mathematics and Statistics, P.O.Box 35, FI-40014 University of Jyväskylä, Finland}
\email{matti.vihola@iki.fi}

\keywords{Asymptotic variance,  
  delayed acceptance, 
  importance sampling, 
Markov chain Monte Carlo,
pseudo-marginal algorithm,
unbiased estimator}

\begin{abstract}
  We establish an ordering criterion for the asymptotic variances of two consistent Markov chain Monte Carlo (MCMC) estimators: an importance sampling (IS) estimator, based on an approximate reversible chain and subsequent IS weighting, and a standard MCMC estimator, based on an exact reversible chain.
  Essentially, we relax the criterion of the Peskun type covariance ordering by considering two different invariant probabilities, and obtain, in place of a strict ordering of asymptotic variances, a bound of the asymptotic variance of IS by that of the direct MCMC.
Simple examples show that IS can have arbitrarily better or worse asymptotic variance than Metropolis-Hastings and delayed-acceptance (DA) MCMC.  Our ordering implies that IS is guaranteed to be competitive up to a factor depending on the supremum of the (marginal) IS weight. 
  We elaborate upon the criterion in case of unbiased estimators as part of an auxiliary variable framework. 
  We show how the criterion implies asymptotic variance guarantees for IS in terms of pseudo-marginal (PM) and DA corrections, essentially if the ratio of exact and approximate likelihoods is bounded.  
  We also show that convergence of the IS chain can be less affected by unbounded high-variance unbiased estimators than PM and DA chains.
\end{abstract}

\maketitle
\section{Introduction}\label{sec:intro}

Let $\nu(\theta, z)\ud \theta \ud z$  be a probability measure on a jointly measurable space $\mathbf{T}\times \mathbf{Z}$ with $\sigma$-finite dominating measure $\ud \theta \ud z$, and suppose one desires to calculate expectations with respect to $\nu$ or its marginal
$
\nu^*(\theta) \defeq \int \nu(\theta, z)\ud z.
$
In many scenarios of interest, $\nu^*(\theta)$ is intractable to evaluate and $\mathbf{Z}$ is high-dimensional.
Even if the full joint density $\nu(\theta,z)$ is tractable up to a normalising constant, high-dimensional Markov chain Monte Carlo (MCMC) based on targeting $\nu$ with Metropolis-Hastings (MH) is often inefficient or even useless due to difficulty with the design of proposal distribution \cite{liuWK}.  

To deal with these issues, often one can transform the high-dimensional MCMC into a pseudo-marginal (PM) \cite{andrieu-roberts} or approximate marginal MCMC \cite{vihola-helske-franks}.  Then the proposal distribution can live on the low-dimensional space $\mathbf{T}$, and the resulting chains are often much more efficient.  The PM approach is asymptotically exact, while the approximate marginal approach requires an importance sampling (IS) correction to make it so.  We next describe these approaches, and give our main result comparing the relative efficiency of these approaches in terms of the asymptotic variance.

\subsection{Pseudo-marginal Markov chain Monte Carlo}
The PM approach is based on replacing $\nu^*(\theta)$ with a non-negative unbiased estimator $\hat{\nu}^*(\theta)$ of $\nu^*(\theta)$ (up to constant) within the standard (but assumed unavailable) $\MH$ algorithm targeting $\nu^*$ \cite{andrieu-roberts,lin-liu-sloan}.  That is, we assume there is a constant $c_\nu>0$ such that
\begin{equation}\label{eq:unbiased-marginal}
  \E[\hat{\nu}^*(\theta)] = c_\nu\,\nu^*(\theta)
  \end{equation}
for all $\theta$.  A standard example of such an estimator is
\begin{equation*}
\hat{\nu}^*(\theta) = \frac{1}{m} \sum_{i=1}^m \frac{ \nu(\theta, Z_i)}{Q_\theta(Z_i)},
\end{equation*}
where $Z_i$ are sampled i.i.d.~from some instrumental distribution $Q_\theta(\cdot)$ satisfying $Q_\theta(z) = 0 \implies \nu(\theta, z)=0$.    

An unbiased estimator satisfying \eqref{eq:unbiased-marginal} will allow for calculation of marginal expectations with respect to $\nu^*$ using the approaches we consider, but often one can do much better, allowing also for joint expectations with respect to $\nu$ \cite{andrieu-doucet-holenstein,vihola-helske-franks}.
That is, suppose one has access to non-negative unbiased estimators for a subclass $\mathcal{L}^1(\nu)$ of functions which we now describe.

With $\theta\in \mathbf{T}$, $\est^{(i)}\in [0,\infty)$ and $z^{(i)}\in \mathbf{Z}$ for $i=1,\ldots, m$, set
  $$
 \est(g) \defeq \frac{1}{m} \sum_{i=1}^m \zeta^{(i)} g(\theta,z^{(i)})
 $$
for $g\in L^1(\nu)$.
Let $Q$ be a probability kernel from $\mathbf{T}$ to $\mathbf{V}\defeq [0,\infty)^m \times \mathbf{Z}^m$.
Let $\mathcal{L}^1(\nu)$ consist of those functions $f\in L^1(\nu)$ such that for $g\in \{f,|f|\}$,
\begin{equation}\label{eq:unbiased}
\int Q_\theta(\ud v) \est(g) = c_\nu\, \nu^*(\theta)\int g(\theta, z) \nu(\ud z| \theta)
\end{equation}
for all $\theta \in \mathbf{T}$ and $v\defeq(\zeta^{(1:m)}, z^{(1:m)})\in\mathbf{V}$, where $c_\nu>0$ is some fixed constant not depending on $(\theta,v)$, and $\nu(\ud z| \theta)$ denotes a regular conditional probability of $\nu$ given $\theta$.

Note that if $1\in \mathcal{L}^1(\nu)$, then \eqref{eq:unbiased-marginal} is satisfied with $\hat{\nu}^*(\theta) \defeq \frac{1}{m}\sum_{i=1}^m \zeta^{(i)}$, and $L^1(\nu^*)$ is naturally included into $\mathcal{L}^1(\nu)$ via $f(\theta,z) \defeq f(\theta)$ for $f\in L^1(\nu^*)$.  In the following, we will always assume that $1\in \mathcal{L}^1(\nu)$, since the schemes we consider require this for consistency.  

Let $q$ be a transition density on $\mathbf{T}$, where $q_\theta(\theta')$ denotes the probability to move from $\theta$ to $\theta'$.  With $(\Theta_0,\zeta_0^{(1:m)}, Z_0^{(1:m)})\in \mathbf{T}\times[0,\infty)^m\times \mathbf{Z}^m$ initial values with $\sum_{i=1}^m \zeta_0^{(i)}>0$, for $k=1,\ldots, n$, the PM iteration is given in Algorithm \ref{alg:pm} \citep[see][]{andrieu-roberts}.
  \begin{algorithm}[t]\caption{Pseudo-marginal algorithm, for iteration $k\ge 1$.}
    \label{alg:pm}
  \begin{enumerate}[(PM 1)]
\item Propose a transition $\Theta' \sim q_{\Theta_{k-1}}(\cdot)$.
\item Given $\Theta'$, generate $(\zeta'^{(1:m)}, Z'^{(1:m)})$, and with probability
  $$
  \min\bigg\{1, \frac{ q_{\Theta'}(\Theta_{k-1}) \sum_{i=1}^m \zeta'^{(i)}}{q_{\Theta_{k-1}}(\Theta') \sum_{i=1}^m \zeta_{k-1}^{(i)}} \bigg\}
  $$
  set $(\Theta_k, \zeta_k^{(i)}, Z_k^{(i)}) \leftarrow (\Theta', \zeta'^{(i)},Z'^{(i)})$.  Otherwise, set
  $(\Theta_k, \zeta_k^{(i)}, Z_k^{(i)})) \leftarrow (\Theta_{k-1}, \zeta_{k-1}^{(i)}, Z_{k-1}^{(i)})$.
  \end{enumerate}
\end{algorithm}

Assuming $1\in\mathcal{L}^1(\nu)$ and the PM chain is Harris ergodic (see Section \ref{sec:preliminaries}), for $f\in \mathcal{L}^1(\nu)$ the estimator
\begin{equation}\label{eq:direct-estimator}
\frac{1}{n} \sum_{k=1}^n \hat{\zeta}_k^{(i)} f(\Theta_k, Z_k^{(i)})
\xrightarrow{n\to\infty} \nu(f)\defeq \E_{\nu}[f]
\end{equation}
with $\hat{\zeta}_k^{(i)}\defeq \zeta_k^{(i)} / \sum_{j=1}^m \zeta_k^{(j)}$, 
is a consistent estimator for $\nu(f)$ \cite{andrieu-roberts}. 

\subsection{Accelerations based on an approximation}
Suppose one has an approximation $\mu^*$ of $\nu^*$, by which we mean that $\mu^*$ is some probability measure on $\mathbf{T}$ such that
\begin{equation}\label{eq:support-cond}
\mu^*(\theta)=0\implies \nu^*(\theta)=0,
\end{equation}
and there is some constant $c_\mu>0$ (perhaps unknown) such that we can evaluate unnormalised $\mu_u^*(\theta) = c_\mu \mu^*(\theta)$ for all $\theta\in \mathbf{T}$.  The approximation $\mu^*$ could arise, for example, when subsampling data \cite{banterle-grazian-lee-robert, quiroz-tran-villani-kohn} or using a more tractable diffusion model instead of a Markov jump process model \cite{golightly-henderson-sherlock}, giving rise to the approximate posterior $\mu^*$.  We next describe delayed-acceptance (DA) in two variants, and IS, all of which make use of the approximation $\mu^*$.  
\subsubsection{Delayed-acceptance MCMC in two variants}
If one has an approximation $\mu^*$ of $\nu^*$ as above, one can use a
PM acceleration technique
known as DA \cite{christen-fox,lin-liu-sloan,liu-mc}, which has garnered considerable interest.  
With $(\Theta_0,\est_0^{(1:m)}, Z_0^{(1:m)}) \in \mathbf{T}\times [0,\infty)^m\times \mathbf{Z}^m$ initial values with $\mu^*(\Theta_0)>0$ and $\sum_{i=1}^m \est_0^{(i)} >0$, for $k=1,\ldots, n$, iterate as given in Algorithm \ref{alg:dapr-mh}.
\begin{algorithm}[t]
\caption{Delayed-acceptance $(\DApr)$, for iteration $k\ge 1$.}
\label{alg:dapr-mh}
\begin{enumerate}[($\DApr$ 1)]
\item Propose a transition $\Theta' \sim q_{\Theta_{k-1}}(\cdot)$.
\item \label{alg:dapr-first} Proceed to Step ($\DApr$ \ref{alg:dapr-second}) with probability
  $$
\min\bigg\{1, \frac{ \mu_u^* (\Theta') q_{\Theta'}(\Theta_{k-1})}{\mu_u^*(\Theta_{k-1}) q_{\Theta_{k-1}}(\Theta')} \bigg\}.
$$
Otherwise, set $(\Theta_k, \est_k^{(i)}, Z_k^{(i)})\leftarrow (\Theta_{k-1}, \est_{k-1}^{(i)}, Z_{k-1}^{(i)}))$ and exit.

\item \label{alg:dapr-second}
  Given $\Theta'$, generate $(\zeta'^{(1:m)}, Z'^{(1:m)})$.
  With probability
  $$
  \min\bigg\{1, \frac{(\sum_{i=1}^m \est'^{(i)})/ \mu_u^*(\Theta')}{(\sum_{i=1}^m \est_{k-1}^{(i)})/ \mu_u^*(\Theta_{k-1})} \bigg\}
$$
set $(\Theta_k, \est_k^{(i)}, Z_k^{(i)})\leftarrow (\Theta', \est'^{(i)}, Z'^{(i)})$.  Otherwise, set
$
(\Theta_k, \est_k^{(i)}, Z_k^{(i)}) \leftarrow (\Theta_{k-1}, \est_{k-1}^{(i)}, Z_{k-1}^{(i)}).
$
\end{enumerate}
\end{algorithm}

The DA estimator for $\nu(f)$ is given in \eqref{eq:direct-estimator}, which is the same as in the PM case.
Note that Step ($\DApr$ \ref{alg:dapr-second}), which involves possibly expensive unbiased estimator generation, is only run if the proposal $\Theta'$ is assigned sufficient approximate probability in Step ($\DApr$ \ref{alg:dapr-first}) which means it is likely to be accepted in Step ($\DApr$ \ref{alg:dapr-second}).

Consider now Algorithm \ref{alg:dag-mh}, which is a variant, $\DAg$, of $\DApr$ Algorithm \ref{alg:dapr-mh}.  $\DAg$ is considered in \citep[][`surrogate transition method,' Section 9.4.3]{liu-mc}.
\begin{algorithm}
  \caption{Delayed-acceptance $(\DAg)$, for iteration $k\ge 1$.}
  \label{alg:dag-mh}
\begin{enumerate}[($\DAg$ 1)]
\item Propose a transition $\Theta' \sim q_{\Theta_{k-1}}(\cdot)$.
\item \label{alg:dag-first} With probability
  $$
\min\bigg\{1, \frac{ \mu_u^* (\Theta') q_{\Theta'}(\Theta_{k-1})}{\mu_u^*(\Theta_{k-1}) q_{\Theta_{k-1}}(\Theta')} \bigg\}
$$
set $\Theta''\leftarrow \Theta'$.  Otherwise, set $\Theta''\leftarrow \Theta_{k-1}$.

\item \label{alg:dag-second}
Given $\Theta''$, generate $(\est''^{(1:m)}, Z''^{(1:m)})$,
With probability
  $$
\min\bigg\{1, \frac{(\sum_{i=1}^m \est''^{(i)})/ \mu_u^*(\Theta'')}{(\sum_{i=1}^m \est_{k-1}^{(i)})/ \mu_u^*(\Theta_{k-1})} \bigg\}
$$
set $(\Theta_k, \est_k^{(i)}, Z_k^{(i)}) \leftarrow (\Theta'', \est''^{(i)}, Z''^{(i)})$.

Otherwise, set
$
(\Theta_k, \est_k^{(i)}, Z_k^{(i)}) \leftarrow (\Theta_{k-1}, \est_{k-1}^{(i)}, Z_{k-1}^{(i)}).
$
\end{enumerate}
\end{algorithm}
In the deterministic case $\frac{1}{m} \sum_{i=1}^m \est^{(i)}= c_\nu \nu^*(\theta)$ almost surely for all $\theta$ with $(\est^{(1:m)}, Z^{(1:m)})\sim Q_{\theta}(\cdot)$, then $\DApr$ and $\DAg$ have the same transition kernels (Proposition \ref{prop:das}).  But in general $\DAg$ has lower asymptotic variance than $\DApr$ (Proposition \ref{prop:das}), although $\DApr$ is probably more computationally efficient.  This is evident from the fact that Step ($\DAg$ \ref{alg:dag-second}) is performed at every iteration of $\DAg$ (Algorithm \ref{alg:dag-mh}).

\subsubsection{Importance sampling correction of approximate MCMC}
MCMC-IS (Algorithm \ref{alg:is}) consists of targeting an approximation $\mu^*$ of $\nu^*$ 
with MCMC, and then using importance sampling (IS) correction over the latent states \cite{doss,gilks-roberts,glynn-iglehart,hastings,parpas-ustun-webster-tran,vihola-helske-franks}.  
Let $\mu^*$ be an approximation of $\nu^*$ as in \eqref{eq:support-cond}.
\begin{algorithm}
  \caption{Importance sampling correction of approximate MCMC}
  \label{alg:is}
\begin{enumerate}[({IS Phase} 1)]
\item\label{alg:is-first} Let $(\Theta_0,\est_0^{(1:m)}, Z_0^{(1:m)})\in \mathbf{X}\times(0,\infty)^m\times \mathbf{Z}^m$ be some initial values with $\mu^*(\Theta_0)>0$ and $\sum_{i=1}^m \est_0^{(i)}>0$.  For $k=1,\ldots, n$, do:
  \begin{enumerate}[i.]
  \item Propose a transition $\Theta'\sim q_{\Theta_{k-1}}(\cdot)$.
  \item 
  With probability
  $$
\min\bigg\{1, \frac{\mu_u^*(\Theta') q_{\Theta'}(\Theta_{k-1})}{\mu_u^*(\Theta_{k-1})q_{\Theta_{k-1}}(\Theta')} \bigg\}
$$
set $\Theta_k \leftarrow \Theta'$.  Otherwise, set $\Theta_k\leftarrow \Theta_{k-1}$.
  \end{enumerate}
  \item\label{alg:is-second} For each $k\in \{1,\ldots, n\}$, given $\Theta_k$, generate $(\zeta_k^{(1:m)}, Z_k^{(1:m)})$.
  \newline With $\xi_k^{(i)}\defeq \est_k^{(i)} / \mu_u^*(\theta)$, form the IS estimator
    \begin{equation}\label{eq:indirect-estimator}
      E_n^{\IS}(f)\defeq \frac{\sum_{k=1}^n \sum_{i=1}^m \xi_k^{(i)} f(\Theta_k,Z_k^{(i)})}{\sum_{k=1}^n \sum_{i=1}^m\xi_k^{(i)} }
      \xrightarrow{n\to\infty} \nu(f),
      \end{equation}
consistent if the Phase \ref{alg:is-first} chain is Harris ergodic and $1, f\in \mathcal{L}^1(\nu)$.
\end{enumerate}
\end{algorithm}
Note that IS Phase \ref{alg:is-second}, which involves the generation of unbiased estimators, may be done independently for each $k$ which allows for efficient parallelisation.

\subsection{Defining the asymptotic variance}
As PM/DA and MCMC-IS are viable approaches for consistent inference, the central question is which one should be used.
The standard measure of statistical efficiency for MCMCs is the \emph{asymptotic variance}.

\begin{definition}[Asymptotic variance]\label{def:asvar}
  Let $(X_k)$ be a $\nu$-Harris ergodic Markov chain with transition $L$.  For $f\in L^2(\nu)$ the \emph{asymptotic variance} of $f$ with respect to $L$ is defined, whenever the limit exists in $[0,\infty]$, as
  \begin{equation}\label{def:asvar-eq}
\Var(L, f) \defeq \lim_{n \ra \infty} \E\Big[\Big(\frac{1}{\sqrt{n}} \sum_{k=1}^n [f(X_k^{(s)}) - \nu(f)] \Big)^2\Big],
\end{equation}
where $(X^{(s)}_k)$ denotes a stationary version of the chain $(X_k)$, i.e.~$X^{(s)}_0 \sim \nu$.  
  \end{definition}
For reversible $L$, which is the focus of this paper, $\Var(L,f)$ always exists in $[0,\infty]$ \citep[see][]{tierney-note}.  Moreover, a CLT holds under general conditions.   
\begin{proposition}\label{prop:kipnis-varadhan}
Let $(X_k)_{k\ge 1}$ be an aperiodic $\nu$-reversible Harris ergodic Markov chain with transition $L$.  If $f \in L^2(\nu)$ and $\Var(L, f)<\infty$, then, for all initial distributions,
\begin{equation}\label{eq:kipnis-varadhan}
  \frac{1}{\sqrt{n}}\Big(\sum_{k=1}^n [f(X_k) - \nu(f) ]\Big)
\xrightarrow{n\ra\infty}
\mathcal{N}\big(0, \Var(L, f)\big),
\qquad\text{in distribution},
\end{equation}
where $\mathcal{N}(a,b^2)$ is a normal distribution with mean $a$ and variance $b^2$.
\end{proposition}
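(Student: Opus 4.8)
The plan is to reduce to the stationary chain and then run the martingale approximation of Gordin and Kipnis--Varadhan, using that $\mu$-reversibility makes $K$ a self-adjoint contraction on $L^2(\mu)$. Since $(X_k)$ is $\mu$-Harris ergodic and aperiodic, I would realise it on one probability space together with a stationary copy $(X_k^{(s)})$, $X_0^{(s)}\sim\mu$, coupled by a split-chain construction so that the two trajectories coincide from an a.s.\ finite random time $\tau$ on (cf.\ \cite{meyn-tweedie}); then $\sum_{k=1}^n[f(X_k)-\mu(f)]-\sum_{k=1}^n[f(X_k^{(s)})-\mu(f)]$ equals the fixed random variable $\sum_{k=1}^\tau[f(X_k)-f(X_k^{(s)})]$ for $n\ge\tau$, hence is $o(\sqrt n)$ a.s., and by Slutsky's lemma it suffices to prove \eqref{eq:kipnis-varadhan} for the stationary chain. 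Replace $f$ by $\bar f:=f-\mu(f)\in L^2_0(\mu)$.

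Next, reformulate the asymptotic variance spectrally. Let $(E_\lambda)_{\lambda\in[-1,1]}$ be the spectral resolution of the self-adjoint contraction $K$ on $L^2(\mu)$ and put $\sigma_f(\ud\lambda):=\ud\inner{E_\lambda\bar f}{\bar f}_\mu$, so that $\inner{\bar f}{K^k\bar f}_\mu=\int\lambda^k\,\sigma_f(\ud\lambda)$. Expanding the square in \eqref{def:asvar-eq} and passing to the limit (a standard computation, legitimate for reversible $K$ as noted after Definition \ref{def:asvar}) gives
\[
 \Var(K,f)=\int_{[-1,1]}\frac{1+\lambda}{1-\lambda}\,\sigma_f(\ud\lambda),
\]
so the hypothesis $\Var(K,f)<\infty$ is equivalent to $\int(1-\lambda)^{-1}\sigma_f(\ud\lambda)<\infty$. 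For $\epsilon\in(0,1)$ set
\[
 g_\epsilon\;:=\;\sum_{j\ge0}(1-\epsilon)^jK^j\bar f\;=\;\int_{[-1,1]}\frac{1}{1-(1-\epsilon)\lambda}\,\ud E_\lambda\,\bar f\;\in\;L^2_0(\mu),
\]
the approximate Poisson solution satisfying $g_\epsilon-(1-\epsilon)Kg_\epsilon=\bar f$. The key estimate, obtained by dominated convergence on the $\sigma_f$-side using the above equivalence, is that $\epsilon\,\norm{g_\epsilon}_\mu^2\to0$ as $\epsilon\downarrow0$; this is precisely where finiteness of the asymptotic variance is used in its sharp form.

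Now decompose. With $\F_k$ the natural filtration and $Kg_\epsilon(X_{k-1}^{(s)})=\E[g_\epsilon(X_k^{(s)})\mid\F_{k-1}]$, the identity $\bar f=g_\epsilon-(1-\epsilon)Kg_\epsilon$ yields
\[
 \sum_{k=1}^n\bar f(X_k^{(s)})\;=\;M_n^{(\epsilon)}+\big(Kg_\epsilon(X_0^{(s)})-Kg_\epsilon(X_n^{(s)})\big)+\epsilon\sum_{k=1}^nKg_\epsilon(X_k^{(s)}),
\]
where $M_n^{(\epsilon)}:=\sum_{k=1}^n\big(g_\epsilon(X_k^{(s)})-Kg_\epsilon(X_{k-1}^{(s)})\big)$ is a stationary, square-integrable martingale. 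Choosing $\epsilon=\epsilon_n\downarrow0$ slowly enough (so that $\norm{g_{\epsilon_n}}_\mu^2=o(n)$), the bracketed coboundary term and the last term are both $o(\sqrt n)$ in $L^2(\mu)$ by the key estimate, so $n^{-1/2}\sum_{k=1}^n\bar f(X_k^{(s)})$ and $n^{-1/2}M_n^{(\epsilon_n)}$ have the same distributional limit, if any. The increments $g_\epsilon(X_k^{(s)})-Kg_\epsilon(X_{k-1}^{(s)})$ converge in $L^2(\mu)$ as $\epsilon\downarrow0$ to a stationary, ergodic (the shift being ergodic by Harris ergodicity) sequence of martingale differences; a martingale central limit theorem for triangular arrays then gives $n^{-1/2}M_n^{(\epsilon_n)}\convd\mathcal N(0,v)$ with $v$ the common $L^2(\mu)$-limit of the increment variances, and a last spectral computation identifies $v=\int\frac{1+\lambda}{1-\lambda}\,\sigma_f(\ud\lambda)=\Var(K,f)$. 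Together with the first two paragraphs this gives \eqref{eq:kipnis-varadhan}.

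The hard part will be this last step: arranging $\epsilon=\epsilon_n$ so that the resolvent remainder is genuinely $o(\sqrt n)$ in $L^2(\mu)$, and setting up the accompanying triangular-array martingale CLT. It is here that both the self-adjointness of $K$ — which powers the spectral calculus, and in particular the estimate $\epsilon\norm{g_\epsilon}_\mu^2\to0$ — and the hypothesis $\Var(K,f)<\infty$ are indispensable: without the latter, $\epsilon\norm{g_\epsilon}_\mu^2$ need not vanish and no $\sqrt n$-CLT can hold.
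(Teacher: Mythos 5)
The paper does not prove this proposition itself — it simply cites Kipnis--Varadhan [Cor.~1.5] for the stationary reversible CLT and Meyn--Tweedie [Prop.~17.1.6] for the passage to arbitrary initial distributions — and your sketch correctly reproduces the argument of those sources: the resolvent $g_\epsilon$, the spectral form of $\Var(K,f)<\infty$ and the key estimate $\epsilon\|g_\epsilon\|_\mu^2\to0$, the martingale-plus-coboundary decomposition with an ergodic-martingale CLT, and the split-chain coupling to a stationary copy. One small imprecision: the coboundary wants $\epsilon_n$ not too small (so $\|g_{\epsilon_n}\|_\mu^2=o(n)$) while the crude $L^2$ bound $\epsilon_n n\|g_{\epsilon_n}\|_\mu$ on the trailing term wants $\epsilon_n$ not too large, so ``slowly enough'' should really be the balanced choice $\epsilon_n\asymp 1/n$ (as Kipnis and Varadhan take), under which both remainders are $O(\|g_{1/n}\|_\mu)=o(\sqrt n)$ directly from the key estimate.
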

Proposition \ref{prop:kipnis-varadhan} follows from \citep[Cor.~1.5]{kipnis-varadhan}, where it holds under all initial conditions because of the Harris ergodicity assumption \citep[see][Cor.~21.1.6]{doucMPS}.
Proposition \ref{prop:kipnis-varadhan} above explains the importance of the asymptotic variance, since it is the CLT limiting variance.  The asymptotic variance characterises the \emph{statistical efficiency} of the method in the asymptotic regime, but also characterises the finite sample efficiency in the finite regime \citep[see][]{rudolf-error}.

\subsection{Comparing the asymptotic variances}
We first define some objects.
Given $X_k = (\Theta_k, \est_k^{(1:m)}, Z_k^{(1:m)})$ and $f:\mathbf{X}\rightarrow \R$, define
$$
\est_k(f) = \frac{1}{m} \sum_{i=1}^m \est_k^{(i)} f(\Theta_k, Z_k^{(i)}),
\qquad
\hat{\est}_k^{(i)} \defeq \frac{ \est_k(f) }{ \est_k(1)}
\qquad
\xi_k(f) \defeq \frac{\est_k(f)}{\mu_u^*(\Theta_k)}.
$$
We also define the MCMC-IS kernel to be
\begin{equation}\label{eq:is-aug}
\bar{K}_{\theta v}(\ud \theta', \ud v') \defeq K_\theta(\ud \theta') Q_{\theta'}(\ud v')
\end{equation}
where $K$ is the approximate marginal $\MH$ kernel in IS (Algorithm \ref{alg:is}) Phase \eqref{alg:is-first}  with invariant measure $\mu^*$ \cite{vihola-helske-franks}.  $\bar{K}$ is $\bar{\mu}$-reversible, where
$
\bar{\mu}(\ud \theta, \ud v)\defeq \mu^*(\ud \theta) Q_{\theta}(\ud v).
$
Define the extended IS weights $w_u(X_k) \defeq \xi_k(1)$ and $w(X_k) \defeq w_u(X_k)*(c_\mu/ c_\nu)$.  

Assume $1\in \mathcal{L}^1(\nu)$ and \eqref{eq:support-cond} holds.  By our discussion of the asymptotic variance, $\Var\big(L,\hat{\zeta}(f)\big)$  is assigned to the PM/DA estimator $n^{-1}\sum_{k=1}^n \hat{\zeta}_k(f)$ given in \eqref{eq:direct-estimator}.  
Now let $\bar{K}$ be the MCMC-IS kernel defined in \eqref{eq:is-aug}, and note that the IS estimator \eqref{eq:indirect-estimator} can be written as
\begin{equation}\label{eq:slutsky}
E_n^{\IS}(f) =
\frac{\frac{1}{n}\sum_{k=1}^n \xi_k(f)}{\frac{1}{n} \sum_{k=1}^n \xi_k(1)}
=
\frac{\frac{1}{n}\sum_{k=1}^n w_u(X_k) \hat{\est}_k(f)}{\frac{1}{n} \sum_{k=1}^n w_u(X_k)}.
\end{equation}
Since $\Var(\bar{K}, w_u \hat{\est}(f) )$ is assigned to the numerator from the definition of the asymptotic variance, and the denominator converges almost surely to $c_\nu/c_\mu$ under a Harris ergodicity assumption, the asymptotic variance $\Var(\bar{K}, w \hat{\est}(f) )$ is assigned to the IS estimator by \eqref{eq:kipnis-varadhan} and Slutsky's lemma.

For a function $g:\mathbf{X}\rightarrow \mathbb{R}$ and probability $\nu$ on $\mathbf{X}$, define the norm
\begin{equation}\label{eq:esssup}
\norm{g}_{L^\infty(\nu)} \defeq \text{$\nu$-$\esssup_{x\in\mathbf{X}} |g(x)|$}.
\end{equation}
Let us define the marginal weight $w^*(\theta) \defeq \nu^*(\theta)/ \mu^*(\theta)$ and note that $\norm{w^*}_{L^\infty(\mu^*)} \le \norm{w}_{L^\infty(\bar{\mu})}$.  Under Harris ergodicity, we remark that
a consistent upper bound estimator for $\norm{w}_{L^\infty(\bar{\mu})}$ is given by
\begin{equation}\label{eq:esssup-est}
\uc_{n}\defeq \bigg(\frac{1}{n-n_b} \sum_{k=n_b+1}^n \xi_k(1)\bigg)^{-1} \max_{n_b< k\le n} \xi_k(1),
\end{equation}
which is moreover a consistent estimator for $\norm{w}_{L^\infty(\bar{\mu})}$ as $n_b, n\to\infty$, where $n_b\ge 1$ denotes the burn-in of the chain.

Let $L$ be the transition corresponding to the PM, $\DApr$, or $\DAg$ chain.  Then $L$ has invariant probability
$$
\pi(\ud \theta, \ud \est^{(1:m)}, \ud z^{(1:m)})
\defeq
c_\nu^{-1} \ud \theta Q_{\theta}(\ud \est^{(1:m)}, \ud z^{(1:m)}) \est(1)
$$
Define $\mathcal{L}^2(\nu) \defeq \{ f\in \mathcal{L}^1(\nu): f^2\in \mathcal{L}^1(\nu)\}$, where we recall $\mathcal{L}^1(\nu)$ was defined through \eqref{eq:unbiased}.

Our main result (Theorem \ref{thm:peskun-is}) in the present context says the following.
\begin{corollary}\label{cor:intro}
  Suppose $1\in \mathcal{L}^1(\nu)$.  Let $L$ be the transition kernel of the $\PM$, $\DApr$ or $\DAg$ chains defined in Algorithm \ref{alg:pm}-\ref{alg:dag-mh} respectively.   Suppose \eqref{eq:support-cond} holds, and let $\bar{K}$ be the MCMC-IS kernel \eqref{eq:is-aug} corresponding to Algorithm \ref{alg:is}, and let $f\in \mathcal{L}^2(\nu)$.  Suppose $K$ and $L$ are Harris ergodic and $\Var(\bar{K}, w \hat{\est}(f) )<\infty$. Set $\bar{f}\defeq f- \nu(f)$.  The following hold:
\begin{enumerate}[(i)]  
\item\label{cor:intro-marginal} If 
  $\norm{w^*}_{L^\infty(\mu^*)}<\infty$
  then, 
  $$
\Var\big(\bar{K}, w \hat{\est}(f)\big) \le \norm{w^*}_{L^\infty(\mu^*)} \Big(\Var\big(L, \hat{\est}(f)\big) + \Var_{\pi}\big(\hat{\est}(f)\big)\Big) + 3\, \Var_{\bar{\mu}}\big(w \hat{\est}(\bar{f}) \big).
$$ 
\item\label{cor:intro-full} If $\norm{w}_{L^\infty(\bar{\mu})} < \infty$, then
$$
\Var\big(\bar{K}, w \hat{\est}(f)\big) +\Var_{\bar{\mu}}\big( w \hat{\est}(\bar{f})\big) \le \norm{w}_{L^\infty(\bar{\mu})} \Big(\Var\big(L, \hat{\est}(f) \big) + \Var_{\pi}\big( \hat{\est}(f)\big) \Big).
$$
\item\label{cor:intro-below}
If $w \hat{\est}(f)\in L^2(\bar{\mu})$, then 
$$
\Var\big(\bar{K}, w \hat{\est}(f)\big) +\Var_{\bar{\mu}}\big( w \hat{\est}(\bar{f})\big) \ge (\text{$\bar{\mu}$-$\essinf w$}  )\Big(\Var\big(L, \hat{\est}(f) \big) + \Var_{\pi}\big( \hat{\est}(f)\big) \Big).
$$
\end{enumerate}
\end{corollary}

Although these bounds do not provide an ordering of asymptotic variances as in the Peskun-Tierney ordering for direct MCMCs, we could never hope these bounds to do so in our context: we give simple examples showing that PM/DA (resp.~IS) can do arbitrarily better than IS (resp.~PM/DA) in terms of the asymptotic variance in Appendix \ref{app:examples}.  IS seems to perform better than PM/DA when the approximation is good in the sense that the weight $w$ has a low supremum, as Corollary \ref{cor:intro} would suggest.  In the usual unbounded space case, this means that $\mu(\theta)$ would need to have fatter tails than $\dot{\nu}(\theta)$, or at least that $\nu(\theta)/\mu(\theta)$ is bounded, which can often be done by inflating $\mu_u(\theta)$ uniformly by a positive constant \citep[see][]{vihola-helske-franks}.

The rest of this paper is concerned with proving Corollary \ref{cor:intro} and other versions, for example, for general reversible chains, IS jump chains, and when $\mu_u^*(\theta)$ requires unbiased estimators.

\subsection{Previous work}
In various settings and different ways, we are not the first to compare direct MCMC with IS MCMC.
A study of self-normalised IS versus the independence $\MH$ has been made in \cite{liu}.  Asymptotic variances are explicitly computed and compared in some discrete examples in \cite{bassetti-diaconis} who find that IS and $\MH$ can be competitive, but that $\MH$ can do much better (see also \citep[Sect.~4.2]{bardsley-solonen-haario-laine}).  On the other hand, \cite{tran-scharth-pitt-kohn} study independent IS with unbiased estimators, and find that this performs better than PM in their experiments (see also \cite{chopin-jacob-papaspiliopoulos}).
The IS versus DA question is noted in \citep[Sect.~3.3.3]{cui-marzouk-willcox-scalable}, who mention the likely improvement of IS over DA in massive parallelisation.
A methodological comparison of the alternatives in the general MCMC and joint inference context is made in \cite{vihola-helske-franks}, who investigate empirically the relative efficiencies, finding that IS and DA can be competitive, with IS doing slightly better than DA in their experiments, with little or no parallelisation.  The gap widens with increased parallelisation, a known strength of the IS correction \citep[see][]{cui-marzouk-willcox-scalable,lee-yau-giles-doucet-holmes, vihola-helske-franks}.  

We consider here general reversible Markov chains, in particular PM/DA, and seek a Peskun type ordering of the asymptotic variances.

\subsection{Outline}\label{sec:main-outline}
After preliminaries in Section \ref{sec:preliminaries}, we state in Section \ref{sec:normalised} the Peskun type ordering result for normalised IS (Theorems \ref{thm:peskun}) and augmented IS kernels (Theorem \ref{thm:peskun-augmented}).  We define jump chains and self-normalised importance sampling (SNIS) in Section \ref{sec:jump}, before proceeding to Section \ref{sec:schemes}, where we consider a general auxiliary variable framework which accommodates IS and PM type schemes that use unbiased estimators.  Specific PM type algorithms and kernels which we consider are given in Section \ref{sec:application}, and we compare them with IS (Theorem \ref{thm:comparison}).
We discuss some stability considerations in Section \ref{sec:discussion}.
Proofs of the Peskun type orderings are given in Appendix \ref{app:peskun}.  Dirichlet form bounds and proof of the main comparison application (Theorem \ref{thm:comparison}) are found in  Appendix \ref{app:dirichlet}. Appendix \ref{app:misc} mentions some properties of augmented chains. Appendix \ref{app:examples} contains the examples mentioned earlier.  

\section{Notation and definitions}\label{sec:preliminaries}
\subsection{Notation}  
The spaces we consider $\mathbf{X}$ are assumed equipped with a $\gs$-algebra, denoted $\mathcal{B}(\mathbf{X})$, and with a $\gs$-finite dominating measure, denoted `$\ud x$.'  Product spaces will be assumed equipped with their product $\gs$-algebras and corresponding product measures.  If $\mu$ is a probability density on $\mathbf{X}$, we denote the corresponding probability measure with the same symbol, so that $\mu(\ud x) = \mu(x) \ud x$.   

For $p\in [1,\infty)$, we denote by $L^p(\mu)$ the Banach space of equivalence classes of measurable $f:\mathbf{X}\ra \R$ satisfying $\norm{f}_p<\infty$ under the norm $\norm{f}_{L^p(\mu)}\defeq \{\int \abs{f(x)}^p \mu(\ud x)\}^{1/p}$.  We similarly define $L^\infty(\mu)$ under the norm $\norm{f}_{L^\infty(\mu)}$ as in \eqref{eq:esssup}.  We denote by $L^p_0(\mu)$ the subset of $L^p(\mu)$ with $\mu(f)=0$, where $\mu(f)\defeq \int f(x) \mu (\ud x)$.  For $f\in L^1(\mu)$ and $K_x(\ud x')$ a Markov kernel on $\mathbf{X}$, we define $\mu K(A) \defeq \int\mu(\ud x) K_x(A)$ for $A\in \mathcal{B}(\mathbf{X})$, $K f(x)\defeq \int K_x(\ud x') f(x')$, and inductively $K^n f(x)\defeq K^{n-1}(Kf)(x)$ for $n\ge 2$.  For $f,g\in L^2(\mu)$, we define $\inner{f}{g}_\mu \defeq \int f(x) g(x) \mu(\ud x)$,
  $
  \norm{f}_\mu \defeq (\inner{f}{f}_\mu)^{1/2},
  $
  and $\Var_\mu(f) \defeq \mu(f^2) - \mu(f)^2$.  

  For $m\in \mathbb{N}$ and $x^{(i)}\in \mathbf{X}$ for $i=1,\ldots m$, we write $x^{(1:m)}\defeq (x^{(1)}, \ldots, x^{(m)})$.
Throughout, $\nu$ will denote the target probability of interest, and for $\vp\in L^1(\nu)$ we set $\bar{\vp}\defeq \vp-\nu(\vp)$, element of $L^1_0(\nu)$.    
  
  \subsection{Definitions}
Let $\mu$ and $\nu$ be $\gs$-finite measures on $\mathbf{X}$.  If $\mu(A)=0$ implies $\nu(A)=0$ for all $A\in \mathcal{B}(\mathbf{X})$, we say that $\nu$ is \emph{absolutely continuous} with respect to $\mu$, and write $\nu \ll \mu$.  Suppose $\nu\ll \mu$.  Recall that a \emph{Radon-Nikod\'{y}m derivative} of $\nu$ with respect to $\mu$ is a non-negative measurable function $\der{\nu}{\mu}(x)$ on $\mathbf{X}$ such that $\mu( \der{\nu}{\mu} g) = \nu(g)$ for all $g\in L^1(\nu)$.  If also $\mu$ and $\nu$ are probability densities, then it is easy to see that $\der{\nu}{\mu}(x)$ is in $L^1(\mu)$, and is equivalent with $\fr{\nu(x)}{\mu(x)}$.

Let $\mu$ be a probability on $\mathbf{X}$.  A Markov chain $K$ on $\mathbf{X}$ is $\mu$-\emph{invariant} if $\mu K=\mu$.  If also $\inner{f}{Kf}_\mu\ge 0$ for all $f\in L^2(\mu)$, then $K$ is \emph{positive}.  If $\mu( \ud x) K_x (\ud x') = \mu(\ud x') K_{x'}( \ud x)$, then $K$ is said to satisfy \emph{detailed balance} with respect to $\mu$, or briefly, $K$ is $\mu$-\emph{reversible}.  This implies that $K$ is $\mu$-invariant, and that the \emph{Dirichlet form} $\cE_K(f)$ for $f\in L^2(\mu)$ satisfies 
\begin{equation}\label{eq:dirichlet-reversible}
  \cE_K(f)
  \defeq
  \inner{f}{(1-K)f}_{\mu}
  = \fr{1}{2} \int \mu (\ud x) K_x( \ud x') \big( f(x) - f(x')\big)^2.
\end{equation}
We say Markov chain $K$ is $\mu$-\emph{Harris ergodic} if $K$ is $\mu$-invariant, $\psi$-irreducible, and Harris recurrent.
See \cite{meyn-tweedie} for the definition of $\psi$-irreducibility and Harris recurrence, and further details.  Most MCMC schemes are Harris ergodic, although a careless implementation can lead to a non-Harris chain \citep[see][]{roberts-rosenthal-harris}.

\section{Peskun type ordering for normalised importance sampling}\label{sec:normalised}
\subsection{General case}
Let $\mu$ and $\nu$ be probability measures on a measurable space $\mathbf{X}$, and let $\cv:\mathbf{X}\ra [0,\infty)$ be a non-negative measurable function.
  \begin{assumption}[Importance sampling]\label{a:is} A triplet $(\mu,\nu, \cv)$ is such that $\nu \ll \mu$ and $\cv(x)=\der{\nu}{\mu}(x)$ is the Radon-Nikod\'{y}m derivative. 
  \end{assumption}

  \begin{assumption}\label{a:peskun}
A heptuple $(\mu,\nu,\cv,K,L,\lc,\uc)$ is such that $(\mu,\nu,\cv)$ satisfies Assumption \ref{a:is}, $K$ and $L$ are Harris ergodic Markov chains reversible with respect to $\mu$ and $\nu$, respectively, and the constants $\lc,\,\uc\ge 0$ satisfy 
    \begin{enumerate}[(a)]
      \item\label{a:peskun-one}
        $\lc\,\cE_K(g) \le \cE_L(g) \le \uc\, \cE_K(g)$, for all $g\in L^2(\mu)$, and 
      \item\label{a:peskun-two}
        $\lc\le \cv \le \uc$, $\mu$-a.e. 
    \end{enumerate}
  \end{assumption}

\begin{theorem}\label{thm:peskun}
  If Assumption \ref{a:peskun} holds, then for all $\vp \in L^2(\nu)$,
\begin{align}\label{thm:peskun-equation}
\Var(K, \cv \vp) + \Var_{\mu}( \cv \bvp) &\le \uc \,\big[\Var(L,\vp) + \Var_{\nu}(\vp) \big], 
\\ \label{thm:peskun-equation-reverse}
  \Var(K, \cv \vp) + \Var_\mu(\cv \bvp)
  &\ge
  \lc\, \big[ \Var(L, \vp) + \Var_\nu( \vp)\big]. 
\end{align}
  \end{theorem}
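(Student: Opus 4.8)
The plan is to relate the asymptotic variance of the IS estimator $\cv\vp$ under $K$ to that of $\vp$ under $L$ via a comparison of the corresponding spectral/Dirichlet-form quantities, using the variational (Kipnis--Varadhan) characterisation of asymptotic variance. Recall that for a $\mu$-reversible Harris ergodic kernel $K$ and $h\in L^2_0(\mu)$ one has
\begin{equation*}
\Var(K,h) + \Var_\mu(h) = 2\,\inner{h}{(1-K)^{-1}h}_\mu = 2\sup_{g\in L^2(\mu)}\big\{ 2\inner{h}{g}_\mu - \cE_K(g)\big\},
\end{equation*}
where the quadratic form $\cE_K$ is extended to the relevant closure; the left-hand side of \eqref{thm:peskun-equation} is exactly $2\inner{\cv\bvp}{(1-K)^{-1}\cv\bvp}_\mu$ (note $\mu(\cv\bvp)=\nu(\bvp)=0$, so $\cv\bvp\in L^2_0(\mu)$, using Assumption \ref{a:peskun}\eqref{a:peskun-two} for the $L^2$ membership), and similarly the bracket on the right is $2\inner{\bvp}{(1-L)^{-1}\bvp}_\nu$. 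So the theorem reduces to the two-sided bound
\begin{equation*}
\lc\,\inner{\bvp}{(1-L)^{-1}\bvp}_\nu \le \inner{\cv\bvp}{(1-K)^{-1}\cv\bvp}_\mu \le \uc\,\inner{\bvp}{(1-L)^{-1}\bvp}_\nu .
\end{equation*}

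The key observation tying the two Hilbert spaces together is the isometry-like identity $\inner{\cv f}{g}_\mu = \inner{f}{g}_\nu$ for $f\in L^1(\nu)$, $g\in L^\infty$, coming from $\cv=\der{\nu}{\mu}$ (Assumption \ref{a:is}); in particular $\norm{g}_\nu^2 = \inner{\cv g}{g}_\mu$. I would combine this with Assumption \ref{a:peskun}\eqref{a:peskun-two}, which says $\lc\|g\|_\nu^2 = \lc\,\mu(\cv g^2) \le \mu(\cv^2 g^2) = \|\cv g\|_\mu^2 \le \uc\,\mu(\cv g^2) = \uc\|g\|_\nu^2$, so that multiplication by $\cv$ maps $L^2(\nu)$ boundedly into $L^2(\mu)$ with the squared-norm pinched between $\lc$ and $\uc$. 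Then using the variational formula for the right-hand side, for any $g$,
\begin{equation*}
2\inner{\bvp}{g}_\nu - \cE_L(g) = 2\inner{\cv\bvp}{g}_\mu - \cE_L(g) \le 2\inner{\cv\bvp}{g}_\mu - \lc\,\cE_K(g)
\end{equation*}
by Assumption \ref{a:peskun}\eqref{a:peskun-one}; taking the supremum over $g$ and rescaling $g\mapsto g/\lc$ should produce an upper bound for $\inner{\bvp}{(1-L)^{-1}\bvp}_\nu$ in terms of a variational expression built from $\cE_K$ and the linear functional $g\mapsto\inner{\cv\bvp}{g}_\mu$ — but that variational expression is precisely $\lc^{-1}\inner{\cv\bvp}{(1-K)^{-1}\cv\bvp}_\mu$, giving $\inner{\bvp}{(1-L)^{-1}\bvp}_\nu \le \lc^{-1}\inner{\cv\bvp}{(1-K)^{-1}\cv\bvp}_\mu$, i.e. \eqref{thm:peskun-equation-reverse}. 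The bound \eqref{thm:peskun-equation} follows symmetrically using the $\uc$ side of Assumption \ref{a:peskun}\eqref{a:peskun-one} together with the inner-product identity, or equivalently by running the same argument with the roles of the two variational problems interchanged.

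The main obstacle I expect is \emph{not} the algebra above but the functional-analytic bookkeeping needed to make the variational formula legitimate: $(1-K)^{-1}$ is unbounded when $K$ has spectrum reaching $1$, so $\inner{h}{(1-K)^{-1}h}_\mu$ must be interpreted via the spectral measure as $\int (1-\lambda)^{-1}\,d\langle E_\lambda h,h\rangle \in[0,\infty]$, and the asymptotic variance identity $\Var(K,h)+\Var_\mu(h) = 2\inner{h}{(1-K)^{-1}h}_\mu$ for $\mu$-reversible chains (valid in $[0,\infty]$, cf. the Kipnis--Varadhan theory and \citep{tierney-note}) has to be invoked carefully, including the case where one or both sides are $+\infty$. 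I would handle this by working with the truncated resolvents $(1-(1-\delta)K)^{-1}$ (or the Cesàro-averaged forms), which are bounded, proving the inequalities there where all the suprema are over genuine elements of $L^2$, and then letting $\delta\downarrow 0$ by monotone convergence. A secondary point requiring a line of care is that Assumption \ref{a:peskun}\eqref{a:peskun-one} is stated for $g\in L^2(\mu)$ while the variational problem for $\cE_L$ naturally ranges over $g\in L^2(\nu)$; since $\nu\ll\mu$, $L^2(\mu)\subseteq L^2(\nu)$ is false in general, so I would check that it suffices to take the supremum over $g\in L^2(\mu)\cap L^2(\nu)$ (or over a common core such as bounded functions) on both sides, which is where the hypothesis that $\cv$ is pinched in $[\lc,\uc]$ is used again to guarantee that the relevant test functions live in both spaces simultaneously. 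The detailed Dirichlet-form manipulations are deferred to Appendix \ref{app:peskun}.
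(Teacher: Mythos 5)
Your proposal is correct and essentially reproduces the paper's route (Lemmas~\ref{lem:p}, \ref{lem:p-reverse}, \ref{lem:helper} in Appendix~\ref{app:peskun}): the Bellman variational characterisation~\eqref{eq:var-char}, the change-of-measure identity $\inner{\cv f}{g}_\mu=\inner{f}{g}_\nu$, the Dirichlet-form comparison, and the limit $\gl\uparrow 1$ over the truncated resolvents $(1-\gl K)^{-1}$. One refinement worth making explicit when you carry out the truncation: the comparison $\lc\,\cE_{\gl K}(g)\le\cE_{\gl L}(g)\le\uc\,\cE_{\gl K}(g)$ at level $\gl<1$ is \emph{not} a consequence of Assumption~\ref{a:peskun}\eqref{a:peskun-one} alone, since $\cE_{\gl K}(g)=\gl\cE_K(g)+(1-\gl)\mu(g^2)$ while $\cE_{\gl L}(g)=\gl\cE_L(g)+(1-\gl)\nu(g^2)$, so one also needs $\lc\,\mu(g^2)\le\nu(g^2)\le\uc\,\mu(g^2)$ --- this is the main place Assumption~\ref{a:peskun}\eqref{a:peskun-two} enters (the paper's step~\eqref{eq:d-bound}), not merely to match the domains $L^2(\mu)$ and $L^2(\nu)$ as you suggest, and indeed under $\cv\le\uc$ the inclusion $L^2(\mu)\subseteq L^2(\nu)$ holds outright, while $\lc>0$ gives the reverse; the paper also phrases the argument via the specific optimiser $u_\gl=(1-\gl K)^{-1}(\cv\vp)$ rather than an abstract sup comparison, since that structure is needed for the product-space variant in Theorem~\ref{thm:peskun-augmented}, but for the present theorem your direct version is equivalent.
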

\begin{remark} \label{thm:peskun-remark}
  Here, we recall the notation $\bar{\vp}\defeq \vp-\nu(\vp)$.
    Regarding Theorem \ref{thm:peskun}, whose proof is given in  Appendix \ref{app:peskun}:
    \begin{enumerate}[(i)]
    \item If $\cv=1$ constant, in which case $\mu=\nu$, it reduces to \citep[Lemma 32]{andrieu-lee-vihola}.  If also $(\lc,\uc)=(0,1)$, it is the covariance ordering \citep[Thm.~4.2]{mira-geyer-ordering}, which is a  Peskun  \cite{peskun,tierney-note} type criterion based on the  Dirichlet form \citep[see also][Proof of Lem.~3]{tierney-note}.
    \item The assumptions are the same as those of \citep[Lem. 13.22]{levin-peres-wilmer} about comparison of mixing times in the countable state space context.
    \item \eqref{thm:peskun-equation-reverse} holds even if we `forget' $\uc$, i.e.~set $\uc=\infty$ but also require $\cv \vp\in L^2(\mu)$. 
      Unless $\mathbf{X}$ is compact, \eqref{thm:peskun-equation-reverse} is usually redundant since we can only assume $\lc=0$.
\item At least in the examples we will consider, one can take $\uc\defeq \norm{\cv}_{L^\infty(\mu)}$ and $\lc\defeq \mu-\essinf \cv$ to satisfy Assumption \ref{a:peskun}(\ref{a:peskun-one}-\ref{a:peskun-two}) (see Remark \ref{lem:p-remark}\eqref{lem:p-remark-constant}).
    \end{enumerate}
  \end{remark}

  \subsection{Marginalisations and augmented importance sampling kernels}
  Let $\mathbf{X}=\mathbf{T}\times\mathbf{Y}$ be a joint space.  For a probability $\mu$ on $\mathbf{X}$, denote by $\mu^*(\ud \gt) = \mu(\ud \gt, \mathbf{Y})$ its marginal probability.  If $(\mu,\nu,\cv)$ on $\mathbf{X}$ satisfies Assumption \ref{a:is}, then $\nu^*\ll \mu^*$, and with $\cv^*(\gt)\defeq \der{\nu^*}{\mu^*}(\gt)$, the triplet $(\mu^*, \nu^*, \cv^*)$ satisfies Assumption \ref{a:is} on $\mathbf{T}$.  
  \begin{assumption}\label{a:peskun-marginal}
Assumption \ref{a:peskun}, with Assumption \ref{a:peskun}(\ref{a:peskun-one}--\ref{a:peskun-two}) replaced with
    \begin{enumerate}[(a)]
      \item\label{a:peskun-marginal-one}
        $\lc \,\cE_K(g) \le \cE_L(g) \le \uc \, \cE_K(g)$, for all $g\in L^2(\mu^*)$, and 
      \item\label{a:peskun-marginal-two}
        $\lc\,\le \cv^* \le \uc$, $\mu^*$-a.e.
    \end{enumerate}
  \end{assumption}

  We introduce the notion of an augmented Markov kernel, as in \cite{andrieu-vihola-order,vihola-helske-franks}.
  \begin{definition}\label{def:augmented}
    Let $\dot{\mu}$ be some probability on $\mathbf{T}$, let $\dot{K}$ be a $\dot{\mu}$-invariant Markov kernel on $\mathbf{T}$, and let $Q_{\gt}(\ud y)$ be a probability kernel from $\mathbf{T}$ to $\mathbf{Y}$.  The \emph{$Q$-augmentation of $\dot{K}$}, or the \emph{$Q$-augmented kernel} $K$, is a Markov kernel on $\mathbf{X}$, with transition $K$ and invariant measure $\mu$, given by
  \begin{equation}\label{eq:augmented}
    K_{\gt\, y} (\ud \gt', \ud y') = \dot{K}_{\gt}(\ud \gt') Q_{\gt'}(\ud y'),
    \quad\text{and}\quad
    \mu(\ud \gt,\ud y) = \dot{\mu}(\ud \gt) Q_{\gt}( \ud y).
  \end{equation}
  
  \end{definition}

  \begin{theorem}\label{thm:peskun-augmented}
    Suppose Assumption \ref{a:peskun-marginal} holds, and that $K$ is an augmented kernel as in Definition \ref{def:augmented}. Let $\vp\in L^2(\nu)$ with $\cv \vp\in L^2(\mu)$.  With $\mathscr{N}_K\defeq 0$ if $K$ is positive, and $\mathscr{N}_K\defeq 1$ if not, the following bound holds:
\begin{equation}\label{eq:peskun-augmented}
  \Var(K, \cv \vp) \le \uc \big[ \Var(L,\vp)+ \Var_\nu(\vp)\big]  + (1+2\mathscr{N}_K)\, \Var_\mu(\cv \bar{\vp})
\end{equation}
Moreover, if $\cv \vp$ only depends on $\gt\in \mathbf{T}$, then \eqref{thm:peskun-equation} holds.
\end{theorem}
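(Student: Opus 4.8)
The plan is to reduce the statement to the non-augmented ordering of Theorem~\ref{thm:peskun}, applied on the parameter space $\mathbf{T}$, by exploiting the $Q$-augmented structure of $K$ to pass between $K$ and its $\mathbf{T}$-marginal $\dot K$.

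The first step is the forgetting/marginalisation property of $Q$-augmented kernels, which I would take from Lemma~\ref{lem:augmented-kernel}\eqref{lem:augmented-kernel-forget}: along a stationary run of $K$ the $\gt$-coordinate evolves as a $\mu^*$-reversible Harris ergodic chain $\dot K$ and, conditionally on the $\gt$'s, the $y$-coordinates are independent draws from the $Q_\gt$'s, so the asymptotic variance of the $K$-based IS estimator splits into the asymptotic variance of $\dot K$ applied to the $Q$-average of the relevant function plus a nonnegative fibre (conditional-variance) term governed by $Q$. Writing $\cv^*=\der{\nu^*}{\mu^*}$ and $\vp^*(\gt)=\int\vp(\gt,y)\,\nu(\ud y\mid\gt)$, and using $\nu^*=\cv^*\mu^*$ together with $\mu(\ud\gt,\ud y)=\dot\mu(\ud\gt)Q_\gt(\ud y)$, one identifies the $Q$-average of $\cv\bvp$ with $\cv^*\overline{\vp^*}$, where $\overline{\vp^*}=\vp^*-\nu^*(\vp^*)=\vp^*-\nu(\vp)$; correspondingly $\Var(K,\cv\vp)$ equals $\Var(\dot K,\cv^*\vp^*)$ up to a fibre term bounded by a multiple of $\Var_\mu(\cv\bvp)$.

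Next I would apply Theorem~\ref{thm:peskun} on $\mathbf{T}$. The triple $(\mu^*,\nu^*,\cv^*)$ satisfies Assumption~\ref{a:is}; $\dot K$ is $\mu^*$-reversible and Harris ergodic; and the $\mathbf{T}$-marginalisation $\dot L$ of $L$ is $\nu^*$-reversible with $\cE_{\dot L}(g)=\cE_L(g)$ for every $g$ depending only on $\gt$ (and, modulo a routine check, Harris ergodic), so Assumption~\ref{a:peskun-marginal} for $(K,L)$ is precisely Assumption~\ref{a:peskun} for $(\dot K,\dot L)$ on $\mathbf{T}$; Theorem~\ref{thm:peskun} therefore gives $\Var(\dot K,\cv^*\vp^*)+\Var_{\mu^*}(\cv^*\overline{\vp^*})\le\uc\big[\Var(\dot L,\vp^*)+\Var_{\nu^*}(\vp^*)\big]$. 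I would then use that marginalising (Rao--Blackwellising) a reversible chain cannot increase the asymptotic variance, i.e. $\Var(\dot L,\vp^*)+\Var_{\nu^*}(\vp^*)\le\Var(L,\vp)+\Var_\nu(\vp)$, which follows from the variational (Dirichlet-form) characterisation of the asymptotic variance of reversible chains by restricting the supremum to functions of $\gt$ with $\nu^*$-mean zero, the variance term being handled by the law of total variance.

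Combining these estimates and carrying out the bookkeeping of the remaining (co)variance terms — again via the law of total variance, which relates $\Var_\mu(\cv\bvp)$, $\Var_{\mu^*}(\cv^*\overline{\vp^*})$ and the fibre variance — yields \eqref{eq:peskun-augmented}. For the final assertion, if $\cv\vp$ depends only on $\gt$ the fibre term vanishes, $\Var(K,\cv\vp)$ reduces to $\Var(\dot K,\cv\vp)$, and Theorem~\ref{thm:peskun} on $\mathbf{T}$ delivers \eqref{thm:peskun-equation} with no slack. I expect the main obstacle to be the non-positive case: the fibre term is controlled by $\Var_\mu(\cv\bvp)$ itself only when $K$ (equivalently $\dot K$) is positive, and otherwise the argument must be routed through a positive surrogate of $K$ (e.g. the lazy kernel $\tfrac12(I+K)$), which is what produces the factor $(1+2\mathscr{N}_K)$; beyond that, the remaining work is just the careful accounting of the several variance terms generated by the two uses of the law of total variance.
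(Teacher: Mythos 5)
Your proposal is correct in essence but takes a genuinely different route from the paper's proof, and it is worth spelling out the comparison because your route is in fact tighter. The paper's proof is terse — ``Follows by Lemma~\ref{lem:p} and Lemma~\ref{lem:helper}\eqref{lem:helper-two}'' — and runs the variational Dirichlet-form argument \emph{on the joint space}: the key observation (Lemma~\ref{lem:helper}\eqref{lem:helper-two}, which rests on the forgetting property Lemma~\ref{lem:augmented-kernel}\eqref{lem:augmented-kernel-forget}) is that the shifted resolvent $\check{u}_\gl = (1-\gl K)^{-1}(\cv\vp) - \cv\vp$ depends only on $\gt$, so it is an admissible test function in the variational characterisation of $\Var(\gl L,\vp)$; the price for replacing the optimiser $u_\gl$ by $\check{u}_\gl$ is exactly the factor $(1+2\mathscr{N}_K)$, via the spectral bound $\cE_{\gl K}(\cv\vp)\le(1+\mathscr{N}_K)\|\cv\vp\|_\mu^2$. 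You instead decompose the asymptotic variance of the augmented chain — $\Var(K,h)=\Var(\dot K, Qh) + [\Var_\mu(h) - \Var_{\mu^*}(Qh)]$, which follows from $\langle\bar h, K^n\bar h\rangle_\mu = \langle\overline{Qh},\dot K^n\overline{Qh}\rangle_{\mu^*}$ for $n\ge1$ — apply the non-augmented ordering on $\mathbf{T}$, and Rao--Blackwellise $L$. That route is sound and actually yields the \emph{sharper} constant: carrying the bookkeeping through gives
$$\Var(K,\cv\bvp)\le\uc\big[\Var(L,\vp)+\Var_\nu(\vp)\big] + \Var_\mu(\cv\bvp) - 2\Var_{\mu^*}(\cv^*\overline{\vp^*}),$$
so the factor in front of $\Var_\mu(\cv\bvp)$ is $1$ regardless of positivity. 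Your anticipated obstacle — that in the non-positive case the fibre term must be controlled through a lazy surrogate of $K$ — is a false alarm: the fibre term $\Var_\mu(\cv\bvp)-\Var_{\mu^*}(\cv^*\overline{\vp^*})$ is bounded by $\Var_\mu(\cv\bvp)$ by the law of total variance alone, with no reference to the kernel or its spectrum. You underestimated your own argument.

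There is one genuine gap you flagged as ``routine'' that is not: Harris ergodicity of $\dot L$. The $\gt$-marginal of a general $\nu$-reversible chain is not itself Markov, and the $\nu(\cdot\mid\gt)$-averaged kernel $\dot L_\gt(\ud\gt') = \int\nu(\ud y\mid\gt)L_{\gt y}(\ud\gt',\mathbf{Y})$, while $\nu^*$-reversible, need not inherit $\nu^*$-irreducibility from $L$, so you cannot invoke Theorem~\ref{thm:peskun} for the pair $(\dot K,\dot L)$ verbatim (Assumption~\ref{a:peskun} requires Harris ergodicity of both). The fix is to avoid $\dot L$ altogether and argue directly with the variational formula, exactly as the paper's Lemma~\ref{lem:p} does: restrict the supremum in $\Var(\gl L,\vp)+\|\vp\|_\nu^2 = 2\sup_{g\in L^2(\nu)}\{2\langle\vp,g\rangle_\nu - \cE_{\gl L}(g)\}$ to $g\in L^2(\nu^*)$, note that for such $g$, $\langle\vp,g\rangle_\nu=\langle\vp^*,g\rangle_{\nu^*}$ and $\cE_{\gl L}(g)=\gl\cE_L(g)+(1-\gl)\|g\|_{\nu^*}^2$, use Assumption~\ref{a:peskun-marginal}\eqref{a:peskun-marginal-one} to pass to $\cE_K(g)=\cE_{\dot K}(g)$, and chain with the variational formula for $\Var(\gl\dot K,\cv^*\overline{\vp^*})$ (for which $\dot K$ is Harris ergodic by Lemma~\ref{lem:augmented-kernel}\eqref{lem:augmented-kernel-harris}). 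Finally, a cosmetic point shared with the paper: the quantity being bounded is $\Var(K,\cv\bvp)$, not $\Var(K,\cv\vp)$ as written in the theorem statement; this is already implicit in the paper's own route (Lemma~\ref{lem:p} is stated for $\vp\in L_0^2(\nu)$), and your centring at the outset is the right move.
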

  \begin{remark}
    \label{thm:peskun-augmented-remark}
    Regarding Theorem \ref{thm:peskun-augmented}, whose proof is given in Appendix \ref{app:peskun}:
\begin{enumerate}[(i)]
\item \label{thm:peskun-augmented-remark-auxiliary} The function $\vp$ (and $\cv \vp$) is allowed to depend on the auxiliary variable $y\in \mathbf{Y}$, unlike comparison results in the PM setting (see~\citep[Thm.~7]{andrieu-vihola-convergence} and \citep[Thm.~1]{sherlock-thiery-lee}) that are based on the convex order \citep[Thm.~10]{andrieu-vihola-order}.
  
\item\label{thm:peskun-augmented-remark-positive} $K$ is positive iff $\dot{K}$ is positive (Lemma \ref{lem:augmented-kernel} of Appendix \ref{app:misc}).  This is the case e.g.~if $\dot{K}$ is a random walk $\MH$ kernel with normal proposals \citep[Lem. 3.1]{baxendale}.  See \cite[Prop.~3]{doucet-pitt-deligiannidis-kohn} for more examples.
\item See also Remarks \ref{lem:p-remark}(\ref{lem:p-remark-constant}--\ref{lem:p-remark-subspace}) in Appendix \ref{app:peskun} about Assumption \ref{a:peskun-marginal}, which also hold for Assumption \ref{a:peskun} by trivialising the space $\mathbf{Y}$ (Lemma \ref{lem:helper}\eqref{lem:helper-one}).  
  \end{enumerate}
\end{remark}

\section{Jump chains and self-normalised importance sampling}\label{sec:jump}        \subsection{Jump chains}\label{sec:jump-jump}    
    We recall the notion of a jump chain \citep[see][]{douc-robert}, which is a Markov chain consisting of the accepted states of the original chain.
\begin{definition}\label{def:jump-chain}
  Let $(\gT_k)_{k\ge 1}$ be a Markov chain with transition $K_{\gt}(\ud \gt')$.  The \emph{jump chain} $(\tilde{\gT}_k, \tilde{N}_k)_{k\ge 1}$ with transition $\tilde{K}_{\gt n}(\ud \gt', \ud n')$ and holding times
  $$
  \tilde{N}_j \defeq \min\Big\{i\ge 1| \gT_{\tilde{N}^*_{j-1} +i +1}  \neq \gT_{\tilde{N}^*_{j-1} +1} \Big\}
  ,\qquad j\ge 1,
  $$
  is given by $\tilde{\gT}_1\defeq \gT_1$ and $\tilde{\gT}_{k+1} \defeq \gT_{\tilde{N}^*_k+1}$, where $\tilde{N}^*_k \defeq\sum_{j=1}^k \tilde{N}_j$, $\tilde{N}^*_0 \defeq 0$.
\end{definition}
For a Harris ergodic chain $K$, $(\tilde{N}_k)_{k\ge 1}$ are independent random variables given $(\tilde{\gT}_k)_{k\ge 1}$, where $\tilde{N}_k$ is geometrically distributed with parameter $\ga(\tilde{\gT}_k)$.  Here, $\ga(\gt)\defeq K(\gt, \mathbf{T}\backslash\{\gt\})$ is the acceptance probability function of $K$ at $\gt\in \mathbf{T}$.  See \citep[Prop.~24]{vihola-helske-franks} for this as well as for proof of the following result.
\begin{lemma}\label{lem:jump-chain-kernel}
Let $K$ be a $\mu$-invariant Markov chain with $\ga>0$.  The marginal chain $\tilde{K}$ of the jump chain of $K$ has transition $\tilde{K}(\gt,A) = K(\gt,A\backslash\{\gt\})/\ga(\gt)$, for all $A\in \mathcal{B}(\mathbf{T})$, and is $\tilde{\mu}$-invariant, where $\tilde{\mu}(\ud \gt)= \ga(\gt) \mu(\ud \gt)/\mu(\ga)$.  Moreover, $K$ is $\mu$-reversible iff $\tilde{K}$ is $\tilde{\mu}$-reversible, and $K$ is $\mu$-Harris ergodic iff $\tilde{K}$ is $\tilde{\mu}$-Harris ergodic.
\end{lemma}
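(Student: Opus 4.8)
The plan is to establish the four assertions in turn, each resting on the single observation that the jump chain records exactly the distinct states visited by $(\gT_k)$, with the self-loops excised. Fix $\gt$ and condition on $\gT_m=\gt$. By the Markov property, at each subsequent step the chain leaves $\gt$ with probability $\ga(\gt)=K(\gt,\mathbf{T}\bs\{\gt\})$ and otherwise returns to $\gt$ (note $K(\gt,\{\gt\})=1-\ga(\gt)$ exactly, since $K(\gt,\mathbf{T})=1$); as $\ga(\gt)>0$, these attempts are i.i.d.\ with an a.s.\ finite waiting time. Hence the holding time $\tilde N$ is geometric with parameter $\ga(\gt)$, and conditionally on the chain leaving $\gt$ its new position lies in $A$ with probability $K(\gt,A\bs\{\gt\})/\ga(\gt)$, independently of $\tilde N$. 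An induction on $k$ via the strong Markov property then shows that $(\tilde\gT_k)$ is a time-homogeneous Markov chain with transition $\tilde K(\gt,A)=K(\gt,A\bs\{\gt\})/\ga(\gt)$, which is a genuine Markov kernel since $\tilde K(\gt,\mathbf{T})=\ga(\gt)/\ga(\gt)=1$; this also produces the conditionally geometric description of $(\tilde N_k)$ quoted just before the lemma.

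For $\tilde\mu$-invariance, I would write $\int\tilde\mu(\ud\gt)\tilde K(\gt,A)=\mu(\ga)^{-1}\int\mu(\ud\gt)K(\gt,A\bs\{\gt\})$ and substitute $K(\gt,A\bs\{\gt\})=K(\gt,A)-(1-\ga(\gt))\charfun_A(\gt)$; invoking $\mu K=\mu$ collapses the right side to $\mu(\ga)^{-1}\int_A\ga\,\ud\mu=\tilde\mu(A)$, and the same identity with $A=\mathbf{T}$ shows $\tilde\mu$ is a probability (using $\mu(\ga)>0$, which holds as $\ga>0$). For the reversibility equivalence I would compare the bimeasures $\mu(\ud\gt)K(\gt,\ud\gt')$ and $\tilde\mu(\ud\gt)\tilde K(\gt,\ud\gt')$ on $\mathbf{T}^2$ via test functions: for bounded measurable $f$,
\begin{multline*}
\int\!\!\int f(\gt,\gt')\,\mu(\ud\gt)K(\gt,\ud\gt')\\
=\int(1-\ga(\gt))\,f(\gt,\gt)\,\mu(\ud\gt)+\mu(\ga)\int\!\!\int f(\gt,\gt')\,\tilde\mu(\ud\gt)\tilde K(\gt,\ud\gt').
\end{multline*}
The first term on the right is unchanged under $f(\gt,\gt')\mapsto f(\gt',\gt)$, so, since $\mu(\ga)>0$, the left-hand bimeasure equals its transpose (i.e.\ $K$ is $\mu$-reversible) if and only if the second term does (i.e.\ $\tilde K$ is $\tilde\mu$-reversible). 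This avoids any measurability issue about the diagonal, as only the values of $f$ on the diagonal enter.

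For the Harris ergodicity equivalence, $\tilde\mu$-invariance is already in hand, and since $0<\ga\le1$ the measures $\tilde\mu$ and $\mu$ are mutually absolutely continuous, hence share null sets, so $\psi$-irreducibility with respect to one is equivalent to $\psi$-irreducibility with respect to the other. It remains to exploit the path-level correspondence: because $\ga>0$, almost surely every holding time $\tilde N_k$ is finite (so the jump chain is well-defined for all $k$) and the set of states visited by $(\gT_n)_{n\ge1}$ coincides with that visited by $(\tilde\gT_k)_{k\ge1}$. Quantitatively, for any $A$ with $\mu(A)>0$ the occupation times satisfy $\sum_{n\ge1}\charfun_A(\gT_n)=\sum_{k\,:\,\tilde\gT_k\in A}\tilde N_k$, a sum of a.s.-finite terms each at least $1$; hence $\{\sum_{n\ge1}\charfun_A(\gT_n)=\infty\}=\{\sum_{k\ge1}\charfun_A(\tilde\gT_k)=\infty\}$ almost surely, and likewise reachability of $A$ from a given starting point is the same event for both chains. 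Thus $\psi$-irreducibility and Harris recurrence transfer in both directions, and the equivalence follows from Definition \ref{def:harris-ergodic}.

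I expect the main obstacles to lie in the last two steps rather than the first two. In the reversibility argument one must resist discarding the diagonal of $\mathbf{T}^2$ (which need not be null for $\mu(\ud\gt)K(\gt,\ud\gt')$) and instead handle its contribution as above. In the Harris part one must turn the informal ``same visited states'' statement into the precise occupation-time identity and the corresponding reachability statement — which is exactly where the standing hypothesis $\ga>0$ and the conditionally geometric structure of $(\tilde N_k)$ from the first step are essential; without $\ga>0$ the jump chain need not even be well-defined.
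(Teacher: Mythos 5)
Your proof is correct and self-contained. Note that the paper itself gives no proof of this lemma, instead deferring to \citep[Prop.~27]{vihola-helske-franks}, so there is no ``in-paper'' argument to compare against; what you have supplied is a complete elementary derivation, and it handles the two genuinely subtle points with appropriate care. For the transition kernel: conditioning on $\gT_m=\gt$ and noting $K(\gt,\{\gt\})=1-\ga(\gt)$ correctly yields i.i.d.\ leave/stay trials, with the geometric holding time independent of the exit location $K(\gt,\uarg\bs\{\gt\})/\ga(\gt)$, and the strong Markov property propagates this to a time-homogeneous kernel $\tilde K$. The invariance computation is a clean substitution of $K(\gt,A\bs\{\gt\})=K(\gt,A)-(1-\ga(\gt))\charfun_A(\gt)$ followed by $\mu K=\mu$. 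Your reversibility argument is the right one: decomposing $\mu(\ud\gt)K(\gt,\ud\gt')$ into the diagonal piece $(1-\ga(\gt))f(\gt,\gt)\mu(\ud\gt)$ plus $\mu(\ga)\,\tilde\mu(\ud\gt)\tilde K(\gt,\ud\gt')$ and observing the diagonal term is symmetric is exactly the way to avoid the common mistake of discarding a possibly non-null diagonal. Finally, the Harris-ergodicity transfer rests correctly on two ingredients you make explicit: mutual absolute continuity of $\mu$ and $\tilde\mu$ (since $0<\ga\le 1$, so the relevant $\psi$-positive sets coincide), and the a.s.\ occupation-time identity $\sum_{n\ge 1}\charfun_A(\gT_n)=\sum_{k\ge1}\tilde N_k\charfun_A(\tilde\gT_k)$, which with $1\le\tilde N_k<\infty$ a.s.\ (guaranteed by $\ga>0$) makes the recurrence and reachability events for the two chains coincide pathwise. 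No gaps.
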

We note that $(\tilde{\gT}_k,\tilde{N}_k)_{k\ge 1}$ has as its transition the $Q^{(N)}$-augmentation of $\tilde{K}$ (Definition \ref{def:augmented}), where $\tilde{K}$ is as in Lemma \ref{lem:jump-chain-kernel} and
$Q_{\gt}^{(N)}(\cdot) \sim \mathrm{Geo}(\ga(\gt))$ \citep[][]{doucet-pitt-deligiannidis-kohn}.

Different estimators can sometimes be used in place of $(\tilde{N}_k)$, which can lead to lower asymptotic variance of the related MCMC than when not using the jump chain, or when using the jump chain with standard $(\tilde{N}_k)$ \cite{douc-robert}. 
\subsection{Self-normalised importance sampling}\label{sec:jump-snis}
Jump chains can be naturally used with IS estimators, and can lead to improved computational and statistical efficiency \citep[see][]{vihola-helske-franks}.  To avoid redundancy, we shall adhere to the following convention: when we write $(\mathsf{\gT}_k, \mathsf{N}_k, \mathsf{a}, \upmu)$, it shall stand simultaneously for
$(\tilde{\gT}_k, \tilde{N}_k, \ga,\tilde{\mu})$, corresponding to an IS jump chain (denoted `ISJ'), and for $(\gT_k, 1,1,\mu)$, corresponding to a non-jump IS chain (denoted `IS0').  

Suppose $(\mu, \nu, \cv)$ satisfies Assumption \ref{a:is} and that $(\gT_k)_{k\ge 1}$ is $\mu$-Harris ergodic.  Often one can not evaluate $\cv(\gt)$.  However, one can often evaluate an unnormalised version $\cvu(\gt)= c_\xi\cdot \cv(\gt)$, with $c_{\xi}>0$ a (unknown) constant.  In this case, for $\vp \in L^1(\nu)$, one can use the following SNIS estimator, 
\begin{equation}\label{eq:snis-estimator}
E_n^{SNIS}(\vp) \defeq \fr{ \sum_{k=1}^n \mathsf{N}_k \cvu(\mathsf{\gT}_k) \vp(\mathsf{\gT}_k)}{\sum_{k=1}^n \mathsf{N}_k\cvu(\mathsf{\gT}_k)}
=
 \fr{ \fr{1}{n}\sum_{k=1}^n \mathsf{N}_k \cvu(\mathsf{\gT}_k) \vp(\mathsf{\gT}_k)}{\fr{1}{n} \sum_{k=1}^n \mathsf{N}_k\cvu(\gT_k)}.
\end{equation}
By Harris ergodicity, the SNIS estimator is a consistent estimator for $\nu(\vp)$,
$$
E_n^{SNIS}(\vp)
\xrightarrow[\text{a.s.}]{n\to\infty}
 \fr{ \upmu( \E[\mathsf{N}_k |\mathsf{\gT}_k] \cvu \vp)}{\upmu( \E[\mathsf{N}_k| \mathsf{\gT}_k] \cvu)}
=
\fr{ \upmu( \cvu \vp/ \mathsf{a})}{\upmu( \cvu/ \mathsf{a})}
=
\nu(\vp).
$$

Next we consider a framework on an extended space, from which a Peskun type ordering for SNIS will trivially follow (Remark \ref{thm:peskun-is-remark}\eqref{thm:peskun-is-remark-snis} of Theorem \ref{thm:peskun-is}).  

\section{Unbiased estimators and exact approximation schemes}\label{sec:schemes}
In many settings, one relies on unbiased estimators in the MCMC \citep[see][]{andrieu-roberts}.  We now describe a framework of unbiased estimators which we use and which is suitably general \citep[see][]{vihola-helske-franks,andrieu-doucet-holenstein}.  We then describe direct and IS MCMC schemes to calculate $\nu(f)$, and give a general comparison result of their asymptotic variances.

\subsection{Framework}
Recall from Section \ref{sec:intro} that our goal is calculation of expectations
$
\nu(f) = \int f(\gt, z) \nu(\ud \gt, \ud z),
$
with respect to the joint probability $\nu$ on $\mathbf{T}\times \mathbf{Z}$, as well as with respect to its marginal probability $\dot{\nu}(\ud \theta) = \nu(\ud \theta, \mathbf{Z})$.  For our unbiased estimators, define the spaces 
\begin{align*}
  \mathbf{U} &\defeq \big\{ (\ell, \eta^{(1:\ell)}): \ell \in \mathbb{N}, \eta^{(i)}\in [0,\infty), \text{ for } i=1,\ldots, \ell\big\}\\
\mathbf{V}
&\defeq
\big\{ (m,z^{(1:m)}, \zeta^{(1:m)}) : m\in \mathbb{N},\text{ and } z^{(i)}\in \mathbf{Z}, \zeta^{(i)}\in [0,\infty)\text{ for } i=1,\dots m
  \big\}.
\end{align*}
Let $Q_\theta^{(U)}(\ud u)$ be a probability on $\mathbf{U}$ for each $\theta\in\mathbf{T}$, and $Q_{\theta u }^{(V)}(\ud v)$ a probability on $\mathbf{V}$ for each $(\theta,u)\in \mathbf{T}\times\mathbf{U}$.  
Given $\Theta_k\in\mathbf{T}$, $(\ell_k, \eta_k^{(\ell_k)})\in\mathbf{U}$, $(M_k, Z_k^{(1:m)}, \zeta_k^{(1:m)})\in \mathbf{V}$, and a function $f$ on $\mathbf{T}\times\mathbf{V}$, we define formally
\begin{equation}\label{filtering}
\zeta_k(f)\defeq \frac{1}{M_k} \sum_{i=1}^{M_k} \zeta_k^{(i)} f(\Theta_k, Z_k^{(i)}),
\qquad
\hat{\zeta}_k(f) = \frac{\zeta_k(f)}{\zeta_k(1)},
\qquad
\xi_k \defeq \frac{\zeta_k(f)}{\eta_k(1)},
\end{equation}
where $\eta_k(1)\defeq \frac{1}{\ell_k}\sum_{i=1}^{\ell_k} \eta_k^{(i)}$. 
Let $\mathcal{L}^1(\nu)$ denote the set of functions $f$ on $\mathbf{T}\times \mathbf{Z}$ such that there exists a constant $c_\zeta>0$ such that for all $\theta\in\mathbf{T}$ and $f\in \mathcal{L}^1(\nu)$,
    $$
\int Q_\theta^{(U)}(\ud u) Q_{\theta u }^{(V)}(\ud v) \zeta(g) = c_\zeta \dot{\nu}(\theta) \int g(\theta, z) \nu(\ud z| \theta) 
$$
for $g\in \{f, |f|\}$, where $\nu(\ud z| \theta)$ denotes a regular conditional probability of $\nu$ given $\theta$.  Also, define $\mathcal{L}^2(\nu) = \{f\in \mathcal{L}^1(\nu): f^2\in \mathcal{L}^1(\nu)\}$.  

\begin{assumption}\label{a:pm-kernels}
The following hold:
  \begin{enumerate}[(i)]
  \item The constant function $1\in \mathcal{L}^1(\nu)$.
\item \label{a:pm-kernels-support} For all $\Theta_k\in \mathbf{T}$, the variables $U_k=(\ell_k, \eta_k^{(1:\ell_k)})\sim Q_{\Theta_k}^{(U)}(\cdot)$ and $V_k= (M_k, Z_k^{(1:M_k)}, \zeta_k^{(1:M_k)})\sim Q_{\Theta_k U_k}^{(V)}(\cdot)$ satisfy
  $$
\eta_k(1) =0 \implies \zeta_k(1) =0.
  $$
    \end{enumerate}

  \end{assumption}

\begin{remark} Regarding Assumption \ref{a:pm-kernels} and the above definitions:
  \label{a:pm-kernels-remark}
\begin{enumerate}[(i)]
\item  If $f \in L^1(\nu)$ satisfies $f(\gt,\cdot) = f(\gt)$, then $f\in \mathcal{L}^1(\nu)$.  In many settings, $\mathcal{L}^1(\nu)$ may be much larger, or all of $L^1(\nu)$ \citep[see][Cor.~28]{vihola-helske-franks}.
\item\label{a:pm-kernels-remark-support} Support condition \eqref{a:pm-kernels-support} holds quite generally, e.g.~if $\eta(1)>0$.  In a setting where, given $\gt$, $\eta(1)=\pr(\theta) \eta'(1)$ and $\eta'(1)$ is an unbiased estimator for an approximate likelihood $L^{(U)}(\gt)$, this can be achieved by inflating the likelihood $L^{(U)}(\theta)$ and $\eta'(1)$ uniformly by a constant $\epsilon>0$: $L^{(U)}(\gt)\mapsto L^{(U)}(\gt) + \epsilon$ and $\eta'(1)\mapsto \eta'(1) + \epsilon$ for all $\theta$ \citep[see][Prop.~19 and Rem.~20]{vihola-helske-franks}. 
\end{enumerate}
\end{remark}

\subsection{Pseudo-marginal type schemes and importance sampling schemes}
Define the probability
$
\pi(\ud \theta, \ud u, \ud v)
\defeq
c_\zeta^{-1} \ud \theta Q_\theta^{(U)}(\ud u) Q_{\theta u}^{(V)} (\ud v) \zeta(1).
$
The following concerns a PM/DA type scheme \citep[][]{andrieu-roberts,christen-fox}.
\begin{proposition}\label{prop:pm-scheme}
Suppose a Markov chain $(\gT_k, U_k, V_k)_{k\ge 1}$ is $\pi$-reversible Harris ergodic, where Assumption \ref{a:pm-kernels} holds. 
Then, for all $f\in \mathcal{L}^1(\nu)$,
  \begin{equation}\label{pm-estimator}
  E_n^{PM}(f) \defeq \fr{1}{n} \sum_{k=1}^n \hat{\zeta}_k(f)
\xrightarrow[\text{a.s.}]{n\to\infty}
  \nu(f).
  \end{equation}
      \end{proposition}
\begin{proof}
Follows by Harris ergodicity, as $\pi(\hat{\zeta}(f))=\nu(f)$, $f\in \mathcal{L}^1(\nu)$.  
  \end{proof}

Define the probability
$
\mu(\ud \theta, \ud u) \defeq
c_\eta^{-1} \ud \theta Q_{\theta}^{(U)}(\ud u) \eta(1)
$
where $c_\eta>0$ is a normalising constant.  Set $\dot{\mu}(\ud \theta) = \mu(\ud \theta, \mathbf{U})$.  Note that under Assumption \ref{a:pm-kernels}, we have $\dot{\mu}(\theta)=0$ implies $\dot{\nu}(\theta)=0$.  

Consider now an IS scheme (Algorithm \ref{alg:is-scheme}) as in \citep[][]{vihola-helske-franks}.  Compared to \cite{vihola-helske-franks}, we additionally assume $\mu$-reversibility of the base chain and nonnegativity of the estimators $\zeta^{(i)}\ge 0$.  This is done to facilitate comparison with the previous PM type scheme corresponding to PM and DA algorithms, which are $\pi$-reversible and require $\zeta^{(i)}\ge 0$, as $\zeta(1)$ is present in their acceptance ratio (see Section \ref{sec:application}).
\begin{algorithm}
  \caption{Importance sampling scheme.  Suppose Assumption \ref{a:pm-kernels} holds.} \label{alg:is-scheme}
  \begin{enumerate}[(Ph{a}se 1)]
    \item
      Let $(\gT_k, U_k)_{k\ge 1}$ be a $\mu$-reversible Harris ergodic Markov chain.
\item 
      For each $k\ge 1$, let $V_k$ be drawn as follows, for the IS0 and ISJ cases:
      \begin{enumerate}
      \item[(IS0)] $V_k\sim Q^{(V)}_{\gT_k U_k}(\cdot)$.  For $f\in \mathcal{L}^1(\nu)$, we define
        \begin{equation}\label{alg:is-scheme-iso-equation}
        \mathbf{m}_f(\gt, u)
        \defeq
        \E[\xi_k(f) | \gT_k=\gt, U_k=u].
        \end{equation}
      \item[(ISJ)]  Form a jump chain $(\tilde{\gT}_k,\tilde{U}_k,\tilde{N}_k)_{k\ge 1}$, and draw $V_k$ from some kernel $V_k\sim Q^{(V|N)}_{\tilde{\gT}_k \tilde{U}_k \tilde{N}_k}(\cdot)$ from $\mathbf{T}\times \mathbf{U}\times\mathbb{N}$ to $\mathbf{V}$ such that
        $$
\E[ \xi_k(f)| \tilde{\gT}_k=\gt, \tilde{U}_k=u, \tilde{N}_k=n] = \mathbf{m}_f(\gt,u)
        $$
for all $n\in\mathbb{N}$ and $f\in \mathcal{L}^1(\nu)$.         
        \end{enumerate}      
  \end{enumerate}
  \end{algorithm}
    If Assumption \ref{a:pm-kernels} (PM kernels) holds, then for all $f\in \mathcal{L}^1(\nu)$,
    $$
    \mu( \mathbf{m}_f )
    =
    \frac{1}{c_\eta} \int \ud \gt Q^{(U)}_\gt(\ud u) \eta(1) Q^{(V)}_{\gt u}(\ud v) \xi(f)
= c_\xi \nu(f)    
    $$
where $c_\xi\defeq c_\zeta/c_\eta$, and $\mathbf{m}_f$ is defined in \eqref{alg:is-scheme-iso-equation}.  This motivates the following consistency result, an instance of \citep[Thm.~3]{vihola-helske-franks} for example for the $\mathsf{N}_k=1$ case (IS0) and \citep[Thm.~13]{vihola-helske-franks} for the $\mathsf{N}_k=\tilde{N}_k$ case (ISJ).  

\begin{proposition}\label{prop:is-consistency}
  Under Algorithm \ref{alg:is-scheme}, for all $f\in \mathcal{L}^1(\nu)$,
  \begin{equation}\label{is-estimator}
  E_n^{\mathrm{IS}}(f)
  \defeq
  \fr{\sum_{k=1}^n \mathsf{N}_k \xi_k(f)}{\sum_{k=1}^n \mathsf{N}_k \xi_k(1)}
  \xrightarrow[\text{a.s.}]{n\to\infty}
   \nu(f).
  \end{equation}
\end{proposition}
\begin{remark}
    In the ISJ case, permitting dependence on $\tilde{N}_k$ when drawing $V_k$ in Algorithm \ref{alg:is-scheme} allows for variance reduction of $\xi_k(f)$ and hence of the resultant estimator \eqref{is-estimator} (see Proposition \ref{prop:is-asvar}), by using larger $M_k$ when $\tilde{N}_k$ is large.  For example, $M_k$ could correspond to the number of independent samples drawn from an instrumental or to the number of particles used in a particle filter \cite{andrieu-doucet-holenstein}.
  \end{remark}

\subsection{A Peskun type ordering for importance sampling schemes}
Under Assumption \ref{a:is-clt} below, the IS estimator $E^{\IS}_n(f)$ \eqref{is-estimator} satisfies a CLT
\begin{equation}\label{is-clt}
  \sqrt{n}[E^{\IS}_n(f) - \nu(f)]
  \xrightarrow{n\ra\infty}
  \mathcal{N}\big(0, \mathbb{V}^{\IS}_f\big),
\qquad\text{in distribution}.
\end{equation}
See \cite{vihola-helske-franks} or Proposition \ref{prop:is-asvar} of Appendix \ref{app:peskun}, with a formula for $\mathbb{V}^{\IS}_f$.  In analogy with Definition \ref{def:asvar} and \eqref{eq:kipnis-varadhan}, we refer to $\mathbb{V}^{\IS}_f$ as the \emph{IS asymptotic variance}.  The following assumption is sufficient for $\mathbb{V}_f^{\IS} <\infty$.    
\begin{assumption}[Importance sampling CLT]\label{a:is-clt}
  Suppose Algorithm \ref{alg:is-scheme} (IS scheme) and that $(\mathsf{\gT}_k, \mathsf{U}_k, \mathsf{N}_k)_{k\ge 1}$ is aperiodic.  
  Let $f\in \mathcal{L}^2(\nu)$ be a function such that
  $\Var(K, \mathbf{m}_f)<\infty$,
  where 
  $\mathbf{m}_f$ is defined in \eqref{alg:is-scheme-iso-equation},
  and $\mathsf{v}_{\bar{f}}$ by
  \begin{enumerate}
  \item[(IS0)] $v_{\bar{f}}(\gt,u) \defeq \Var\big(\xi_k(\bar{f})|\gT_k= \gt, U_k=u\big)$,
  \item[(ISJ)]
$\tilde{v}_{\bar{f}} (\gt,u) \defeq \E[ \tilde{N}_k^2 \Var\big(\xi_k(\bar{f} ) |\tilde{\gT}_k =\gt, \tilde{U}_k=u, \tilde{N}_k \big)| \tilde{\gT}_k =\gt, \tilde{U}_k=u],
    $
    \end{enumerate}
satisfies 
  $
  \mu\big( \mathsf{a}\mathsf{v}_{\bar{f}} \big) <\infty.
  $
\end{assumption}
Let us denote the kernel and measure of the IS0 corrected chain of Algorithm \ref{alg:is-scheme} by $(\bar{K}, \bar{\mu})$ on the space $\mathbf{X}=(\mathbf{T}\times\mathbf{U})\times\mathbf{V}$, where,
\begin{align}\label{is-kernel}
\bar{K}_{\gt u v}(\ud \gt', \ud u', \ud v')
&\defeq
K_{\gt u}(\ud \gt', \ud u') Q^{(V)}_{\gt' u'}(\ud v')
\nonumber\\
\bar{\mu}(\ud \gt, \ud u, \ud v)
&\defeq
\mu(\ud \gt, \ud u) Q^{(V)}_{\gt u}(\ud v).
\end{align}
Note that $\bar{K}=K^{(V)}$ is an augmented kernel (Definition \ref{def:augmented}).  Note too that by Slutsky's lemma like in \eqref{eq:slutsky}, we have $\mathbb{V}_f^{\IS}=\Var\big(\bar{K}, w\hat{\zeta}(f)\big)$, where $w=\ud \pi/ \ud \bar{\mu}$.  

With definitions as in Assumption \ref{a:is-clt}, we define a `difference' constant $\mathsf{D}_{\bar{f}}$, for the IS0 and ISJ cases, respectively, by $D_{\bar{f}}\defeq 0$ and 
$$\tilde{D}_{\bar{f}}
\defeq
\mu(a) c_\xi^{-2} \mu( a \tilde{v}_{\bar{f}} - v_{\bar{f}}).$$ 

\begin{theorem}\label{thm:peskun-is}
Suppose the assumptions of Algorithm \ref{alg:is-scheme} (IS scheme) hold, and $\mathbb{V}^{\IS}_f<\infty$.  
\begin{enumerate}[(i)]
\item \label{thm:peskun-is-one}
  If $(\bar{\mu}, \pi, \cv, \bar{K}, L, \lc,\uc)$ satisfies Assumption \ref{a:peskun} on $\mathbf{X}$, then
    \begin{align*}
      \mathbb{V}^{\IS}_{f}
          + \mu(\mathsf{a}) \Var_{\bar{\mu}}\big( \cv \hat{\zeta}(\bar{f}) \big)
    &\le \uc\,
    \mu(\mathsf{a})  \big\{\Var\big(L, \hat{\zeta}(f) \big) + \Var_\pi\big(\hat{\zeta}(f)\big)\big\}
    + \mathsf{D}_{\bar{f}}
    \\
    \mathbb{V}^{\IS}_{f}
    + \mu(\mathsf{a}) \Var_{\bar{\mu}}\big( \cv \hat{\zeta}(\bar{f}) \big)
    &\ge \lc\,
\mu(\mathsf{a})  \big\{\Var\big(L, \hat{\zeta}(f) \big) + \Var_\pi\big(\hat{\zeta}(f)\big)\big\}
    + \mathsf{D}_{\bar{f}}.
    \end{align*}
  \item \label{thm:peskun-is-two}
    If $(\bar{\mu}, \pi, \cv, \bar{K}, L, \lc, \uc)$ satisfies Assumption \ref{a:peskun-marginal} on $\mathbf{X}$, then
    \begin{align}
\nonumber
    \mathbb{V}^{\IS}_f \le
    &\,\uc\, \mu(\mathsf{a}) \big\{  \Var\big(L,\hat{\zeta}(f) \big) + \Var_\pi\big(\hat{\zeta}(f)\big)\big\}
    \\ \nonumber
    &+ (1+2\mathscr{N}_K)\mu(\mathsf{a}) \Var_{\bar{\mu}}\big( \cv \hat{\zeta}(\bar{f})\big)
    + \mathsf{D}_{\bar{f}}
    \end{align}
  where $\mathscr{N}_K\defeq 0$ if $K$ is positive, and $\mathscr{N}_K \defeq 1$ if not.  
\end{enumerate}
\end{theorem}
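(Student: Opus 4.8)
The strategy is to reduce everything to the augmented-kernel Peskun orderings already established, namely Theorem \ref{thm:peskun} for part \eqref{thm:peskun-is-one} and Theorem \ref{thm:peskun-augmented} for part \eqref{thm:peskun-is-two}, applied to the heptuple $(\bar\mu,\pi,\cv,\bar K,L,\lc,\uc)$ on $\mathbf X=(\mathbf T\times\mathbf U)\times\mathbf V$. Two things stand between the raw output of those theorems and the claimed inequalities: first, the left-hand sides are phrased in terms of $\Var(\bar K,\cv\vp)$ with $\vp=\hat\zeta(f)$, whereas the conclusion is phrased in terms of the IS asymptotic variance $\mathbb V^{\IS}_f$ of the \emph{self-normalised} estimator \eqref{is-estimator}; and second, in the ISJ case the relevant chain is the jump chain and the estimator carries the extra weights $\mathsf N_k$, which is exactly what the constant $\mathsf D_{\bar f}$ is designed to absorb. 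So the plan is: (a) record the explicit formula for $\mathbb V^{\IS}_f$ from Proposition \ref{prop:is-asvar}; (b) identify which piece of that formula equals $\mu(\mathsf a)\,\Var(\bar K,\cv\hat\zeta(\bar f))$ up to the correction terms; (c) feed $\vp=\hat\zeta(f)$ into Theorems \ref{thm:peskun}/\ref{thm:peskun-augmented}; (d) re-express the resulting $\Var_{\bar\mu}$, $\Var_\pi$, and $\Var(L,\cdot)$ terms, keeping track of the $\mu(\mathsf a)$ prefactors and the reindexing from the jump chain.

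In more detail: by Proposition \ref{prop:is-asvar} the IS asymptotic variance decomposes, for a function with $\nu$-mean zero, into a ``Markov chain part'' built from $\mathbf m_{\bar f}=\E[\xi_k(\bar f)\mid\gT_k,U_k]$ run through the base chain $K$, plus a ``conditional fluctuation part'' built from $\mathsf v_{\bar f}$ (the within-step variance of $\xi_k(\bar f)$), all normalised by $\mu(\mathsf a)/c_\xi^2$ or the analogous jump-chain constant. The key algebraic identity is that $\mathbf m_{\bar f}$, viewed as a function on $\mathbf T\times\mathbf U$, is $\bar\mu$-a.e.\ equal to $\bar K$'s image of $\cv\,\hat\zeta(\bar f)$ — more precisely, because $\bar K=K^{(V)}$ is the $Q^{(V)}$-augmentation of $K$ and $\cv\,\hat\zeta(\bar f)$ integrates against $Q^{(V)}_{\gt u}$ to $\mathbf m_{\bar f}(\gt,u)$ (using $\mu(\mathbf m_f)=c_\xi\nu(f)$ and $\hat\zeta(\bar f)=\hat\zeta(f)-\nu(f)\hat\zeta(1)$, noting $\hat\zeta(1)=1$). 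Consequently $\Var(\bar K,\cv\hat\zeta(\bar f))$ relates to $\Var(K,\mathbf m_{\bar f})$ plus a $\Var_{\bar\mu}(\cv\hat\zeta(\bar f))-\Var_\mu(\mathbf m_{\bar f})$ discrepancy, which is precisely the ``$\mathscr N_K$''-type slack appearing in Lemma \ref{lem:augmented-kernel} and in the statement of Theorem \ref{thm:peskun-augmented}. The conditional-fluctuation part of $\mathbb V^{\IS}_f$, on the other hand, is $\mu(\mathsf a)c_\xi^{-2}\mu(\mathsf a\mathsf v_{\bar f})$ in both cases — and in the IS0 case $\mathsf v_{\bar f}=v_{\bar f}$ while in the ISJ case $\mathsf v_{\bar f}=\tilde v_{\bar f}$, so the difference between these two fluctuation contributions is exactly $\tilde D_{\bar f}=\mu(a)c_\xi^{-2}\mu(a\tilde v_{\bar f}-v_{\bar f})$, with $D_{\bar f}=0$ in the IS0 case. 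This is why the same $\mathsf D_{\bar f}$ appears additively in every displayed inequality.

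With these identifications in place, part \eqref{thm:peskun-is-one} follows by applying \eqref{thm:peskun-equation} and \eqref{thm:peskun-equation-reverse} with $\vp=\hat\zeta(f)\in L^2(\nu)$ (membership in $\mathcal L^2_\pi(\nu)$ gives $\hat\zeta(f)\in L^2(\pi)$ and $\pi(\hat\zeta(f))=\nu(f)$, so $\hat\zeta(\bar f)=\overline{\hat\zeta(f)}$ under $\pi$; the hypothesis $\cv\vp\in L^2(\mu)$ needed for the reverse bound is the integrability built into Assumption \ref{a:is-clt} via $\mu(\mathsf a\mathsf v_{\bar f})<\infty$ together with $\Var(K,\mathbf m_f)<\infty$), then multiplying through by $\mu(\mathsf a)$ and rearranging using the two identities above; the term $-\mu(\mathsf a)\Var_{\bar\mu}(\cv\hat\zeta(\bar f))$ is what is left after moving $\Var_\mu(\cv\bar\vp)$ from the right-hand side of \eqref{thm:peskun-equation} to the left and re-expressing. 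Part \eqref{thm:peskun-is-two} is the same computation but invoking \eqref{eq:peskun-augmented} instead, which is legitimate precisely because $\bar K$ is an augmented kernel; the factor $(1+2\mathscr N_K)$ is inherited verbatim, and $\mathscr N_K$ is governed by positivity of $K$ (equivalently of $\dot K$) via Remark \ref{thm:peskun-augmented-remark}\eqref{thm:peskun-augmented-remark-positive}. The main obstacle I anticipate is step (b)–(c): verifying cleanly that $\mathbf m_{\bar f}$ is the $Q^{(V)}$-conditional expectation of $\cv\hat\zeta(\bar f)$ and bookkeeping the passage to the jump chain (where the invariant measure changes to $\tilde\mu$, the acceptance function $\mathsf a$ enters the normalisation, and the extra $\tilde N_k^2$ weight in $\tilde v_{\bar f}$ must be reconciled with the geometric holding times), since all the constants $\mu(\mathsf a)$, $c_\xi$, and the $\mathsf D_{\bar f}$ correction hinge on getting that reindexing exactly right; the Peskun inequalities themselves are then applied as black boxes.
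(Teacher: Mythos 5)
Your strategy coincides with the paper's: apply Theorems \ref{thm:peskun} and \ref{thm:peskun-augmented} to the heptuple $(\bar\mu,\pi,\cv,\bar K,L,\lc,\uc)$ with $\vp=\hat{\zeta}(f)$ (using $\hat{\zeta}(\bar{f})=\overline{\hat{\zeta}(f)}$ under $\pi$ since $\hat\zeta(1)=1$), then reduce the ISJ case to IS0 via $\mathbb{V}^{\mathrm{ISJ}}_f=\mu(\ga)\,\mathbb{V}^{\mathrm{IS0}}_f+\tilde{D}_{\bar{f}}$ from Proposition \ref{prop:is-asvar}. Where you deviate is in how you connect $\mathbb{V}^{\mathrm{IS0}}_f$ to $\Var(\bar K,\cv\,\hat{\zeta}(\bar{f}))$: you propose to start from the decomposed formula $\mu(\mathsf a)[\Var(K,\mathbf m_f)+\mu(\mathsf a\,\mathsf v_{\bar f})]/c_\xi^2$ and reconstitute the augmented-chain asymptotic variance by a Rao--Blackwell identity. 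The paper instead does this in one line, noting $\xi(f)=c_\xi\cv\,\hat{\zeta}(f)$ and applying Slutsky to the self-normalised ratio \eqref{is-estimator} to get $\mathbb{V}^{\mathrm{IS0}}_f=\Var(\bar K,\cv\,\hat{\zeta}(\bar f))$ directly; the decomposition of Proposition \ref{prop:is-asvar} is only needed afterwards, to obtain the ISJ--IS0 relationship.

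The more substantive issue is the sentence asserting that the discrepancy $\Var_{\bar\mu}(\cv\,\hat{\zeta}(\bar f))-\Var_\mu(\mathbf m_{\bar f})$ ``is precisely the $\mathscr N_K$-type slack appearing in Lemma \ref{lem:augmented-kernel} and in the statement of Theorem \ref{thm:peskun-augmented}.'' That conflates two unrelated things. The $(1+2\mathscr N_K)$ factor in Theorem \ref{thm:peskun-augmented} is a left-spectral-gap bound originating in the estimate $\cE_{\gl K}(\cv\vp)\le(1+\mathscr N_K)\norm{\cv\vp}_\mu^2$ inside the proof of Lemma \ref{lem:p}; it has nothing to do with the Rao--Blackwell variance gap that appears when one averages out the $v$-coordinate. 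Lemma \ref{lem:augmented-kernel} is about ergodicity/positivity of augmented kernels, not variance decompositions. Had you actually pursued the decomposition-first route, this conflation would have led the bookkeeping astray; fortunately the concrete steps you outline (feeding $\vp=\hat\zeta(f)$ into the theorems, rescaling by $\mu(\mathsf a)$, adding $\mathsf D_{\bar f}$) do not depend on it and are correct. Two smaller slips: in \eqref{thm:peskun-equation} the term $\Var_\mu(\cv\bar\vp)$ sits on the \emph{left} of the inequality and is moved to the right with a minus sign, not the other way round; and the ISJ--IS0 difference is not simply ``the difference between the two fluctuation contributions'' --- the whole IS0 asymptotic variance, Markov-chain part included, is rescaled by $\mu(\ga)$ before $\tilde D_{\bar f}$ is added.
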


\begin{remark}\label{thm:peskun-is-remark}
  Regarding Theorem \ref{thm:peskun-is}, whose proof is in Appendix \ref{app:peskun}:
  \begin{enumerate}[(i)]
  \item \label{thm:peskun-is-remark-definitions}
    Note that $0\le \mu(\mathsf{a}) \le 1$, with $\mathsf{a}$ as in Section \ref{sec:jump-snis}, and that $\cv= c_\xi\inv \xi(1)$ and $\cv^* = c_\xi\inv \mathbf{m}_1$, with $\mathbf{m}_f(\gt,u)$ defined in \eqref{alg:is-scheme-iso-equation}.
\item \label{thm:peskun-is-remark-snis}  As a trivialisation, when $\eta(\gT_k, U_k)\defeq \eta(1) = \dot{\mu}(\gT_k)$ a.s., $\mathbf{Z}=\{0\}$, and $\xi_k(f)= \cv_u(\gT_k)f(\gT_k)$ a.s., we obtain a Peskun type ordering for SNIS \eqref{eq:snis-estimator}.  Here, the simplifications are $\bar{K}\leftrightarrow K$, $\hat{\zeta}(\bar{f})\leftrightarrow \bar{f}$ and $\xi(\bar{f}) \leftrightarrow  c_{\xi} \cv \bar{f}$.  
  \end{enumerate}  
  \end{remark}

\section{Pseudo-Marginal and delayed-acceptance MCMC}\label{sec:application}
We define PM and DA type algorithms in the setting of the auxiliary variable framework of Section \ref{sec:schemes}, where PM could be the `particle marginal $\MH$' \cite{andrieu-doucet-holenstein}; a DA type variant of this algorithm has been implemented e.g.~in \cite{golightly-henderson-sherlock,quiroz-tran-villani-kohn,vihola-helske-franks}.  After defining the corresponding kernels, we then compare the asymptotic variances of PM/DA with IS (Theorem \ref{thm:comparison}).  
\subsection{Algorithms}\label{sec:application-algorithms}
Let $q_{\gt}(\ud \gt') = q_{\gt}(\gt') \ud \gt'$ be a proposal kernel on $\mathbf{T}$.  Assume the setup of Assumption \ref{a:pm-kernels} (recall that $\eta(1)\ge 0$ and $\zeta(1) \ge 0$). Whenever the denominators are not zero we define the following `acceptance ratios' for $x,x'\in \mathbf{X} \defeq \mathbf{T}\times \mathbf{U}\times\mathbf{V}$, where $x=(\gt, u, v)$,
\begin{equation}\label{acceptance-ratios}
\ar{U}{x}{x'}\defeq \fr{ \eta'(1) q_{\gt'}( \gt)}{\eta(1) q_{\gt}( \gt')},
  \quad\text{and}\quad
  \ar{V}{x}{x'}\defeq \fr{\zeta'(1) q_{\gt'}( \gt)}{\zeta(1) q_{\gt}( \gt')}.
\end{equation}

Consider Algorithm \ref{alg:pm-parent} (`PM parent,' following the terminology of \cite{sherlock-lee}), Algorithm \ref{alg:dapr} (`$\DApr$'), and Algorithm \ref{alg:da} (`$\DAg$'), with transition kernels given later and which are $\pi$-invariant \citep[see][]{andrieu-doucet-holenstein,andrieu-roberts,banterle-grazian-lee-robert}.    Under Assumption \ref{a:pm-kernels} (PM kernels) and the assumption that the resultant chains are $\pi$-Harris ergodic, by construction Algorithms (\ref{alg:pm-parent}-\ref{alg:da}) produce output as in Proposition \ref{prop:pm-scheme} (PM type scheme).  In PM parent (Algorithm \ref{alg:pm-parent}) and $\DAg$ (Algorithm \ref{alg:da}), the computationally expensive $V_k$-variable is drawn whenever $U_k$ is drawn.
This is the essential difference with $\DApr$ (Algorithm \ref{alg:dapr}).  The separation of sampling steps can substantially reduce computational cost in $\DApr$ \citep[see][]{christen-fox}, even though the asymptotic variance of $\DApr$ is more than PM parent in the case
$K$ is the approximate PM kernel \eqref{is-pm-kernel}
\citep[see][]{banterle-grazian-lee-robert}, and more than $\DAg$ in the case $K$ is a `$\mu$-proposal-rejection chain' (e.g.~PM); see Propositions \ref{prop:das} and \ref{prop:dag} below).

  \begin{algorithm}
    \caption{Pseudo-Marginal parent.  Suppose Assumption \ref{a:pm-kernels} (PM kernels) holds.  Initialise $X_0\in \mathbf{X}$ with $\zeta_0(1)>0$.  For $k=1,\ldots n$, do:}
\label{alg:pm-parent}
    \begin{enumerate}[(1)]
    \item\label{alg:pm-parent-one} Draw $\gT_{k}'\sim q_{\gT_{k-1}}(\cdot)$ and
$U_k' \sim Q_{\gT_k'}^{(U)}(\cdot)$ and $V_k'\sim Q_{\gT_k' U_k'}^{(V)}(\cdot)$.   With probability
      $
\min\big\{1, \ar{V}{X_{k-1}}{X_k'} \big\}
$
accept $X_k'$; otherwise, reject.  
  \end{enumerate}
  \end{algorithm}
\begin{algorithm}
  \caption{Delayed-acceptance ($\DApr$). Suppose Assumption \ref{a:pm-kernels} (PM kernels) holds, and $K$ is a $\mu$-proposal-rejection kernel of the form \eqref{pr-kernel}.  Initialise $X_0\in \mathbf{X}$ with $\zeta_0(1)>0$.  For $k=1,\ldots ,n$, do:} \label{alg:dapr}
  \begin{enumerate}[(1)]
  \item \label{alg:dapr-one}
    Draw $\gT_k' \sim q_{\gT_{k-1}}(\cdot).$
    Construct $U_k' \sim Q^{(U)}_{\gT_k'}(\cdot)$.  With probability $\alpha(\gT_{k-1}, U_{k-1}; \gT_k', U_k')$, proceed to step \eqref{alg:dapr-two}.  Otherwise, reject.
\item \label{alg:dapr-two}
  Construct $V_{k}'\sim Q^{(V)}_{\gT_k', U_k'}(\cdot)$. With probability $\min\big\{1, \xi_k'(1) /\xi_k(1)\big\}$, accept $(\gT_k', U_k', V_k')$; otherwise, reject. 
  \end{enumerate}
  \end{algorithm}
\begin{algorithm}
  \caption{Delayed-acceptance ($\DAg$). Suppose Assumption \ref{a:pm-kernels} (PM kernels) holds.  Initialise $X_0\in \mathbf{X}$ with $\zeta_0(1)>0$.  For $k=1,\ldots ,n$, do:} \label{alg:da}
  \begin{enumerate}[(1)]
  \item \label{alg:da-one}
    Draw $(\gT_k',U_k')\sim K_{\gT_{k-1},U_{k-1}}(\cdot)$.
\item \label{alg:da-two}
  Construct $V_{k}'\sim Q^{(V)}_{\gT_k', U_k'}(\cdot)$. With probability $\min\big\{1, \xi_k'(1) /\xi_k(1)\big\}$, accept $(\gT_k', U_k', V_k')$; otherwise, reject. 
  \end{enumerate}
  \end{algorithm}

\subsection{Kernels}\label{sec:application-kernels}
Let $K$ be the transition kernel of a $\mu$-reversible Harris ergodic IS0 base chain $(\gT_k,U_k)_{k\ge 1}$, with definitions as in Assumption \ref{a:pm-kernels} (PM kernels).  The \emph{$\DAg$ correction} of $K$ is the $\pi$-reversible kernel $K^{\DAg}$ corresponding to Algorithm \ref{alg:da}, given by, 
\begin{align}\label{dag-kernel}
K^{\DAg}_{\gt u  v}(\ud \gt', \ud u', \ud v')
&=
K_{\gt u}(\ud \gt', \ud u') Q^{(V)}_{\gt' u'}(\ud v')
\min\big\{ 1, \xi'(1)/ \xi(1) \big\}
\nonumber\\&\qquad\qquad
+[1-\ga_{\DAg}(\gt, u ,v)]\gd_{\gt u v}(\ud \gt',\ud u',\ud v'), 
\end{align}
where
$
\ga_{\DAg}(\gt,u,v) \defeq \int K_{\gt u}(\ud \gt', \ud u') Q^{(V)}_{\gt' u'}(\ud v')
\min\big\{ 1, \xi'(1)/ \xi(1) \big\}.
$

Let $K$ be `$\mu$-proposal-rejection kernel,' that is, a $\mu$-reversible kernel of the form 
\begin{equation}\label{pr-kernel}
K_{\theta u}(\ud \theta',\ud u')
=
q_\theta(\ud \theta') Q^{(U)}_{\theta'}(\ud u') \alpha(\theta, u; \theta', u')
+
r_K(\theta,u) \gd_{\theta u}(\ud \theta' \ud u')
\end{equation}
for some function $\alpha: (\mathbf{T}\times\mathbf{U})^2 \rightarrow [0,1]$ and 
$
r_K=
1- \int q_{\theta}(\ud \theta') Q^{(U)}_{\theta'}(\ud u')\alpha(\theta,u; \theta',u'). 
$
The  \emph{$\DApr$ correction} of $K$ is defined to be 
\begin{align}\nonumber
K^{\DApr}_{x} (\ud x')
= q_{\theta}(\ud \theta') &Q^{(U)}_{\theta'}(\ud u') \alpha(\theta,u; \theta',u')Q^{(V)}_{\theta' u'}(\ud v') \min\big\{1, \xi'(1)/\xi(1)\big\} \\ \label{dapr-kernel}
&+ [1 - \ga_{\DApr}(x)]\delta_{\theta u v}(\ud \theta',\ud u', \ud v'), 
\end{align}
where $\alpha_{\DApr}(x) = \int q_{\theta}(\ud \theta') Q^{(U)}_{\theta'}(\ud u') \alpha(\theta,u; \theta',u')Q^{(V)}_{\theta' u'}(\ud v') \min\big\{1, \xi'(1)/\xi(1)\big\}$, and $\mathbf{X}\defeq \mathbf{T}\times \mathbf{U}\times \mathbf{V}$, $x\in \mathbf{X}$, $x\defeq(\theta,u,v)$.

Decreasing the variability of $\xi'(1)=\zeta'(1)/\eta'(1)$ by coupling the $u'$ and $v'$ variables can lead to improved mixing of \eqref{dapr-kernel}, and is similar in idea to recently proposed `correlated PM' \cite{deligiannidis-doucet-pitt-kohn} and `MHAAR' \cite{andrieu-doucet-yildirim-chopin} chains.  The mere requirement of reversibility allows the kernel $K$ to be taken to be approximate versions of the two chains listed above, or an approximate DA or `multi-stage DA' \citep[][]{banterle-grazian-lee-robert}.  Regardless, the most straightforward choice for $K$ is the (approximate) PM kernel targeting $\mu$ with proposal $q$, given by,
\begin{align}\label{is-pm-kernel}
K_{\gt u}(\ud \gt', \ud u')
&=
q_{\gt}(\ud\gt')Q_{\gt'}^{(U)}(\ud u')\min\big\{1, \ar{U}{x}{x'}\big\}
\nonumber \\ &\qquad\qquad
+[1-\ga(\gt, u)]\gd_{\gt u}(\ud \gt', \ud u'),
\end{align}
where
$
\ga(\gt,u)
\defeq
\int q_{\gt}(\ud\gt')Q_{\gt'}^{(U)}(\ud u')\min\big\{1, \ar{U}{x}{x'}\big\}.
$

The asymptotic variance of $\DAg$ is never more than that of $\DApr$.
\begin{proposition}\label{prop:das}
  If $K$ is the $\mu$-proposal-rejection kernel \eqref{pr-kernel}, then:
  \begin{enumerate}[(i)]
  \item\label{prop:das-dag} $\Var(K^{\DAg}, g) \le \Var(K^{\DApr}, g)$ for all $g\in L^2(\pi)$.
  \item\label{prop:das-equal} If $V\sim Q_{\theta u}^{(V)}(\cdot)$ with $V=(M, Z^{(1:M)}, \zeta^{(1:M)})$ has the property that 
    \begin{equation}\label{eq:deterministic}
    \zeta(1) = \varphi(\theta,u)
    \end{equation}
is a deterministic function $\varphi$ of $\theta$ and $u$, then $K^\DApr= K^\DAg$.
    \end{enumerate}
\end{proposition}
However, for the reason discussed in Section \ref{sec:application-algorithms}, $\DApr$ is likely more computationally efficient than $\DAg$ in practice.  

We define the PM parent kernel $P$ of $K^{\DAg}$ to be given by
\begin{align}\label{pm-parent-kernel}
P_{\gt u v}(\ud \gt', \ud u', \ud v')
&=
q_{\gt}( \ud \gt') Q^{(U)}_{\gt'}(\ud u')Q^{(V)}_{\gt'u'}(\ud v')
\min\big\{1,\ar{V}{x}{x'}
    \big\}
\nonumber\\ &\qquad\qquad
    + [1-\ga_{\PMP}(\gt,v)]\gd_{\gt u v}(\ud \gt',\ud u',\ud v'),
\end{align}
where
$
\ga_{\PMP}(\gt,v)
\defeq
\int q_{\gt}( \ud \gt') Q^{(U)}_{\gt'}(\ud u')Q^{(V)}_{\gt'u'}(\ud v')
\min\big\{1,\ar{V}{x}{x'}\big\}.
$

We define a probability kernel from $\mathbf{T}$ to $\mathbf{V}$ by
\begin{equation}\label{q-hat}
\hat{Q}_{\gt}^{(V)}(\ud v)
\defeq
\int_{\mathbf{U}} Q_{\gt}^{(U)}(\ud u) Q^{(V)}_{\gt u}(\ud v)
\end{equation}
We then define the following $PM$ kernel with proposal $q$,  
\begin{align}\label{pm-kernel}
M_{\gt v}(\ud \gt', \ud v')
&=
q_{\gt}(\ud\gt') \hat{Q}_{\gt'}^{(V)}(\ud v')\min\big\{1, \ar{V}{x}{x'}\big\}
\nonumber \\ &\qquad\qquad
+[1-\ga_{\PM}(\gt, v)]\gd_{\gt v}(\ud \gt', \ud v'),
\end{align}
targeting $\hat{\pi}(\ud \theta, \ud v)\defeq \int_\mathbf{U} \pi(\ud \theta, \ud u, \ud v)$, where
$
\ga_{\PM}(\gt,v)
\defeq
\int q_{\gt}(\ud\gt')\hat{Q}_{\gt'}^{(V)}(\ud v')\min\big\{1, \ar{V}{x}{x'}\big\}.
$

When $U_k$ and $V_k$ are independent given $\gt$, i.e.
\begin{equation}\label{uncorrelated}
  Q^{(V)}_{\gt u}(\ud v)
  =
  Q^{(V)}_{\gt}(\ud v),
\end{equation}
then $M$ \eqref{pm-kernel} is the standard PM with proposal $q$, since,  
$$
\hat{Q}^{(V)}_{\gt}(\ud v) = Q_{\gt}^{(V)}(\ud v).
$$


\begin{proposition}\label{prop:dag}
  If $K$ is the approximate PM \eqref{is-pm-kernel}, and $L\in \{M,P\}$, then:
  \begin{enumerate}[(i)]
  \item\label{prop:dag-less} $\Var(L,\hat{\zeta}(f)) \le \Var(K^{\DApr}, \hat{\zeta}(f))$ for all $f\in L_\pi^2(\nu)$.
  \item\label{prop:dag-det} If \eqref{eq:deterministic} holds, then
    for all $f \in L_\pi^2(\nu)$,
  $$\Var(L, \hat{\zeta}(f)) \le \Var(K^\DAg,\hat{\zeta}(f)).$$ 
    \end{enumerate}
\end{proposition}

\subsection{Comparison with importance sampling correction}
Note that the following result only involves the weight, not the Dirichlet forms.

\begin{theorem}\label{thm:comparison}
  Suppose Assumption \ref{a:pm-kernels} (PM kernels) holds, and that one of the following conditions for pairs of kernels holds:
  \begin{enumerate}[(I)]
    \item\label{thm:comparison-dapr}
    $L= K^{\DApr}$ is $\DApr$ correction \eqref{dapr-kernel}, and $K$ is $\mu$-proposal-rejection \eqref{pr-kernel}, 
  \item\label{thm:comparison-da}
    $L= K^{\DAg}$ is $\DAg$ correction \eqref{dag-kernel}, and $K$ is $\mu$-reversible,
  \item\label{thm:comparison-pm-parent}
      $L = P$ is the PM parent \eqref{pm-parent-kernel}, and $K$ is the approx. PM \eqref{is-pm-kernel}, or
    \item\label{thm:comparison-pm}
      $L= M$ is the PM kernel \eqref{pm-kernel}, and $K$ is the approx. PM \eqref{is-pm-kernel}.
    \end{enumerate}
Assume $K$ and $L$ are Harris ergodic, and a function $f\in \mathcal{L}^2(\nu)$ is such that $\mathbb{V}_f^{\IS}<\infty$.  
  The following statements hold: 
\begin{enumerate}[(i)]
\item \label{thm:comparison-one}
The IS asymptotic variance \eqref{is-clt} satisfies, with $\lc \defeq \bar{\mu}$-$\essinf \cv$,
\begin{align*}
  \mathbb{V}^{\IS}_f
+ \mu(\mathsf{a}) \Var_{\bar{\mu}}\big( \cv \hat{\zeta}(\bar{f}) \big)
  &\le
    \mu(\mathsf{a}) \norm{w}_{L^\infty(\bar{\mu})} \big\{\Var\big(L, \hat{\zeta}(f) \big) + \Var_\pi\big( \hat{\zeta}(f)\big)\big\}
    + \mathsf{D}_{\bar{f}}
    \\
    \V{\IS}{f}
       + \mu(\mathsf{a}) \Var_{\bar{\mu}}\big( \cv \hat{\zeta}(\bar{f}) \big)
    &\ge 
    \mu(\mathsf{a})\;\cdot\, \lc\;\cdot\; \big\{\Var\big(L, \hat{\zeta}(f) \big) + \Var_\pi\big(\hat{\zeta}(f)\big)\big\}
+ \mathsf{D}_{\bar{f}}.
    \end{align*}
  \item \label{thm:comparison-two}
    With $\mathscr{N}_K\defeq 0$ if $K$ is positive and $\mathscr{N}_K \defeq 1$ if not, the following holds:
  \begin{align*}
    \mathbb{V}^{\IS}_f \le
    &\,\mu(\mathsf{a}) \norm{w^*}_{L^\infty(\mu)} \big\{  \Var\big(L,\hat{\zeta}(f) \big) + \Var_\pi\big(\hat{\zeta}(f)\big)\big\} 
\\
    &+ (1+2\mathscr{N}_K)\mu(\mathsf{a}) \Var_{\bar{\mu}}\big( \cv \hat{\zeta}(\bar{f})\big)
+ \mathsf{D}_{\bar{f}}.
\end{align*}
\end{enumerate}
\end{theorem}
See Remark \ref{thm:peskun-is-remark}\eqref{thm:peskun-is-remark-definitions} for $\cv$ and $\cv^*$.  See Appendix \ref{app:dirichlet} for the proof of Theorem \ref{thm:comparison}, which follows from Theorem \ref{thm:peskun-is}, after bounding the Dirichlet forms.

\section{Discussion and further stability considerations}\label{sec:discussion}
A necessary condition for a successful implementation of an IS or PM scheme is a simple support condition, Assumption \ref{a:pm-kernels}\eqref{a:pm-kernels-support}, that can often be easily ensured by Remark \ref{a:pm-kernels-remark}\eqref{a:pm-kernels-remark-support}.
On the other hand, Theorem \ref{thm:comparison} depends on a uniform bound on the marginal weight $\cv^* \propto \mathbf{m}_1$, with $\mathbf{m}_f(\gt,u)$ as in \eqref{alg:is-scheme-iso-equation}.  This bound is much weaker than a bound on $\cv$, and can often be ensured.   For example, assuming that $\eta(1) \mathbf{m}_1$ is bounded, one can often inflate $\eta(1)$ as in Remark \ref{a:pm-kernels-remark}\eqref{a:pm-kernels-remark-support} to obtain an uniform bound on $\cv^*$.  Other techniques may be applicable if a bounded $\cv^*$ is particularly desired, such as a combination of cutoff functions, approximations, or tempering \citep[see][]{owen-zhou,vihola-helske-franks}. 

When considering a PM/DA implementation, the issue of boundedness of the full weight $\cv\propto \zeta(1)/\eta(1)$ takes particular importance, more so than in the case with IS.  This is because PM and DA are more liable to be poorly mixing, while IS is less affected by noisy estimators.  Namely, if $\zeta(1)$ is not bounded, then PM parent and $K^{\DApr}$, with $K$ as in \eqref{is-pm-kernel}, are not geometrically ergodic (Proposition \ref{prop:not-geo}).

On the other hand, the IS chain may converge fast, even in the case of unbounded $\zeta(1)$.  For example, if $K$ is a random walk $\MH$ chain, then $K$ is geometrically ergodic essentially if $\mu$ has exponential or lighter tails and a certain contour regularity condition holds \cite{jarner-hansen,roberts-tweedie}, where we have said nothing about the exact level estimator $\zeta(1)$.  We then apply Lemma \ref{lem:augmented-kernel}\eqref{lem:augmented-kernel-geometric}, which says that whenever $K$ is geometrically ergodic then so is $\bar{K}$, to conclude that the IS chain is geometrically ergodic, even in the case of unbounded $\zeta(1)$.  This may be beneficial if adaptation is used \citep{andrieu-moulines-ergodicity,andrieu-thoms,roberts-rosenthal-coupling}.

Of course, high variability affects also the IS estimator, but we believe this noise to be a smaller issue in IS, as the noise is in the IS output estimator rather than in the acceptance ratio as in PM/DA.  This can make a significant difference in the evolution and ergodicity of the chains, as described above.


\section*{Acknowledgments}
Support has been provided for JF and MV from the Academy of Finland (grants 274740, 284513 and 312605), and for JF from The Alan Turing Institute.
JF thanks the organisers of the 2017 SMC course and workshop in Uppsala.  

\appendix

\section{Proofs for the Peskun type orderings}\label{app:peskun}

\subsection{Subprobability kernels}\label{app:peskun-subprobability}
Let $K$ be a $\mu$-reversible Markov kernel on $\mathbf{X}$.  For all $\gl \in (0,1]$, $\gl K$ is a \emph{subprobability kernel}: $\gl K(x,\mathbf{X}) \le 1$ for all $x\in\mathbf{X}$.  The \emph{Dirichlet form} $\cE_{\gl K}(f)$ of the subprobability kernel $\gl K$ is 
\begin{equation}\label{eq:dirichlet}
  \cE_{\gl K}(f) \defeq \inner{f}{(1-\gl K)f}_\mu
  =
 \gl \cE_{K}(f) +(1-\gl )\norm{f}_\mu^2,
\end{equation}
defined for $f\in L^2(\mu)$.  For $f\in L^2_0(\mu)$, if $(1-K)\inv f$ exists in $L^2(\mu)$, then by \eqref{def:asvar-eq}, $\Var(K, f) = 2 \inner{ f}{ (1-K)\inv f}_\mu - \mu(f^2)$ \citep[see][]{andrieu-vihola-order}.  Following \cite{andrieu-vihola-order,tierney-note}, we then (formally) extend Definition \ref{def:asvar} of the asymptotic variance to subprobability kernels: for $\gl \in (0,1)$, the operator $(1-\gl K)$ is always invertible, and we define
\begin{equation}\label{eq:asvar-inverse-sub}
\Var(\gl K, f) \defeq 2 \inner{ f}{ (1-\gl K)\inv f}_\mu - \mu(f^2).
\end{equation}
Moreover, \eqref{eq:dirichlet-reversible} and \eqref{eq:dirichlet} imply for $\gl\in (0,1]$ that $1-\gl K$ is a positive operator, i.e.~$\cE_{\gl K}(f)\ge 0$ for all $f\in L^2(\mu)$.  By a result attributed to Bellman \citep[Eq. 14]{bellman}, for positive self-adjoint operators, and used e.g.~in \cite{andrieu-scan,andrieu-vihola-order,caracciolo-pelissetto-sokal,mira-leisen-covariance,mira-geyer-ordering}, we have another asymptotic variance representation: for all $\gl\in (0,1)$ and $f\in L^2_0(\mu)$, 
\begin{equation}\label{eq:var-char}
\Var(\gl K, f) =  2 \sup_{g\in L^2(\mu)} \big\{ 2\inner{f}{g}_\mu - \cE_{\gl K}(g)\big\} -\mu(f^2).
\end{equation}
Here, the supremum is attained with $g\defeq (1-\gl K)\inv f$, in which case \eqref{eq:var-char} simplifies to \eqref{eq:asvar-inverse-sub}.  For $\gl\in (0,1)$, equalities (\ref{eq:asvar-inverse-sub}--\ref{eq:var-char}) hold and are finite for any $f\in L^2_0(\mu)$.  The function $\gl \mapsto \Var( \gl K, f)$ has a limit as $\gl \uparrow 1$ on the extended real numbers $[0,\infty]$, and $\Var(K ,f)$ equals this limit \citep[][]{tierney-note}.

\subsection{Normalised importance sampling ordering}\label{app:peskun-normalised} We set
\begin{equation}\label{left-gap}
  \mathscr{N}_K \defeq
  - 
  \inf_{\mu(g)=0,\mu(g^2)=1} \inner{g}{K g}_\mu
  \end{equation}
for a $\mu$-reversible kernel $K$,
so that the \emph{left spectral gap} of $K$ is $1-\mathscr{N}_K$ \citep[see][]{andrieu-vihola-order}. We have $\mathscr{N}_K \in [-1,1]$ in general, but $\mathscr{N}_K\in[-1,0]$ if $K$ is positive.

The conditions of the next two lemmas will seem more natural once Lemma \ref{lem:helper} is stated.

\begin{lemma}\label{lem:p}
  Suppose $(\mu,\nu, \cv, K, L, \lc ,\uc)$ satisfies Assumption \ref{a:peskun-marginal} on $\mathbf{X}\defeq \mathbf{T}\times\mathbf{Y}$.  Let $\vp\in L^2_0(\nu)$ be such that $\cv \vp \in L^2(\mu)$.  Define $u_\gl \defeq (1-\gl K)\inv( \cv \vp)$ and $\check{u}_\gl \defeq u_\gl - \cv \vp$, in $L^2(\mu)$ for all $\gl\in (0,1)$.  The following hold: 
    \begin{enumerate}[(i)]
    \item\label{lem:p-one} If $u_\gl(\gt,y) = u_\gl(\gt)$, $\gl\in (0,1)$, then
\eqref{thm:peskun-equation} holds.
\item\label{lem:p-two} If $\check{u}_\gl(\gt,y) =\check{u}_\gl(\gt)$, $\gl \in(0,1)$, then \eqref{eq:peskun-augmented} holds, with $\mathscr{N}_K$ as in \eqref{left-gap}. 
    \end{enumerate}
    \end{lemma}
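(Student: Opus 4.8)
The plan is to prove both parts by the resolvent/Dirichlet‑form argument familiar from Peskun--Tierney orderings, carried out for the subprobability kernels $\gl K$ and $\gl L$, $\gl\in(0,1)$, and then to let $\gl\uparrow 1$ using the fact recalled in \S\ref{app:peskun-subprobability} that $\Var(\gl K,\cv\vp)\to\Var(K,\cv\vp)$ and $\Var(\gl L,\vp)\to\Var(L,\vp)$ in $[0,\infty]$. First I would record the reductions: since $\vp\in L^2_0(\nu)$ we have $\bvp=\vp$ and $\mu(\cv\vp)=\nu(\vp)=0$, so $\cv\vp\in L^2_0(\mu)$, while $\Var_\mu(\cv\bvp)=\norm{\cv\vp}_\mu^2$ and $\Var_\nu(\vp)=\norm{\vp}_\nu^2$; both target inequalities are homogeneous of degree two in $\vp$, so one may assume $\cv\vp\neq 0$ (otherwise $\vp=0$ $\nu$-a.e.\ and everything is trivial), and $\uc\ge\mu^*(\cv^*)=1>0$. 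The elementary observation used throughout is: if $g\in L^2(\mu)$ depends only on $\gt$, then $g\in L^2(\mu^*)$, hence $g\in L^2(\nu)$ with $\norm{g}_\nu^2\le\uc\norm{g}_\mu^2$ (because $\cv^*\le\uc$); the Radon--Nikod\'{y}m identity gives $\inner{\cv\vp}{g}_\mu=\inner{\vp}{g}_\nu$; and Assumption~\ref{a:peskun-marginal} together with \eqref{eq:dirichlet} yields $\uc^{-1}\cE_{\gl L}(g)\le\cE_{\gl K}(g)\le\lc^{-1}\cE_{\gl L}(g)$ for every $\gl\in(0,1]$.

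For part (i): applying the variational identity \eqref{eq:var-char} to $\gl K$ and $f=\cv\vp$, the supremum is attained at $u_\gl=(1-\gl K)\inv(\cv\vp)$, which depends only on $\gt$ by hypothesis, so
\[
\Var(\gl K,\cv\vp)=2\big[2\inner{\cv\vp}{u_\gl}_\mu-\cE_{\gl K}(u_\gl)\big]-\norm{\cv\vp}_\mu^2\le 2\big[2\inner{\vp}{u_\gl}_\nu-\uc^{-1}\cE_{\gl L}(u_\gl)\big]-\norm{\cv\vp}_\mu^2.
\]
Substituting $g:=\uc^{-1}u_\gl$ and using that $\cE_{\gl L}$ is quadratic, $2\inner{\vp}{u_\gl}_\nu-\uc^{-1}\cE_{\gl L}(u_\gl)=\uc\big(2\inner{\vp}{g}_\nu-\cE_{\gl L}(g)\big)\le\uc\sup_{g'\in L^2(\nu)}\big\{2\inner{\vp}{g'}_\nu-\cE_{\gl L}(g')\big\}=\tfrac{\uc}{2}\big[\Var(\gl L,\vp)+\norm{\vp}_\nu^2\big]$ by \eqref{eq:var-char} for $\gl L$. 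Hence $\Var(\gl K,\cv\vp)\le\uc[\Var(\gl L,\vp)+\norm{\vp}_\nu^2]-\norm{\cv\vp}_\mu^2$, and $\gl\uparrow 1$ gives \eqref{thm:peskun-equation}.

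For part (ii) the hypothesis is that $\check{u}_\gl=u_\gl-\cv\vp$ depends only on $\gt$. I would first derive two identities. From \eqref{eq:asvar-inverse-sub}, $u_\gl=\cv\vp+\check{u}_\gl$ and $\inner{\cv\vp}{\check{u}_\gl}_\mu=\inner{\vp}{\check{u}_\gl}_\nu$,
\[
\Var(\gl K,\cv\vp)=\norm{\cv\vp}_\mu^2+2\inner{\vp}{\check{u}_\gl}_\nu;
\]
and, rewriting $(1-\gl K)u_\gl=\cv\vp$ as $(1-\gl K)\check{u}_\gl=\gl K(\cv\vp)$ and then using $\gl K\check{u}_\gl=\check{u}_\gl-\gl K(\cv\vp)$, self‑adjointness of $K$ on $L^2(\mu)$, and $\inner{\check{u}_\gl}{\cv\vp}_\mu=\inner{\vp}{\check{u}_\gl}_\nu$,
\[
\cE_{\gl K}(\check{u}_\gl)=\inner{\check{u}_\gl}{\gl K(\cv\vp)}_\mu=\inner{\vp}{\check{u}_\gl}_\nu-\gl\inner{\cv\vp}{K(\cv\vp)}_\mu.
\]
Next, \eqref{eq:var-char} for $\gl L$ with test function $\uc^{-1}\check{u}_\gl$ gives $2\inner{\vp}{\check{u}_\gl}_\nu\le\tfrac{\uc}{2}[\Var(\gl L,\vp)+\norm{\vp}_\nu^2]+\uc^{-1}\cE_{\gl L}(\check{u}_\gl)$; bounding $\uc^{-1}\cE_{\gl L}(\check{u}_\gl)\le\cE_{\gl K}(\check{u}_\gl)$ by the elementary observation, substituting the second identity and cancelling one $\inner{\vp}{\check{u}_\gl}_\nu$ leaves $2\inner{\vp}{\check{u}_\gl}_\nu\le\uc[\Var(\gl L,\vp)+\norm{\vp}_\nu^2]-2\gl\inner{\cv\vp}{K(\cv\vp)}_\mu$. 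Finally, as $\cv\vp$ is a nonzero element of $L^2_0(\mu)$, \eqref{left-gap} gives $\inner{\cv\vp}{K(\cv\vp)}_\mu\ge-\mathscr{N}_K\norm{\cv\vp}_\mu^2$, so $-2\gl\inner{\cv\vp}{K(\cv\vp)}_\mu\le 2\mathscr{N}_K\norm{\cv\vp}_\mu^2$; combining this with the first identity,
\[
\Var(\gl K,\cv\vp)\le\uc[\Var(\gl L,\vp)+\norm{\vp}_\nu^2]+(1+2\mathscr{N}_K)\norm{\cv\vp}_\mu^2,
\]
and $\gl\uparrow 1$ yields \eqref{eq:peskun-augmented}.

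I expect the main obstacle to be the constant bookkeeping in part (ii): one must produce exactly the two identities above and then interlock the variational lower bound for $\Var(\gl L,\vp)$, the Dirichlet‑form comparison, and the left‑gap estimate so that the coefficients collapse to $\uc$ in front of $[\Var(L,\vp)+\Var_\nu(\vp)]$ and to $1+2\mathscr{N}_K$ in front of $\Var_\mu(\cv\bvp)$; in particular the cancellation of one $\inner{\vp}{\check{u}_\gl}_\nu$ is what upgrades the factor $\uc/2$ from the variational inequality to the full $\uc$. The remaining points---finiteness of the subprobability asymptotic variances for $\gl<1$, the passage $\gl\uparrow 1$, and the Radon--Nikod\'{y}m manipulations for functions of $\gt$ alone---are routine given \S\ref{app:peskun-subprobability}.
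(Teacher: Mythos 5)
Your proof is correct and follows essentially the same route as the paper's: the variational representation of the subprobability-kernel asymptotic variance \eqref{eq:var-char}, the transfer $\inner{\cv\vp}{g}_\mu=\inner{\vp}{g}_\nu$ for $g$ a function of $\gt$ alone, the Dirichlet-form comparison $\uc^{-1}\cE_{\gl L}\le\cE_{\gl K}$ on $L^2(\mu^*)$, the left spectral gap, and the limit $\gl\uparrow1$.

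There is one reorganization worth noting in part~(ii). The paper arrives at the extra term by writing $\cE_{\gl K}(\check{u}_\gl)=\cE_{\gl K}(u_\gl)+\cE_{\gl K}(\cv\vp)-2\norm{\cv\vp}_\mu^2$ (a polarization of the quadratic form about $u_\gl=\check{u}_\gl+\cv\vp$) and then bounding $\cE_{\gl K}(\cv\vp)$ by the spectral radius of $1-\gl K$ on $L^2_0(\mu)$. You instead compute $\cE_{\gl K}(\check{u}_\gl)$ exactly from the resolvent identity $(1-\gl K)\check{u}_\gl=\gl K(\cv\vp)$, getting $\cE_{\gl K}(\check{u}_\gl)=\inner{\vp}{\check{u}_\gl}_\nu-\gl\inner{\cv\vp}{K(\cv\vp)}_\mu$, and feed that into the variational lower bound for $\Var(\gl L,\vp)$ so that one copy of $\inner{\vp}{\check{u}_\gl}_\nu$ cancels; the left-gap bound is then applied directly to $\inner{\cv\vp}{K(\cv\vp)}_\mu$ rather than through $\cE_{\gl K}(\cv\vp)$. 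Expanding the Dirichlet form shows the two identities are algebraically equivalent, so this is not a different method, but it is a cleaner way to surface the constant $(1+2\mathscr{N}_K)$ and it avoids invoking the spectral-measure picture.

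One small imprecision: you drop the $\gl$ in ``$-2\gl\inner{\cv\vp}{K(\cv\vp)}_\mu\le 2\mathscr{N}_K\norm{\cv\vp}_\mu^2$.'' The left-gap bound gives $-2\gl\inner{\cv\vp}{K(\cv\vp)}_\mu\le 2\gl\mathscr{N}_K\norm{\cv\vp}_\mu^2$, which is $\le 2\mathscr{N}_K\norm{\cv\vp}_\mu^2$ only when $\mathscr{N}_K\ge0$; when $K$ is positive one can have $\mathscr{N}_K\in[-1,0)$ (Remark~\ref{lem:p-remark}\eqref{lem:p-remark-positive}) and the inequality reverses for $\gl<1$. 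This is harmless since you take $\gl\uparrow1$ afterwards and $2\gl\mathscr{N}_K\to2\mathscr{N}_K$, but the displayed inequality for fixed $\gl<1$ should carry the factor $\gl$; the paper's step $\cE_{\gl K}(\cv\vp)\le(1+\mathscr{N}_K)\norm{\cv\vp}_\mu^2$ has the analogous $\gl$-dependence lurking for the same reason.
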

  \begin{proof}
    Note that $L^2(\mu^*)\subset L^2(\nu^*)$ by Assumption \ref{a:peskun-marginal}\eqref{a:peskun-marginal-two}.   For $g\in L^2(\mu^*)$,  
    $$
      \cE_{\gl L}(g) =
      \gl \cE_{L}(g) + (1-\gl) \nu^*(g^2) 
      \le
\uc  \gl \cE_{K}(g)+ (1-\gl) \nu^*(g^2),
      $$
 by Assumption \ref{a:peskun-marginal}\eqref{a:peskun-marginal-one}.  From the above first equality, now for $\gl K$ and $\mu^*$,
      \begin{align}\label{eq:d-bound}
\cE_{\gl L}(g)
&\le
      \uc \big[  \cE_{\gl K}(g) - (1-\gl) \mu^*(g^2)\big] + (1-\gl)\nu^*(g^2)
      \nonumber\\&= \uc  \cE_{\gl K}(g) - (1-\gl) \mu^*\big(g^2[\uc  -\cv^*]\big)
        \le
        \uc  \cE_{\gl K}(g),
      \end{align}
by Assumption \ref{a:peskun-marginal}\eqref{a:peskun-marginal-two}.  Since $1-\gl K$ is self-adjoint on $L^2(\mu)$, we also note that
$$
\cE_{\gl K}(\check{u}_\gl)
=
\cE_{\gl K}(u_\gl - \cv \vp)
=\cE_{\gl K}(u_\gl) + \cE_{\gl K}(\cv \vp) - 2 \norm{\cv \vp}_\mu^2,
$$
as
$
\inner{v_{\gl}}{(1-\gl K) \cv \vp}_\mu = \norm{\cv \vp}_\mu^2.
$
Regardless of $\gl\in (0,1)$, $1-\gl K$ has support of its spectral measure contained in $[0,1+\mathscr{N}_K]$.
Hence, $\cE_{\gl K}( \cv \vp) \le (1+\mathscr{N}_K) \norm{\cv \vp}_{\mu}^2$, so
   \begin{equation}\label{eq:d-simplified}
\cE_{\gl K}(\check{u}_\gl) \le \cE_{\gl K}(u_\gl) + (\mathscr{N}_K-1 )\norm{\cv \vp}_{\mu}^2.
   \end{equation}

   We now compare the asymptotic variances.  By \eqref{eq:asvar-inverse-sub},
\begin{equation*}
  LS
  \defeq \Var(\gl K,\cv \vp) + \norm{\cv \vp}_{\mu}^2
  =
  2\inner{\cv \vp}{u_\gl}_\mu
  =
    2\big[ 2\inner{ \cv \vp}{u_\gl}_{\mu} - \cE_{\gl K}(u_\gl)\big].
\end{equation*}
With $\psi\defeq u_\gl$ for \eqref{lem:p-one}, and with $\psi\defeq \check{u}_\gl$ for \eqref{lem:p-two} using \eqref{eq:d-simplified},
  $$
LS
      \le
      2\big[ 2\inner{\cv \vp}{\psi}_{\mu} - \cE_{\gl K}(\psi)\big] + E_\psi,
$$
      where $E_\psi\defeq 0$ if $\psi=u_\gl$ and $E_\psi\defeq 2(1+\mathscr{N}_K) \norm{\cv \vp}_{\mu}^2$ if $\psi=\check{u}_\gl$.  Hence,
$$
LS
\le
2\big[ 2\inner{  \vp}{\psi}_{\nu} - \cE_{\gl K}(\psi)\big] +E_\psi
\le
2\big[ 2\inner{ \vp}{\psi}_{\nu} - (\uc)\inv \cE_{\gl L}(\psi)\big] + E_\psi,
$$
where we have used \eqref{eq:d-bound}.  Since $\psi\in L^2(\mu^*)\subset L^2(\nu)$, 
\begin{align*}
LS
&\le
\fr{1}{\uc } \Big( 2  \sup_{g\in L^2(\nu)}\big\{ 2\inner{\uc \vp}{g}_{\nu} - \cE_{\gl L}(g)\big\}  - \norm{ \uc \vp}_{\nu}^2\Big) + \uc \norm{\vp}_{\nu}^2 + E_\psi
\\&= 
\uc \big( \Var(\gl L, \vp) + \norm{\vp}_{\nu}^2 \big) + E_\psi,
\end{align*}
by \eqref{eq:var-char}.  We then take the limit $\gl \uparrow 1$ \cite{tierney-note}.   Noting that $\norm{\cv \vp}_\mu^2 = \Var_\mu(\cv \vp)$ since $\mu(\cv \vp) = \nu(\vp)=0$, we conclude.
\end{proof}

  \begin{lemma}\label{lem:p-reverse}
    Suppose the assumptions of Lemma \ref{lem:p} hold, where $\uc$ may be also $\infty$.  If $v_\gl \defeq (1-\gl L)\inv (\vp)$ satisfies $v_\gl (\gt, y) = v_\gl(\gt)$, then \eqref{thm:peskun-equation-reverse} holds.
    \end{lemma}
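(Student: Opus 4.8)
The plan is to run the mirror image of the argument for Lemma~\ref{lem:p}\eqref{lem:p-one}, interchanging the roles of $K$ and $L$ in the variational step and using the lower constant $\lc$ in place of $\uc$. First I would dispose of the degenerate case: if $\lc=0$ the right-hand side of \eqref{thm:peskun-equation-reverse} vanishes while the left-hand side is nonnegative, so there is nothing to prove; hence assume $\lc>0$. In particular $\cv^*\ge\lc>0$ $\mu^*$-a.e.\ by Assumption~\ref{a:peskun-marginal}\eqref{a:peskun-marginal-two}, which gives the inclusion $L^2(\nu^*)\subseteq L^2(\mu^*)$. As in the proof of Lemma~\ref{lem:p}, I would work at the level of the subprobability kernels $\gl K$ and $\gl L$ for $\gl\in(0,1)$, establish the bound there, and pass to the limit $\gl\uparrow1$ using \cite{tierney-note}.

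The key ingredient is the reverse Dirichlet-form comparison for the subprobability kernels: for $g\in L^2(\mu^*)$, Assumption~\ref{a:peskun-marginal}\eqref{a:peskun-marginal-one} gives $\cE_L(g)\ge\lc\,\cE_K(g)$, and writing $\gl\,\cE_K(g)=\cE_{\gl K}(g)-(1-\gl)\mu^*(g^2)$ together with $\nu^*(g^2)=\mu^*(\cv^* g^2)\ge\lc\,\mu^*(g^2)$ yields
\begin{equation*}
\cE_{\gl L}(g)=\gl\,\cE_L(g)+(1-\gl)\nu^*(g^2)\ge\lc\,\cE_{\gl K}(g)+(1-\gl)\mu^*\big(g^2[\cv^*-\lc]\big)\ge\lc\,\cE_{\gl K}(g).
\end{equation*}
Next I would set $v_\gl\defeq(1-\gl L)\inv(\vp)\in L^2(\nu)$; by hypothesis $v_\gl$ depends only on $\gt$, so $v_\gl\in L^2(\nu^*)\subseteq L^2(\mu^*)$, and this is precisely the point where the assumption that $v_\gl$ depends only on $\gt$ is used — it places $v_\gl$ in the class on which the marginal comparison just derived applies. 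By \eqref{eq:asvar-inverse-sub}--\eqref{eq:var-char}, $v_\gl$ attains the supremum in the variational representation of $\Var(\gl L,\vp)$, so, using the Dirichlet-form bound and $\inner{\vp}{v_\gl}_\nu=\inner{\cv\vp}{v_\gl}_\mu$ (valid since $\cv=\der{\nu}{\mu}$ and $\cv\vp,v_\gl\in L^2(\mu)$),
\begin{equation*}
\Var(\gl L,\vp)+\norm{\vp}_\nu^2=2\big[2\inner{\vp}{v_\gl}_\nu-\cE_{\gl L}(v_\gl)\big]\le2\big[2\inner{\cv\vp}{v_\gl}_\mu-\lc\,\cE_{\gl K}(v_\gl)\big].
\end{equation*}

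Finally, multiplying through by $\lc$ and recognising $\lc v_\gl\in L^2(\mu)$ as a legitimate test function in the variational characterisation \eqref{eq:var-char} of $\Var(\gl K,\cv\vp)$, I would obtain
\begin{equation*}
\lc\big[\Var(\gl L,\vp)+\norm{\vp}_\nu^2\big]\le2\big[2\inner{\cv\vp}{\lc v_\gl}_\mu-\cE_{\gl K}(\lc v_\gl)\big]\le\Var(\gl K,\cv\vp)+\norm{\cv\vp}_\mu^2.
\end{equation*}
Letting $\gl\uparrow1$ and using $\nu(\vp)=0$ (so $\bvp=\vp$ and $\Var_\nu(\vp)=\norm{\vp}_\nu^2$) and $\mu(\cv\vp)=\nu(\vp)=0$ (so $\Var_\mu(\cv\bvp)=\norm{\cv\vp}_\mu^2$) gives exactly \eqref{thm:peskun-equation-reverse}. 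I expect no serious obstacle beyond bookkeeping: the two points to watch are that the statement is only non-trivial when $\lc>0$ (which is also what secures the integrability needed to keep $v_\gl$ inside $L^2(\mu^*)$), and that, since the upper bound is never invoked anywhere, the argument goes through verbatim in the case $\uc=\infty$.
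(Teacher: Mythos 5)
Your proof is correct. It establishes the reverse Dirichlet-form comparison $\cE_{\gl L}(g)\ge\lc\,\cE_{\gl K}(g)$ for $g\in L^2(\mu^*)$, then uses the attained supremum for $\Var(\gl L,\vp)$ at $v_\gl$ together with the variational \emph{upper} representation \eqref{eq:var-char} for $\Var(\gl K,\cv\vp)$ tested against $\lc v_\gl$, and passes to the limit $\gl\uparrow1$. All the integrability points check out: the hypothesis $\lc>0$ gives $L^2(\nu^*)\subseteq L^2(\mu^*)$, the hypothesis on $v_\gl$ places it in $L^2(\mu^*)$ so the Dirichlet comparison applies, and the quadratic homogeneity $\cE_{\gl K}(\lc v_\gl)=\lc^2\cE_{\gl K}(v_\gl)$ lets the factor of $\lc$ be absorbed cleanly.

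Where you differ from the paper is in presentation, not substance. The paper's proof is a duality/role-swap argument: for $\lc>0$ one has $\mu\ll\nu$, $\cv^{-1}\le\lc^{-1}$, and $\cE_K(g)\le\lc^{-1}\cE_L(g)$ for $g\in L^2(\nu^*)$, so the heptuple $(\nu,\mu,\cv^{-1},L,K,\cdot,\lc^{-1})$ satisfies Assumption~\ref{a:peskun-marginal} with the roles of approximate and exact chain interchanged; applying Lemma~\ref{lem:p}\eqref{lem:p-one} to $\psi\defeq\cv\vp$ (so $\cv^{-1}\psi=\vp$ and $u_\gl$ in the swapped instance is exactly your $v_\gl$) and multiplying by $\lc$ gives \eqref{thm:peskun-equation-reverse}. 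Yours re-derives the mirror-image chain of inequalities from scratch rather than invoking Lemma~\ref{lem:p}\eqref{lem:p-one} as a black box. The paper's route is shorter and avoids repeating the $\gl$-limiting and variational bookkeeping; your route is self-contained and makes explicit that only the lower inequality on the Dirichlet form and on $\cv^*$ are ever used, which is what justifies allowing $\uc=\infty$ without further comment. Both are fine; they prove the same thing by the same underlying mechanism viewed from opposite sides.
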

  \begin{proof}
The lower bound \eqref{thm:peskun-equation-reverse} is trivial if $\lc=0$.  Assume $\lc>0$.  Then $\mu \ll \nu$, $\cv \inv \le \lc\inv$ (implying $L^2(\nu) \subseteq L^2(\mu)$), and $\cE_K(g)\le \lc\inv \cE_L(g)$ for all $g\in L^2(\nu)$. The result follows by applying Lemma \ref{lem:p}\eqref{lem:p-one}.    
    \end{proof}

  \begin{remark} \label{lem:p-remark}
    Regarding Lemma \ref{lem:p} and Lemma \ref{lem:p-reverse}:
    \begin{enumerate}[(i)]
\item The solution $v_\gl$ to the Poisson equation \citep[see][]{meyn-tweedie}, $(1-\gl L)g=\vp$ in $L^2(\mu)$, is also used in \citep[Thm.~17]{andrieu-vihola-order} as a lemma for the proof of the convex order criterion Peskun type ordering for PM chains \citep[Thm.~10]{andrieu-vihola-order}. 
\item\label{lem:p-remark-constant} It is reasonable to use a single constant $\uc$ in Assumptions \ref{a:peskun-marginal}(\ref{a:peskun-marginal-one}--\ref{a:peskun-marginal-two}).  If one replaces Assumption \ref{a:peskun-marginal}\eqref{a:peskun-marginal-two} with $\cv^* \le \uc'$ $\mu^*-a.e.$, then, if $\uc'< \uc$, one obtains the same result after bounding a nonpositive quantity by zero in \eqref{eq:d-bound}.  If $\uc'> \uc$, then one would need to impose the unappealing condition that $\sup_{\gl \in (0,1)} \norm{ u_\gl}_{\mu^*}^2 <\infty$ and add a positive constant involving this bound to the final results.  Anyways, for the the application in this paper, we have $\uc=\uc'$ (Lemma \ref{lem:comparison-helper}).
    \item\label{lem:p-remark-subspace} Assumption \ref{a:peskun-marginal}\eqref{a:peskun-marginal-one} can be replaced with the weaker assumption that $\cE_{L}(g)\le \uc \, \cE_{K}(g)$ for all $g\in \mathcal{G} \subset L^2(\mu^*)$, where $\mathcal{G} \defeq \{u_\gl: \gl\in (0,1)\}$ for \eqref{lem:p-one} and $\mathcal{G} \defeq \{\check{u}_\gl: \gl\in (0,1)\}$ for \eqref{lem:p-two}.
    \end{enumerate}
    \end{remark}

  \begin{lemma}\label{lem:helper}
    Let $K$ be a $\mu$-reversible chain on $\mathbf{X}=\mathbf{T}\times\mathbf{Y}$.  For $h\in L^2(\mu)$ and $\gl\in(0,1)$, set $h_\gl \defeq (1- \gl K)\inv h$ and $\ch \defeq h_\gl - h$, which are in $L^2(\mu)$.
    \begin{enumerate}[(i)]
    \item\label{lem:helper-one}  If $\mathbf{Y} =\{y_0\}$ is the trivial space, then $h_\gl(\gt,y) = h_\gl(\gt).$ 
      \item\label{lem:helper-two} If $K$ is an augmented kernel, then $\check{h}_\gl(\gt,y) = \check{h}_\gl(\gt)$.  Moreover, if also $h(\gt,y) = h(\gt)$, then $h_\gl(\gt,y) = h_\gl(\gt).$  
    \end{enumerate}
  \end{lemma}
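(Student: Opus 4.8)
The plan is to exploit the one structural feature of an augmented kernel (Definition~\ref{def:augmented}) that matters here: applying $K$ to \emph{any} $L^2(\mu)$ function produces a function that does not depend on the auxiliary coordinate $y\in\mathbf{Y}$. Indeed, if $K$ is the $Q$-augmentation of $\dot{K}$, then for every $g\in L^2(\mu)$ and $\mu$-a.e.\ $(\gt,y)\in\mathbf{X}$,
$$
Kg(\gt,y)=\int \dot{K}_{\gt}(\ud\gt')\int Q_{\gt'}(\ud y')\,g(\gt',y'),
$$
and the right-hand side involves no $y$; the iterated integral is well defined and finite for $\mu$-a.e.\ $(\gt,y)$ precisely because $Kg\in L^2(\mu)$, the Markov operator $K$ being a contraction on $L^2(\mu)$. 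I would record this observation first, with all equalities understood modulo $\mu$-null sets. Part~\eqref{lem:helper-one} is then immediate: if $\mathbf{Y}=\{y_0\}$ is a single point, every measurable function on $\mathbf{X}=\mathbf{T}\times\{y_0\}$ is trivially a function of $\gt$ alone, in particular $h_\gl$.

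For part~\eqref{lem:helper-two} I would argue from the resolvent identity. Since $\gl\in(0,1)$ and $K$ is a contraction on $L^2(\mu)$, the operator $1-\gl K$ is boundedly invertible on $L^2(\mu)$ via the Neumann series $\sum_{n\ge0}\gl^n K^n$, and $h_\gl\defeq(1-\gl K)\inv h$ satisfies $h_\gl=h+\gl Kh_\gl$ in $L^2(\mu)$. Applying the observation above with $g=h_\gl$ shows that $\gl Kh_\gl$ is, up to a $\mu$-null set, a function of $\gt$ only; since $\ch=h_\gl-h=\gl Kh_\gl$, this is exactly the first claim. If moreover $h(\gt,y)=h(\gt)$, then $h_\gl=h+\ch$ exhibits $h_\gl$ as a sum of two functions of $\gt$ alone, which is the second claim. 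Equivalently, one may expand $h_\gl=h+\sum_{n\ge1}\gl^n K^n h$ and note that each term $K^n h=K(K^{n-1}h)$ with $n\ge1$ depends only on $\gt$.

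I do not expect a genuine obstacle here; the argument is essentially the single observation plus the resolvent identity. The only points needing a little care are bookkeeping: that all the identities above hold in $L^2(\mu)$, hence only $\mu$-almost everywhere (so that the conclusions ``$h_\gl$ depends only on $\gt$'' and ``$\ch$ depends only on $\gt$'' are statements about the relevant $L^2$-equivalence classes), and that $1-\gl K$ is invertible on $L^2(\mu)$ for $\gl\in(0,1)$, which is where contractivity of the Markov operator $K$ on $L^2(\mu)$ is used. These are the facts that make the formal manipulation $h_\gl = h+\gl K h_\gl$ rigorous.
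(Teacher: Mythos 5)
Your proof is correct and takes essentially the same approach as the paper: the paper invokes Lemma~\ref{lem:augmented-kernel}\eqref{lem:augmented-kernel-forget} (i.e.\ $K^n f(\gt,y)=\dot{K}^n(Qf)(\gt)$) together with the Neumann series $h_\gl=\sum_{n\ge0}\gl^nK^nh$, while you use the resolvent identity $\check{h}_\gl=\gl Kh_\gl$ with the $n=1$ case of the same observation, which is a mild streamlining of the same idea (and you note the Neumann-series variant yourself). The care you take about $\mu$-a.e.\ equalities and invertibility of $1-\gl K$ on $L^2(\mu)$ for $\gl\in(0,1)$ is appropriate and matches the paper's conventions.
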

  
  \begin{proof} \eqref{lem:helper-one} is clear.  For \eqref{lem:helper-two}, we write the series representation for the inverse of an invertible operator and use Lemma \ref{lem:augmented-kernel}\eqref{lem:augmented-kernel-forget}, to get that,
  $$
  h_\gl(\gt,y) = \sum_{n=0}^\infty \gl^n K^n h(\gt,y) = h(\gt,y) + \sum_{n=1}^\infty \gl^n \dot{K}^n (Q h) (\gt).
  $$
The result then follows.
  \end{proof}

  \begin{proof}[Proof of Theorem \ref{thm:peskun}]
    The upper bound \eqref{thm:peskun-equation} follows from Lemma \ref{lem:p}\eqref{lem:p-one} and Lemma \ref{lem:helper}\eqref{lem:helper-one}, while \eqref{thm:peskun-equation-reverse} follows from Lemma \ref{lem:p-reverse} and Lemma \ref{lem:helper}\eqref{lem:helper-one}. 
  \end{proof}
  
\begin{proof}[Proof of Theorem \ref{thm:peskun-augmented}]
  Follows by Lemma \ref{lem:p} and Lemma \ref{lem:helper}\eqref{lem:helper-two}. 
\end{proof}

\subsection{Importance sampling schemes}\label{app:peskun-is}
The following CLT, based on Proposition \ref{prop:kipnis-varadhan}, and asymptotic variance formula, are \citep[Theorem 7 and 15]{vihola-helske-franks}.
\begin{proposition}\label{prop:is-asvar}
  Under Assumption \ref{a:is-clt}, the IS estimator \eqref{is-estimator} satisfies the CLT \eqref{is-clt}, with limiting variance 
  $
  \mathbb{V}^{\IS}_f
  =
  \mu(\mathsf{a})
  \big[
\Var(K, \mathbf{m}_{f}) + \mu(\mathsf{a} \mathsf{v}_{\bar{f}})
\big]/c_\xi^2.  
  $
  \end{proposition}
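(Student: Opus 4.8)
The plan is to treat $E_n^{\IS}(f)$ of \eqref{is-estimator} as a ratio of two ergodic averages of the \emph{augmented} Markov chain built in Algorithm~\ref{alg:is-scheme}, reduce the statement to a CLT for the numerator (handling the denominator by the ergodic theorem and Slutsky), and evaluate the limiting variance with the Kipnis--Varadhan theorem (Proposition~\ref{prop:kipnis-varadhan}) together with the conditional-mean/conditional-variance splitting that the augmented-kernel structure \eqref{is-kernel} permits (Lemma~\ref{lem:augmented-kernel}\eqref{lem:augmented-kernel-forget}). This is in essence \citep[Thm.~7 \& 13]{vihola-helske-franks}. Write $\bar f\defeq f-\nu(f)$; the key identity is $\xi_k(\bar f)=\xi_k(f)-\nu(f)\xi_k(1)$, which is exactly what the delta-method linearisation of the ratio produces.

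\emph{Case IS0.} Here $(\gT_k,U_k,V_k)_{k\ge1}$ is the $\bar\mu$-reversible Harris ergodic chain with transition $\bar K=K^{(V)}$ of \eqref{is-kernel}, a $Q^{(V)}$-augmentation of the $\mu$-reversible base chain $K$, which by Lemma~\ref{lem:augmented-kernel} inherits aperiodicity from $K$ (Assumption~\ref{a:is-clt}). The ergodic theorem gives $\tfrac1n\sum_{k=1}^n\xi_k(1)\to\mu(\mathbf{m}_1)=c_\xi$ a.s., and the delta method then yields $\sqrt n[E_n^{\IS}(f)-\nu(f)]=c_\xi^{-1}\tfrac1{\sqrt n}\sum_{k=1}^n\xi_k(\bar f)+o_{\P}(1)$. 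Since $\xi(\bar f)\in L^2(\bar\mu)$ for $f\in\mathcal{L}^2_\pi(\nu)$ and, by the splitting below, $\Var(\bar K,\xi(\bar f))<\infty$ under Assumption~\ref{a:is-clt}, Proposition~\ref{prop:kipnis-varadhan} gives the CLT \eqref{is-clt} with $\mathbb{V}^{\IS}_f=c_\xi^{-2}\Var(\bar K,\xi(\bar f))$. Expanding this asymptotic variance into stationary autocovariances, the lag-$0$ term equals $\bar\mu(\xi(\bar f)^2)=\mu(v_{\bar{f}})+\mu(\mathbf{m}_{\bar{f}}^2)$, where $\mathbf{m}_{\bar{f}}=\E[\xi_k(\bar f)\mid\gT_k,U_k]$ and $v_{\bar{f}}$ is the corresponding conditional variance; by Lemma~\ref{lem:augmented-kernel}\eqref{lem:augmented-kernel-forget} one step of $\bar K$ forgets the $V$-coordinate, so the lagged terms are those of a functional of the $(\gt,u)$-marginal, which evolves as $K$, and resum to $\Var(K,\mathbf{m}_f)$. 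Hence $\Var(\bar K,\xi(\bar f))=\Var(K,\mathbf{m}_f)+\mu(v_{\bar{f}})$, i.e.\ the claimed formula with $\mu(\mathsf{a})=1$ and $\mathsf{D}_{\bar{f}}=0$.

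\emph{Case ISJ.} Now $(\tilde{\gT}_k,\tilde{U}_k)_{k\ge1}$ is the jump chain, $\tilde\mu$-reversible Harris ergodic by Lemma~\ref{lem:jump-chain-kernel}, and $(\tilde N_k,V_k)$ an augmentation with $\E[\xi_k(\bar f)\mid\tilde{\gT}_k,\tilde{U}_k,\tilde N_k]=\mathbf{m}_{\bar{f}}(\tilde{\gT}_k,\tilde{U}_k)$ by construction, so that $(\tilde{\gT}_k,\tilde{U}_k,\tilde N_k,V_k)_{k\ge1}$ is again reversible, Harris ergodic and aperiodic. The denominator now converges to $\tilde\mu(\mathsf{a}^{-1}\mathbf{m}_1)=c_\xi/\mu(\mathsf{a})$, so $\mathbb{V}^{\IS}_f$ equals $(\mu(\mathsf{a})/c_\xi)^2$ times the asymptotic variance of $\tilde N_k\xi_k(\bar f)$ along this chain. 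The augmented splitting, using $\E[\tilde N_k\mid\cdot]=1/\mathsf{a}$ and $\Var(\tilde N_k\mid\cdot)=(1-\mathsf{a})/\mathsf{a}^2$ for the geometric holding times together with the conditional-mean property above, gives that asymptotic variance as $\Var(\tilde K,\mathsf{a}^{-1}\mathbf{m}_{\bar{f}})+\tilde\mu\big(\tilde{v}_{\bar{f}}+\mathbf{m}_{\bar{f}}^2(1-\mathsf{a})/\mathsf{a}^2\big)$. One then invokes the jump-chain identity $\Var(K,g)=\mu(\mathsf{a})\Var(\tilde K,\mathsf{a}^{-1}g)+\mu\big(g^2(1-\mathsf{a})/\mathsf{a}\big)$ for $\mu(g)=0$ (itself an instance of the same splitting applied to $\tilde N_k g$) to absorb $\mu(\mathsf{a})^2c_\xi^{-2}\Var(\tilde K,\mathsf{a}^{-1}\mathbf{m}_{\bar{f}})$ and the $\mathbf{m}_{\bar{f}}^2$-term into $\mu(\mathsf{a})c_\xi^{-2}\Var(K,\mathbf{m}_f)$, leaving $\mu(\mathsf{a})c_\xi^{-2}\mu(\mathsf{a}\tilde{v}_{\bar{f}})$ plus the correction $\mathsf{D}_{\bar{f}}=\mu(\mathsf{a})c_\xi^{-2}\mu(\mathsf{a}\tilde{v}_{\bar{f}}-v_{\bar{f}})$ between the two cases.

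The step I expect to be the main obstacle is the holding-time bookkeeping in the ISJ case: proving the jump-chain identity that relates $\Var(K,\cdot)$ to $\Var(\tilde K,\mathsf{a}^{-1}\cdot)$ with the correct $\mu(\mathsf{a})$-weighting, and tracking $\E[\tilde N_k^2\mid\cdot]$ so that the conditional-variance contributions assemble exactly into $\mu(\mathsf{a}\tilde{v}_{\bar{f}})$ and the residual is precisely $\mathsf{D}_{\bar{f}}$. The finiteness $\Var(\bar K,\xi(\bar f))<\infty$ and the uniform integrability legitimising the delta-method expansion also need reversibility and Assumption~\ref{a:is-clt}, and are most cleanly obtained by quoting \citep[Thm.~7 \& 13]{vihola-helske-franks}.
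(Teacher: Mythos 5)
Your argument is correct, and it is worth noting that the paper does not actually contain its own proof of this proposition: it is stated in §\ref{app:peskun-is} purely as a citation to \citep[Thms.~7 \& 13]{vihola-helske-franks}, so any self-contained derivation is by definition a ``different route'' from what the paper offers. What you have done is essentially reconstruct the proof from the companion paper: linearise the ratio via $\xi_k(\bar f)=\xi_k(f)-\nu(f)\xi_k(1)$ and Slutsky, invoke Kipnis--Varadhan on the augmented chain, and then peel off the lag-$0$ conditional variance term using the fact (Lemma~\ref{lem:augmented-kernel}\eqref{lem:augmented-kernel-forget}) that one step of $\bar K$ replaces the $V$-coordinate by a fresh draw, so all lag-$k\geq1$ covariances collapse to those of $\mathbf{m}_{\bar f}$ under the base $K$. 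Your IS0 computation lands exactly on $\mathbb{V}^{\mathrm{IS}}_f=c_\xi^{-2}[\Var(K,\mathbf{m}_f)+\mu(v_{\bar f})]$ with $\mu(\mathsf{a})=1$, and your ISJ computation is also sound once the jump-chain identity $\Var(K,g)=\mu(\mathsf{a})\Var(\tilde K,\mathsf{a}^{-1}g)+\mu(g^2(1-\mathsf{a})/\mathsf{a})$ (for $\mu(g)=0$) is in hand --- I checked that the $\mathbf{m}_{\bar f}^2(1-\mathsf{a})/\mathsf{a}$ terms cancel and one is left with $\mu(\mathsf{a})c_\xi^{-2}[\Var(K,\mathbf{m}_f)+\mu(\mathsf{a}\tilde v_{\bar f})]$. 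You correctly flag the two real gaps in a from-scratch argument: (a) a rigorous change-of-time justification for the jump-chain variance identity (a renewal/random-index CLT argument, which is where \citep[Thm.~11]{vihola-helske-franks} does real work), and (b) tightness/uniform integrability to justify the ratio linearisation and the transition from the subprobability-kernel limit to the stated asymptotic variance. The closing paragraph slightly conflates the target formula with the downstream relation $\mathbb{V}^{\mathrm{ISJ}}_f=\mu(\mathsf{a})\mathbb{V}^{\mathrm{IS0}}_f+\tilde D_{\bar f}$ that is used in the proof of Theorem~\ref{thm:peskun-is}, but this is a presentational wrinkle rather than an error. What the paper's citation buys is brevity; what your derivation buys is transparency about exactly where the augmented-kernel forgetting property and the geometric holding-time moments enter the formula.
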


\begin{proof}[Proof of Theorem \ref{thm:peskun-is}]
We first note that
  $$
  \xi(f)
  \defeq
  \fr{\zeta(f)}{\eta(1)}
  =
  \frac{c_\zeta}{c_\eta}\cdot
  \frac{c_\eta}{c_\zeta} \frac{\zeta(1)}{\eta(1)}\cdot
  \frac{\zeta(f)}{\zeta(1)}
  =
  c_\xi \cv \hat{\zeta}(f).
  $$ 
By Slutsky's lemma applied to \eqref{is-estimator} in the IS0 case,
$$
  \mathbb{V}^{IS0}_f
  =
  \Var\big(\bar{K}, \xi(f)\big)/c_\xi^2
  =
  \Var\big(\bar{K}, w\hat{\zeta}(f)\big).
  $$
  Then \eqref{thm:peskun-is-one} follows by Theorem \ref{thm:peskun}, and \eqref{thm:peskun-is-two} by Theorem \ref{thm:peskun-augmented}, for the IS0 case.
To prove the result for the ISJ case, we first note the relationship  
  \begin{equation*}\label{eq:is-asvar-relationship}
    \mathbb{V}^{ISJ}_f
    =
    \mu(\ga) c_\xi^{-2}\Big[
      \Var(K, \mathbf{m}_{f}) + \mu(v_{\bar{f}}) + \mu(\ga \tilde{v}_{\bar{f}} - v_{\bar{f}})\Big]
    =
    \mu(\ga) \mathbb{V}^{IS0}_f + \tilde{D}_{\bar{f}},
  \end{equation*}
from Proposition \ref{prop:is-asvar}.  The result then follows from the IS0 case.
  \end{proof}


\section{Proofs for main comparison application}\label{app:dirichlet}
\begin{lemma}\label{lem:dirichlet}
  Let $(K,L)$ be the pair of kernels as in \eqref{thm:comparison-dapr}, \eqref{thm:comparison-da}, or \eqref{thm:comparison-pm-parent} of Theorem \ref{thm:comparison}, where we assume that $(\bar{\mu},\nu,\cv)$ satisfies Assumption \ref{a:is}, with $(\bar{K},\bar{\mu})$ the  $Q^{(V)}$-augmentation of $K$ \eqref{is-kernel}.  Then, the following hold:
  \begin{enumerate}[(i)]
  \item \label{lem:dirichlet-one} If $\norm{\cv}_{L^\infty(\bar{\mu})} <\infty$, then $\cE_L(g) \le \norm{\cv}_{L^\infty(\bar{\mu})} \cE_{ \bar{K}}(g)$ for all $g\in L^2(\bar{\mu})$.
    \newline If $\lc\defeq \bar{\mu}$-$\essinf w$, then $\cE_L(g) \ge \lc\, \cE_{ \bar{K}}(g)$ for all $g\in L^2(\bar{\mu})$. 
    \item \label{lem:dirichlet-two} If $\norm{\cv^*}_{L^\infty(\mu)} <\infty$, then $\cE_L(g) \le \norm{\cv^*}_{L^\infty(\mu)} \cE_{ \bar{K}}(g)$ for all $g\in L^2(\mu)$.
  \end{enumerate}
    \end{lemma}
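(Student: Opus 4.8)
The plan is to compute both Dirichlet forms through the reversible-kernel identity \eqref{eq:dirichlet-reversible} — so that the Dirac/rejection components of $L$ and $\bar K$ drop out — and then to reduce each inequality to a pointwise comparison of the surviving ``move'' densities after the substitution $\pi(\ud x)=\cv(x)\,\bar\mu(\ud x)$. This substitution is legitimate because $\cv$ is the Radon-Nikod\'{y}m derivative of the common invariant law of $L$ (namely $\pi$ in both \eqref{thm:comparison-da} and \eqref{thm:comparison-pm-parent}) with respect to $\bar\mu$; I will also use freely the identities $\cv=c_\xi\inv\xi(1)$, $\cv^*=c_\xi\inv\mathbf{m}_1$, $\zeta(1)=c_\xi\,\eta(1)\,\cv$ and $c_\zeta=c_\xi c_\eta$ from Remark \ref{thm:peskun-is-remark}\eqref{thm:peskun-is-remark-definitions} and \S\ref{sec:schemes}.

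For case \eqref{thm:comparison-da} I would first note that $L=K^{\DA}$ of \eqref{da-kernel} is precisely the DA correction \eqref{deterministic-da} of the $\bar\mu$-reversible kernel $\bar K$ with weight $\cv$, because $\xi'(1)/\xi(1)=\cv(x')/\cv(x)$. Since $L$ is $\pi$-reversible and its Dirac part vanishes, the substitution $\pi(\ud x)=\cv(x)\bar\mu(\ud x)$ turns $\cE_L(g)$ into
\[
\tfrac12\int\bar\mu(\ud x)\bar K_x(\ud x')\,\min\{\cv(x),\cv(x')\}\,(g(x)-g(x'))^2 .
\]
As both marginals of $\bar\mu(\ud x)\bar K_x(\ud x')$ equal $\bar\mu$, we have $\cv(x),\cv(x')\in[\lc,\norm{\cv}_\infty]$ for a.e.\ pair $(x,x')$, and bounding $\min\{\cv(x),\cv(x')\}$ above by $\norm{\cv}_\infty$ and below by $\lc$ gives \eqref{lem:dirichlet-one}. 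For \eqref{lem:dirichlet-two} it suffices to take $g\in L^2(\mu)$, i.e.\ $g=g(\gt,u)$; then, by the joint concavity of $(a,b)\mapsto\min\{a,b\}$, I would integrate out $v$ and $v'$ against the kernels $Q^{(V)}$ occurring in $\bar\mu$ and in $\bar K$, using $\int Q^{(V)}_{\gt u}(\ud v)\,\cv(\gt,u,v)=c_\xi\inv\mathbf{m}_1(\gt,u)=\cv^*(\gt,u)$, to obtain $\cE_L(g)\le\tfrac12\int\mu(\ud\gt,\ud u)K_{\gt u}(\ud\gt',\ud u')\min\{\cv^*(\gt,u),\cv^*(\gt',u')\}(g(\gt,u)-g(\gt',u'))^2\le\norm{\cv^*}_\infty\cE_K(g)$, and $\cE_K(g)=\cE_{\bar K}(g)$ for such $g$ by Lemma \ref{lem:augmented-kernel}\eqref{lem:augmented-kernel-forget}.

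For case \eqref{thm:comparison-pm-parent} the structural point is that $L=P$ of \eqref{pm-parent-kernel} is the Metropolis--Hastings kernel with proposal $\bar q_x(\ud x')=q_\gt(\ud\gt')Q^{(U)}_{\gt'}(\ud u')Q^{(V)}_{\gt'u'}(\ud v')$ and target $\pi$ (acceptance ratio $\ar{V}{x}{x'}$), whereas the \emph{accepting} part of $\bar K$ — the part where the inner PM kernel $K$ of \eqref{is-pm-kernel} moves — is driven by this same proposal $\bar q$ but with acceptance ratio $\ar{U}{x}{x'}$. Writing both move measures against the common symmetric base measure $\ud\gt\,Q^{(U)}_\gt(\ud u)Q^{(V)}_{\gt u}(\ud v)\,\ud\gt'\,Q^{(U)}_{\gt'}(\ud u')Q^{(V)}_{\gt'u'}(\ud v')$ (absorbing $q_\gt(\gt')$ and the normalisers into the minima), the scalar weight in $\cE_P(g)$ is $c_\zeta\inv\min\{\zeta(1)q_\gt(\gt'),\zeta'(1)q_{\gt'}(\gt)\}$ and the one in the accepting part of $\cE_{\bar K}(g)$ is $c_\eta\inv\min\{\eta(1)q_\gt(\gt'),\eta'(1)q_{\gt'}(\gt)\}$. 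Using $\zeta(1)=c_\xi\eta(1)\cv$ and its primed analogue with $\lc\le\cv\le\norm{\cv}_\infty$, $c_\zeta=c_\xi c_\eta$, and monotonicity of $\min$, the first weight lies between $\lc$ and $\norm{\cv}_\infty$ times the second. Because the remaining (Dirac/refresh) part of $\bar K$ adds a nonnegative quantity to $\cE_{\bar K}(g)$, this yields the two upper bounds — \eqref{lem:dirichlet-one} directly, and \eqref{lem:dirichlet-two} after first integrating out $v,v'$ by the concavity of $\min\{1,\cdot\}$ and $\min\{\cdot,C\}$, which replaces $\zeta(1)$ by $\eta(1)\mathbf{m}_1=c_\xi\eta(1)\cv^*$ — while the lower bound in \eqref{lem:dirichlet-one} comes from the same scalar comparison bounded from below, after accounting for the refresh part of $\bar K$.

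The step I expect to be the main obstacle is precisely this accounting of the Dirac/refresh pieces in case \eqref{thm:comparison-pm-parent}: $P$ proposes a completely fresh triple $(\gt',u',v')$, whereas $\bar K$ first runs an inner Metropolis--Hastings step on $(\gt,u)$ and only then refreshes $v$, so the ``rejection'' mass of $\bar K$ still carries $v$-movement. What makes the comparison go through is the observation that $P$ and the accepting part of $\bar K$ are driven by the identical underlying proposal $\bar q$, which collapses everything to the pointwise inequality between $c_\zeta\inv\min\{\zeta(1)q_\gt(\gt'),\zeta'(1)q_{\gt'}(\gt)\}$ and a constant multiple of $c_\eta\inv\min\{\eta(1)q_\gt(\gt'),\eta'(1)q_{\gt'}(\gt)\}$, together with keeping track of the signs of the leftover terms.
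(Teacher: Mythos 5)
Your overall route is the paper's: substitute $\pi(\ud x)=\cv(x)\,\bar\mu(\ud x)$ in the move part of $\cE_L$, rewrite the acceptance weight via $\cv(x)\ar{V}{x}{x'}=\cv(x')\ar{U}{x}{x'}$ (for $L=P$) or $\xi'(1)/\xi(1)=\cv(x')/\cv(x)$ (for $L=K^{\DA}$), bound the resulting scalar factor pointwise by $\lc$ and $\norm{\cv}_\infty$, and for \eqref{lem:dirichlet-two} first integrate out $v,v'$ by concavity of $\min$ to replace $\cv$ by $\cv^*$ and then use $\cE_K(g)=\cE_{\bar K}(g)$ for $g\in L^2(\mu)$. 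For case \eqref{thm:comparison-da} and for both cases of \eqref{lem:dirichlet-two} your argument matches the paper's essentially line by line and is complete.

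The step you flag as the obstacle — ``accounting for the refresh part of $\bar K$'' for the \emph{lower} bound in \eqref{lem:dirichlet-one} under case \eqref{thm:comparison-pm-parent} — is indeed the crux, and your sentence does not close it. The scalar comparison gives only
\begin{equation*}
\cE_P(g)\ \ge\ \lc\cdot\tfrac12\int\bar\mu(\ud x)\,q_\gt(\ud\gt')Q^{(U)}_{\gt'}(\ud u')Q^{(V)}_{\gt'u'}(\ud v')\min\bigl\{1,\ar{U}{x}{x'}\bigr\}\bigl(g(x)-g(x')\bigr)^2,
\end{equation*}
i.e.\ $\lc$ times only the \emph{accepting} part of $\cE_{\bar K}(g)$; the extra nonnegative $v$-refresh contribution from the Dirac branch $[1-\ga(\gt,u)]\gd_{\gt u}(\ud\gt',\ud u')\,Q^{(V)}_{\gt u}(\ud v')$ of $\bar K$ appears in $\cE_{\bar K}(g)$ but not in $\cE_P(g)$, because $P$'s rejection keeps $v$ fixed whereas $\bar K$ always resamples $v$. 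For $g\in L^2(\bar\mu)$ that genuinely depends on $v$ this term is strictly positive, and the claimed $\cE_P(g)\ge\lc\,\cE_{\bar K}(g)$ can fail even with $\lc>0$. Concretely: $\mathbf{T}=\{0,1\}$ with flip proposal, $\mathbf{U}$ trivial, $\eta_0(1)=1$, $\eta_1(1)=0.1$, $\zeta(1)\equiv\eta(1)$ (so $\cv\equiv1$ and $\lc=1$), $Q^{(V)}$ an independent uniform bit, $g$ that bit: then $\cE_{\bar K}(g)=1/4$ but $\cE_P(g)=1/22$. (Case \eqref{thm:comparison-da} is immune because the $v$-refresh contribution enters $\cE_{K^{\DA}}$ and $\cE_{\bar K}$ symmetrically through $K$'s own Dirac branch, and the upper bounds are immune because the extra nonnegative term of $\cE_{\bar K}$ only helps; the paper's own terse ``we then use the uniform bounds\ldots'' conclusion for $L=P$ glosses over exactly the same point, which does not rescue your argument.) To make the statement honest you would need to restrict the lower bound in \eqref{lem:dirichlet-one} to case \eqref{thm:comparison-da}, or to $g\in L^2(\mu)$, or accept $\lc=0$ in case \eqref{thm:comparison-pm-parent}.
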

    
    \begin{proof}
  This is done separately below for the cases $L\in\{P, K^{\DApr},K^{\DAg}\}$.
Set $G\defeq [g(x) - g(x')]^2$, $g\in L^2(\bar{\mu})$, with $x,x'\in \mathbf{X}\defeq \mathbf{T}\times\mathbf{U}\times \mathbf{V}$.  Then, 
    \begin{align*}\allowdisplaybreaks
      \cE_{P}(g)
      &=
      \fr{1}{2} \int \pi(\ud x) q_{\gt}(\ud \gt') Q_{\gt'}^{(U)}(\ud u') Q_{\gt' u'}^{(V)}(\ud v')
      \min\big\{
1, \ar{V}{x}{x'}
      \big\} G
      \\&=
\fr{1}{2} \int \bar{\mu}(\ud x) q_{\gt}(\ud \gt') Q_{\gt'}^{(U)}(\ud u') Q_{\gt' u'}^{(V)}(\ud v')
      \min\big\{
\cv(x), \cv(x) \ar{V}{x}{x'}
\big\} G
\\&=
\fr{1}{2} \int \bar{\mu}(\ud x) q_{\gt}(\ud \gt') Q_{\gt'}^{(U)}(\ud u') Q_{\gt' u'}^{(V)}(\ud v')
      \min\big\{
      \cv(x), \cv(x') \ar{U}{x}{x'}
\big\} G,
    \end{align*}
because $\cv(x) \ar{V}{x}{x'} = \cv(x') \ar{U}{x}{x'}$, well-defined on the set of interest.
    We then use the bounds $\lc\le \cv \le \norm{\cv}_{L^\infty(\bar{\mu})}$ $\bar{\mu}$-a.e.~to conclude \eqref{lem:dirichlet-one} for $L=P$.  

Now assume $g\in L^2(\mu)$, so $G= [g(\gt, u ) - g(\gt', u')]^2$.  By Jensen's inequality and concavity of $(x,x')\mapsto \min\{x,x'\}$ when one of $x,\, x'\ge 0$ is held fixed, 
\begin{align*}
  \cE_{P}(g)
  &=
 \fr{1}{2} \int \bar{\mu}(\ud x)  q_{\gt}(\ud \gt') Q_{\gt'}^{(U)}(\ud u')G \int Q_{\gt' u'}^{(V)}(\ud v') 
      \min\big\{
\cv(x), \cv(x') \ar{U}{x}{x'}
\big\} 
\\&\le
\fr{1}{2} \int \bar{\mu}(\ud x)  q_{\gt}(\ud \gt') Q_{\gt'}^{(U)}(\ud u') G
      \min\big\{
\cv(x), \cv^*(\gt',u') \ar{U}{x}{x'}
\big\}. 
\end{align*}
Here, we have used that $\ar{U}{x}{x'}$ does not depend on $v'\in \mathbf{V}$, and that
$$
\int w(x) Q^{(V)}_{\gt u}(\ud v)
=
\fr{c_\eta}{c_\zeta}\frac{1}{\eta(1)} \int \zeta(1) Q^{(V)}_{\gt u}(\ud v) 
=
\fr{\pi^*(\ud \gt, \ud u)}{\mu(\ud \gt, \ud u)}
=
w^*(\gt, u).
$$
We then apply Jensen again, this time integrating out $v\in\mathbf{V}$, to get,
\begin{align*}
  &\cE_P(g)
  \\&\le
\fr{1}{2} \int \ud \gt Q_{\gt}^{(U)}(\ud u) \fr{\eta(1)}{c_\eta}  q_{\gt}(\ud \gt') Q_{\gt'}^{(U)}(\ud u') G
\int Q_{\gt u}^{(V)}(\ud v)
      \min\big\{
\cv(x), \cv^*(x') \ar{U}{x}{x'}
\big\}
\\&\le
\fr{1}{2} \int \ud \gt Q_{\gt}^{(U)}(\ud u) \fr{\eta(1)}{ c_\eta}  q_{\gt}(\ud \gt') Q_{\gt'}^{(U)}(\ud u') 
      \min\big\{
\cv^*(\gt,u), \cv^*(\gt',u') \ar{U}{x}{x'}
\big\} G.
\end{align*}
We then apply the bound $w^* \le \norm{\cv^*}_{L^\infty(\mu)}$ $\mu$-a.e.~and use the fact that $\cE_K(g) = \cE_{\bar{K}}(g)$ for all $g\in L^2(\mu)$ to conclude \eqref{lem:dirichlet-two} for $L=P$.  

Now consider the case $L=K^{\DApr}$.  With $G\defeq [g(x) - g(x')]^2$ on $\mathbf{X}^2$, 
  \begin{align*}\allowdisplaybreaks
    \cE_{K^{\DApr}}(g)
    &=
    \fr{1}{2} \int \pi(\ud x) q_{\theta}(\ud \theta')Q^{(U)}_{\theta'}(\ud u') \alpha(\theta, u; \theta',u') Q_{\gt' u'}^{(V)}(\ud v')
    \min\Big\{
      1, \frac{\cv(x')}{\cv(x)} 
      \Big\}
      G
      \\
      &=
      \fr{1}{2} \int \bar{\mu}(\ud x) q_{\theta}(\ud \theta')Q^{(U)}_{\theta'}(\ud u') \alpha(\theta, u; \theta',u')  Q_{\gt' u'}^{(V)}(\ud v')
    \min\big\{
      \cv(x) , \cv(x')
      \big\}
      G,
  \end{align*}
  for all $g\in L^2(\bar{\mu})$.  As before, this allows us to conclude \eqref{lem:dirichlet-one} for $L=K^{\DApr}$.  

  Now assume $g\in L^2(\mu)$, with $G\defeq [g(\gt, u ) - g(\gt', u')]^2$.  By Jensen,
  \begin{align*}
    \cE_{K^{\DApr}}(g)
    &\le
        \fr{1}{2} \int \bar{\mu}(\ud x) q_{\theta}(\ud \theta')Q^{(U)}_{\theta'}(\ud u') \alpha(\theta, u; \theta',u')  G 
    \min\big\{
    \cv(x),  \cv^*(\gt',u')
    \big\}
\\&\le
    \fr{1}{2} \int \mu(\ud \gt, \ud u) q_{\theta}(\ud \theta')Q^{(U)}_{\theta'}(\ud u') \alpha(\theta, u; \theta',u')  G 
    \min\big\{
\cv^*(\gt,u), \cv^*(\gt',u')
\big\},
\end{align*}
which allows us to conclude \eqref{lem:dirichlet-two} as before.

Now consider the case $L=K^{\DAg}$.  With $G\defeq [g(x) - g(x')]^2$ on $\mathbf{X}^2$, 
  \begin{align*}\allowdisplaybreaks
    \cE_{K^{\DAg}}(g)
    &=
    \fr{1}{2} \int \pi(\ud x) K_{\gt u}(\ud \gt', \ud u') Q_{\gt' u'}^{(V)}(\ud v')
    \min\Big\{
      1, \frac{\cv(x')}{\cv(x)} 
      \Big\}
      G
      \\
      &=
      \fr{1}{2} \int \bar{\mu}(\ud x) K_{\gt u}(\ud \gt', \ud u') Q_{\gt' u'}^{(V)}(\ud v')
    \min\big\{
      \cv(x) , \cv(x')
      \big\}
      G,
  \end{align*}
  for all $g\in L^2(\bar{\mu})$.  As before, this allows us to conclude \eqref{lem:dirichlet-one} for $L=K^{\DAg}$.  

  Now assume $g\in L^2(\mu)$, with $G\defeq [g(\gt, u ) - g(\gt', u')]^2$.  By Jensen,
  \begin{align*}
    \cE_{K^{\DAg}}(g)
    &\le
        \fr{1}{2} \int \bar{\mu}(\ud x) K_{\gt u }(\ud \gt', \ud u') G 
    \min\big\{
    \cv(x),  \cv^*(\gt',u')
    \big\}
\\&\le
    \fr{1}{2} \int \mu(\ud \gt, \ud u) K_{\gt u }(\ud \gt', \ud u') G 
    \min\big\{
\cv^*(\gt,u), \cv^*(\gt',u')
\big\},
\end{align*}
which allows us to conclude \eqref{lem:dirichlet-two} as before.
    \end{proof}
    
\begin{lemma}\label{lem:comparison-helper}
  With assumptions as in Lemma \ref{lem:dirichlet}, and additionally assuming that $K$ and $L$ determine Harris ergodic chains, the following hold:
  \begin{enumerate}[(i)]
  \item \label{prop:comparison-helper-one} If $\norm{w}_{L^\infty(\bar{\mu})} <\infty$, then $(\bar{\mu}, \pi ,\cv, \bar{K}, L, \lc, \norm{w}_{L^\infty(\bar{\mu})} )$ satisfies Assumption \ref{a:peskun}.
  \item \label{prop:comparison-helper-two} If $\norm{w^*}_{L^\infty(\mu)} <\infty$, then $(\bar{\mu}, \pi , \cv , \bar{K}, L, 0, \norm{ w^* }_{L^\infty(\mu)})$ satisfies Assumption \ref{a:peskun-marginal}. 
    \end{enumerate}
  \end{lemma}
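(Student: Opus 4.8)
The plan is to unwind the definitions of Assumption \ref{a:peskun} and Assumption \ref{a:peskun-marginal} and to verify each of their clauses for the stated heptuples, with $\bar{\mu}$ in the r\^{o}le of ``$\mu$'', $\pi$ in the r\^{o}le of ``$\nu$'', and $\bar{K}$ in the r\^{o}le of ``$K$''. Almost all of the analytic content is already packaged in Lemma \ref{lem:dirichlet}, so what remains is essentially bookkeeping together with the standard behaviour of $Q^{(V)}$-augmented kernels.

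First I would settle the measure-theoretic clauses. That $(\bar{\mu},\pi,\cv)$ satisfies Assumption \ref{a:is} is part of the standing hypothesis of Lemma \ref{lem:dirichlet}; concretely, reading off the densities in Assumption \ref{a:pm-kernels} one has $\der{\pi}{\bar{\mu}} = c_\xi\inv \zeta(1)/\eta(1) = c_\xi\inv \xi(1) = \cv$, which is well defined precisely by the support condition \ref{a:pm-kernels}\eqref{a:pm-kernels-support}. For part \eqref{prop:comparison-helper-two}, the marginalisation remark preceding Assumption \ref{a:peskun-marginal} then gives that $(\bar{\mu}^*,\pi^*,\cv^*)$ satisfies Assumption \ref{a:is} on $\mathbf{T}\times\mathbf{U}$, where $\bar{\mu}^* = \mu$ (marginalising $\bar{\mu}$ over $\mathbf{V}$) and $\cv^*(\gt,u) = \int \cv(x)\, Q^{(V)}_{\gt u}(\ud v)$ is the marginal weight computed in the proof of Lemma \ref{lem:dirichlet}.

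Next I would check reversibility and Harris ergodicity. In both cases \eqref{thm:comparison-da} and \eqref{thm:comparison-pm-parent} the base kernel $K$ is $\mu$-reversible (in case \eqref{thm:comparison-pm-parent} it is the approximate PM kernel \eqref{is-pm-kernel}), so its $Q^{(V)}$-augmentation $\bar{K}$ is $\bar{\mu}$-reversible: since $\bar{\mu}(\ud x)\,\bar{K}_x(\ud x') = \mu(\ud\gt,\ud u)\,K_{\gt u}(\ud\gt',\ud u')\,Q^{(V)}_{\gt u}(\ud v)\,Q^{(V)}_{\gt' u'}(\ud v')$, the $\mu$-reversibility of $K$ renders this symmetric in $x = (\gt,u,v)$ and $x' = (\gt',u',v')$ (alternatively, invoke Lemma \ref{lem:augmented-kernel}). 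Harris ergodicity of $\bar{K}$ follows from that of $K$, which is assumed, by Lemma \ref{lem:augmented-kernel}; and $L$ is $\pi$-reversible by construction ($L = K^{\DA}$ in case \eqref{thm:comparison-da}, and $L = P$, a Metropolis-Hastings kernel, in case \eqref{thm:comparison-pm-parent}) and Harris ergodic by assumption.

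Finally the quantitative clauses. For \eqref{prop:comparison-helper-one}, Assumption \ref{a:peskun}\eqref{a:peskun-one} reads $\lc\,\cE_{\bar{K}}(g) \le \cE_L(g) \le \norm{\cv}_\infty\,\cE_{\bar{K}}(g)$ for all $g\in L^2(\bar{\mu})$, which is exactly Lemma \ref{lem:dirichlet}\eqref{lem:dirichlet-one} with $\lc = \bar{\mu}$-$\essinf\cv$; and Assumption \ref{a:peskun}\eqref{a:peskun-two}, namely $\lc\le\cv\le\norm{\cv}_\infty$ $\bar{\mu}$-a.e., is immediate from the definitions of $\lc$ and $\norm{\cv}_\infty$. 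For \eqref{prop:comparison-helper-two}, Assumption \ref{a:peskun-marginal}\eqref{a:peskun-marginal-one} reads $0 \le \cE_L(g)\le\norm{\cv^*}_\infty\,\cE_{\bar{K}}(g)$ for all $g\in L^2(\bar{\mu}^*) = L^2(\mu)$: the left inequality holds since $L$ is reversible, so $\cE_L \ge 0$, and the right inequality is Lemma \ref{lem:dirichlet}\eqref{lem:dirichlet-two}; and Assumption \ref{a:peskun-marginal}\eqref{a:peskun-marginal-two}, namely $0\le\cv^*\le\norm{\cv^*}_\infty$ $\mu$-a.e., is immediate. The only real obstacle, if one may call it that, is keeping straight which space carries which measure --- in particular that ``$L^2(\mu^*)$'' of Assumption \ref{a:peskun-marginal} here denotes $L^2(\mu)$ with $\mu$ living on $\mathbf{T}\times\mathbf{U}$ --- so that the two parts of Lemma \ref{lem:dirichlet} line up with the two parts of the claim.
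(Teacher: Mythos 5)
Your proof is correct and takes the same approach as the paper's (which is a single-sentence appeal to Lemma~\ref{lem:dirichlet}); you have merely unpacked the verification that the remaining clauses of Assumptions~\ref{a:peskun} and~\ref{a:peskun-marginal}---the Radon--Nikod\'ym condition, the reversibility and Harris ergodicity of $\bar{K}$ via Lemma~\ref{lem:augmented-kernel}, and the pointwise bounds on $\cv$ and $\cv^*$---hold by construction.
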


\begin{proof}
 Lemma \ref{lem:dirichlet}\eqref{lem:dirichlet-one} and \eqref{lem:dirichlet-two} imply respectively \eqref{prop:comparison-helper-one} and \eqref{prop:comparison-helper-two}. 
\end{proof}
    
\begin{proof}[Proof of Theorem \ref{thm:comparison}]
The support condition Assumption \ref{a:pm-kernels}\eqref{a:pm-kernels-support} implies that $(\bar{\mu},\pi, \cv)$ satisfies Assumption \ref{a:is}.
Under conditions \eqref{thm:comparison-dapr}, \eqref{thm:comparison-da}, or  \eqref{thm:comparison-pm-parent}, the result follows by Lemma \ref{lem:comparison-helper} and Theorem \ref{thm:peskun-is}.

Assume condition \eqref{thm:comparison-pm}.  Because $g\defeq \hat{\zeta}(f)$ is a function on $\mathbf{X}=\mathbf{T}\times\mathbf{U}\times\mathbf{V}$ which does not depend on the second coordinate, $P^k g(\gt, u, v) = M^k g(\gt,v)$ for all $(\gt,u,v)\in \mathbf{X}$ and $k\ge 1$.  Therefore, $\Var(M,g)= \Var(P,g)$.
\end{proof}

\begin{proof}[Proof of Proposition \ref{prop:das}]
For any $g\in L^2(\pi)$, set $G\defeq [g(\theta,u,v) - g(\theta',u',v')]^2$.  We have
\begin{equation*}
  \cE_{K^\DAg}(g)
  =
  \cE_{K^\DApr}(g) +
  \frac{1}{2}\int \pi(\ud x) r_K(\theta,u) Q_{\theta u}^{(V)}(\ud v') \min\bigg\{1, \frac{\xi'(1)}{\xi(1)} \bigg\}G,
\end{equation*}
so \eqref{prop:das-dag} follows from the covariance ordering.
For \eqref{prop:das-equal}, we have
$$
K_x^{\DAg}(\ud x') = K_x^{\DApr}(\ud x') - (1-\alpha_{\DApr}(x)) \delta_x(\ud x')
+ r_K(\theta,u)\delta_x(\ud x') +(1-\alpha_{\DAg}(x))\delta_x(\ud x'),
$$
from which we conclude, since
$\alpha_{\DAg}(x) = \alpha_{\DApr}(x) + r_K(\theta,u).$
\end{proof}

\begin{proof}[Proof of Proposition \ref{prop:dag}]
  \eqref{prop:dag-less} is essentially well-known \citep[see][]{banterle-grazian-lee-robert}, and \eqref{prop:dag-det} is straightforward to prove.
  \end{proof}


\section{Properties of augmented kernels and ergodicity}\label{app:misc}  
For measurable functions $V:\mathbf{X}\ra [1,\infty)$ and $f:\mathbf{X}\ra \mathbb{R}$, we set
  $$
  \norm{\nu}_V \defeq \sup_{f:|f|\le V} \nu (f),
  \qquad
  \text{and}
  \qquad
  \norm{f}_V \defeq \sup_{x\in \mathbf{X}} \fr{|f(x)|}{V(x)}
  $$
  for any finite signed measure $\nu$ on $\mathbf{X}$.
\begin{definition}
  A $\mu$-invariant Markov chain $K$ on $\mathbf{X}$ is said to be
  \begin{enumerate}[(i)]
  \item \emph{$V$-geometrically ergodic} if there is a function $V:\mathbf{X}\ra [1,\infty)$ such that
    $$
\norm{K^{n}(x,\cdot) - \mu(\cdot)}_V \le R V(x) \rho^n
$$
for all $n\ge 1$, where $R<\infty$ and $\rho\in (0,1)$ are constants.
\item \emph{uniformly ergodic} if $K$ is $1$-geometrically ergodic.
    \end{enumerate}
  \end{definition}

  \begin{lemma}\label{lem:augmented-kernel}
    Let $K_{\gt y}(\ud \gt', \ud y')  = \dot{K}_{\gt}(\ud \gt')Q_{\gt'}(\ud y')$ be an augmented kernel on $\mathbf{T}\times\mathbf{Y}$.  
    \begin{enumerate}[(i)]
    \item \label{lem:augmented-kernel-measures}
      The invariant measures of $K$ and $\dot{K}$ satisfy $(\mu K = \mu \Longrightarrow \mu^* \dot{K}=\mu^*)$, and
      $
      (\dot{\mu} \dot{K}=\dot{\mu} \Longrightarrow \mu K=\mu),
      \text{ where }
      \mu(\ud \gt, \ud y) \defeq \dot{\mu}(\ud \gt) Q_{\gt}(\ud y).
      $
      These implications hold with invariance replaced with reversibility.
\item \label{lem:augmented-kernel-harris} $K$ is $\mu$-Harris ergodic $\iff$ $\dot{K}$ is $\dot{\mu}$-Harris ergodic.
\item \label{lem:augmented-kernel-forget} For all $f\in L^1(\mu)$ and $n\ge 1$,
      $
K^n f(\gt,y) = \dot{K}^n(Qf)(\gt).
$      
\item \label{lem:augmented-kernel-aperiodic}
  $K$ is aperiodic $\iff$ $\dot{K}$ is aperiodic.  $K$ is positive $\iff$ $\dot{K}$ is positive.
\item\label{lem:augmented-kernel-geometric} $K$ is geometrically ergodic $\iff$ $\dot{K}$ is geometrically ergodic.
\item\label{lem:augmented-kernel-uniform} $K$ is uniformly ergodic $\iff$ $\dot{K}$ is uniformly ergodic.
      \end{enumerate}
  \end{lemma}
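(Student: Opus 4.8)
The plan is to make everything flow from part \eqref{lem:augmented-kernel-forget}, which I would prove first by induction on $n$. For $n=1$, the definition of the augmented kernel gives, for $f\in L^1(\mu)$,
$$
Kf(\gt,y)=\int\dot{K}_\gt(\ud\gt')Q_{\gt'}(\ud y')f(\gt',y')=\int\dot{K}_\gt(\ud\gt')(Qf)(\gt')=\dot{K}(Qf)(\gt).
$$
For the inductive step, $K^nf$ is by hypothesis a function $g$ of $\gt$ alone, and since $Qg=g$ for such $g$ we get $K^{n+1}f=Kg=\dot{K}(Qg)=\dot{K}g=\dot{K}^{n+1}(Qf)$. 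I would then immediately record two consequences for repeated use: the ``rectangle'' identity $K^n((\gt,y),A\times\mathbf{Y})=\dot{K}^n(\gt,A)$ (so in particular $K^n((\gt,y),\cdot)$ does not depend on $y$ for $n\ge1$), and, pairing the $n=1$ identity with $f$, the reduction $\inner{f}{Kf}_\mu=\inner{Qf}{\dot{K}(Qf)}_{\dot\mu}$, valid for all $f\in L^2(\mu)$.

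For \eqref{lem:augmented-kernel-measures}, the forward implication is just $\mu K(A\times\mathbf{Y})=\mu^*\dot{K}(A)$ (put $B=\mathbf{Y}$ above), while the converse is the one-line computation that integrates out the $y$-variable in $\int\dot\mu(\ud\gt)Q_\gt(\ud y)\dot{K}_\gt(\ud\gt')Q_{\gt'}(B)$ and then uses $\dot\mu\dot{K}=\dot\mu$; reversibility is the observation that $\dot\mu(\ud\gt)Q_\gt(\ud y)\dot{K}_\gt(\ud\gt')Q_{\gt'}(\ud y')$ is symmetric under $(\gt,y)\leftrightarrow(\gt',y')$ precisely when $\dot\mu(\ud\gt)\dot{K}_\gt(\ud\gt')$ is. Part \eqref{lem:augmented-kernel-harris} then follows: $\mu$-invariance is \eqref{lem:augmented-kernel-measures}, and $\psi$-irreducibility and Harris recurrence transfer along the correspondence $\psi(\ud\gt,\ud y)=\dot\psi(\ud\gt)Q_\gt(\ud y)$ using the rectangle identity — for recurrence via a conditional Borel--Cantelli argument, since a visit of the $\dot{K}$-chain to $\{\gt:Q_\gt(\{y:(\gt,y)\in A\})>1/m\}$ carries probability exceeding $1/m$ of landing in $A$; I would cite \cite{meyn-tweedie} for the underlying machinery.

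For the positivity half of \eqref{lem:augmented-kernel-aperiodic}, the reduction $\inner{f}{Kf}_\mu=\inner{Qf}{\dot{K}(Qf)}_{\dot\mu}$ does it: nonnegativity for all $f\in L^2(\mu)$ is equivalent to nonnegativity for all $g=Qf$, and every $g\in L^2(\dot\mu)$ arises this way via $f=g\circ\pr$, where $\pr:\mathbf{X}\to\mathbf{T}$ is the projection and $\norm{g\circ\pr}_\mu=\norm{g}_{\dot\mu}$. For aperiodicity one direction is easy: a $d$-cycle $\dot D_0,\dots,\dot D_{d-1}$ for $\dot{K}$ lifts to the $d$-cycle $(\dot D_i\times\mathbf{Y})$ for $K$. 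The converse is the one step I expect to be genuinely fiddly: given a $d$-cycle $D_0,\dots,D_{d-1}$ for $K$, I would set $g_i(\gt):=Q_\gt(\{y:(\gt,y)\in D_i\})$, note $\sum_ig_i=1$ and hence $\sum_i\dot{K}g_i=1$ $\dot\mu$-a.e., and use that $K((\gt,y),D_{i+1})=\dot{K}g_{i+1}(\gt)$ is independent of $y$ to deduce $g_i(\gt)>0\Rightarrow\dot{K}g_{i+1}(\gt)=1$; since the $\dot{K}g_j$ sum to one, the sets $\dot D_i:=\{g_i>0\}$ are then $\dot\mu$-a.e.\ disjoint, cover $\mathbf{T}$, and satisfy $\dot{K}_\gt(\dot D_{i+1})=1$ on $\dot D_i$, i.e.\ form a $d$-cycle for $\dot{K}$.

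Finally, \eqref{lem:augmented-kernel-geometric} and \eqref{lem:augmented-kernel-uniform} I would get from \eqref{lem:augmented-kernel-forget} by choosing the Lyapunov function carefully on each side. If $\dot{K}$ is $\dot V$-geometrically ergodic, take $V:=\dot V\circ\pr\ge1$; then $|Qf|\le\dot V$ whenever $|f|\le V$, and $K^nf(\gt,y)-\mu(f)=\dot{K}^n(Qf)(\gt)-\dot\mu(Qf)$ gives $\norm{K^n((\gt,y),\cdot)-\mu}_V\le\norm{\dot{K}^n(\gt,\cdot)-\dot\mu}_{\dot V}\le R\dot V(\gt)\rho^n=RV(\gt,y)\rho^n$. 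Conversely, if $K$ is $V$-geometrically ergodic, take $\dot V:=QV\ge1$; for $|\dot f|\le\dot V$ the tilted function $f(\gt,y):=\dot f(\gt)V(\gt,y)/\dot V(\gt)$ satisfies $|f|\le V$ and $Qf=\dot f$, so $\dot{K}^n\dot f(\gt)-\dot\mu(\dot f)=K^nf(\gt,y)-\mu(f)$ for every $y$; bounding by $RV(\gt,y)\rho^n$ and integrating $Q_\gt(\ud y)$ (the left side being independent of $y$) yields $\norm{\dot{K}^n(\gt,\cdot)-\dot\mu}_{\dot V}\le R\dot V(\gt)\rho^n$. Specialising to $V\equiv\dot V\equiv1$ (note $\dot V\circ\pr\equiv1$ and $Q1\equiv1$) then gives \eqref{lem:augmented-kernel-uniform}. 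In short, the only place I anticipate real care is the downward transfer of aperiodicity in \eqref{lem:augmented-kernel-aperiodic}; everything else is bookkeeping around the identity \eqref{lem:augmented-kernel-forget}.
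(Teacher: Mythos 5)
Your proof is correct, and for the only parts the paper actually argues (parts \eqref{lem:augmented-kernel-geometric}--\eqref{lem:augmented-kernel-uniform}) it follows essentially the same route: push Lyapunov functions through the identity $K^n f = \dot K^n(Qf)$ of part \eqref{lem:augmented-kernel-forget}. The paper cites \citep[Lem.~24]{vihola-helske-franks} for (\ref{lem:augmented-kernel-measures})--(\ref{lem:augmented-kernel-forget}) and declares \eqref{lem:augmented-kernel-aperiodic} ``straightforward,'' whereas you supply full arguments for all six parts; in particular, your $d$-cycle construction $\dot D_i := \{g_i > 0\}$ with $g_i(\gt) := Q_\gt(\{y : (\gt,y)\in D_i\})$ correctly handles the nontrivial downward transfer of aperiodicity, which the paper does not spell out, and your reduction $\inner{f}{Kf}_\mu = \inner{Qf}{\dot K(Qf)}_{\dot\mu}$ gives a clean treatment of positivity. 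The one substantive divergence is in the converse of \eqref{lem:augmented-kernel-geometric}: the paper sets $\dot V(\gt) := \inf_y V(\gt,y)$ and uses the constant-in-$y$ lift $f = g\circ\pr$, while you set $\dot V := QV$ and use the tilted lift $f(\gt,y) := \dot f(\gt)V(\gt,y)/\dot V(\gt)$, then average the resulting pointwise bound over $Q_\gt$. Both are valid; your choice has the minor technical advantage that $QV$ is manifestly measurable (a pointwise infimum over an uncountable $\mathbf{Y}$ would in general require an essential infimum to be well-defined), at the cost of a slightly larger Lyapunov function.
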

\begin{proof}
  (\ref{lem:augmented-kernel-measures}--\ref{lem:augmented-kernel-forget}) are \citep[Lem.~21]{vihola-helske-franks}.  Proof of \eqref{lem:augmented-kernel-aperiodic} is straightforward.

  For \eqref{lem:augmented-kernel-geometric}, consider first the case that $\dot{K}$ is $\dot{V}$-geometrically ergodic:
  $$
  \sup_{|f|\le \dot{V}}\abs{ \dot{K}^n(f)(\gt) - \dot{\mu}(f)}\le R \dot{V}(\gt) \rho^n,
  \qquad
  n\ge 1,
$$
with $\dot{V}:\mathbf{T}\ra [1,\infty)$ and constants $R$ and $\rho$.  Define $V(\gt, y) \defeq \dot{V}(\gt)$.  By \eqref{lem:augmented-kernel-forget},
  \begin{equation}\label{lem:augmented-kernel-geometric-Qf}
  \sup_{|f|\le V}\abs{K^n f(\gt,y)  - \mu(f)}
  =
  \sup_{|f|\le V} \abs{ \dot{K}^n (Qf)(\gt) - \dot{\mu}(Qf)}.
  \end{equation}
  Since $Qf(\gt, y) \le QV(\gt,y)= \dot{V}(\gt)$, we get that $K$ is $V$-geometrically ergodic.

  Assume now that $K$ is $V$-geometrically ergodic.  Using \eqref{lem:augmented-kernel-geometric-Qf}, we have,
  \begin{equation}\label{lem:augmented-kernel-geometric-g}
  \sup_{|f|\le V}\abs{K^n f(\gt,y)  - \mu(f)}
  =
  \sup_{g=Qf:|f|\le V} \abs{\dot{K}^n g(\theta) -\dot{\mu}(g)},
  \end{equation}
  for $n\ge 1$.  Define $\dot{V}(\gt)\defeq \inf_{y} V(\gt, y).$  For all $g$ such that $|g(\gt)|\le \dot{V}(\gt)$, set $f(\gt,y)\defeq g(\gt)$.  Then $|f|\le V$ and $Qf=g$.  By \eqref{lem:augmented-kernel-geometric-g}, $\dot{K}$ is $\dot{V}$-geometrically ergodic.  This proves \eqref{lem:augmented-kernel-geometric}, and \eqref{lem:augmented-kernel-uniform} follows from the form of $\dot{V}$ and $V$.
\end{proof}

\begin{proposition}\label{prop:not-geo}
Consider the PM parent kernel \eqref{pm-parent-kernel} and the $\DApr$ kernel $K^{\DApr}$ \eqref{dapr-kernel}, with $K$ as in \eqref{is-pm-kernel}.  If $\zeta(1)$ is not bounded, then PM parent and $K^{\DApr}$ are not $V$-geometrically ergodic.
  \end{proposition}

\begin{proof}
This is \cite[Thm.~8]{andrieu-roberts} for PM chains.  To prove that result for PM chains, or in particular for the PM parent chain \eqref{pm-parent-kernel}, \cite{andrieu-roberts} show that for all $\epsilon>0$, 
\begin{equation}\label{not-geo-epsilon}
  \nu\big( \mathbf{1}\{ \ga_{\PMP} \le \epsilon\}\big)>0.
\end{equation}
By \citep[Thm.~5.1]{roberts-tweedie}, one concludes that the PM parent is not $V$-geometrically ergodic \cite{andrieu-roberts}.  Moreover, from 
\begin{equation}\label{minmin}
\min\{
1, \ar{U}{x}{x'}
\}
\min\{
1, \cv(x')/\cv(x)
\}
\le
\min\{1, \ar{V}{x}{x'}
\},
\end{equation}
it follows that $\ga_{\DApr}(x)\le \ga_{\PMP}(x)$.  By \eqref{not-geo-epsilon}, one concludes that $K^{\DApr}$ also is not $V$-geometrically ergodic.  
\end{proof}

    \section{Toy examples of two extremes}\label{app:examples}
        \begin{figure}[t]
          \caption{Two versions and two behaviours}\label{fig:regions-is-da-better}
  \centering
\begin{subfigure}[b]{.5 \textwidth}\label{fig:region-da-better}
  $$
\begin{tikzpicture}[scale=.6] 
  \draw (0,0) ellipse(5cm and 2cm);
  \filldraw [lightgray] (-2,0) circle  (1cm);
   \path (2,0) node(a)  {$2$}
   (-2.5,0) node(b) [] {$0$}
   (-1.5,0) node(c) [] {$1$}
   (-2.2,-.6) node(d) [] {$\nu^*$}
      (0,-1) node(e) [] {$\mu^*$};
   \draw (-2,-1) -- (-2,1);
\end{tikzpicture}
$$
\caption{`PM/DA better' case}
  \end{subfigure}%
\begin{subfigure}[b]{.5 \textwidth}\label{fig:region-is-better}
$$
\begin{tikzpicture}[scale=.6] 
  \draw (0,0) ellipse(5cm and 2cm);
  \filldraw [lightgray] (2,0) circle  (1cm);
  \filldraw [lightgray] (-2,0) circle (1cm);
  \path (0,0) node(a)  {$1$}
  (2,0) node(b) [] {$2$}
  (-2,0) node(c) [] {$0$}
     (-2.2,-.6) node(d) [] {$\nu^*$}
      (0,-1) node(e) [] {$\mu^*$}
     (1.8,-.6) node(d) [] {$\nu^*$};
\end{tikzpicture}
$$
\caption{`IS better' case}
  \end{subfigure}
  \end{figure}    
        Let $\mathbf{X}\defeq \{0,1,2\}$ and consider the two mass allocations for probabilities $\mu$ and $\nu$ on $\mathbf{X}$ and function $f\in L^2_0(\nu)$ given pictorially in Figure \ref{fig:regions-is-da-better} and precisely in Figure \ref{fig:mass-is-da-better}.
        \begin{figure}[!tbh]
          \caption{Mass allocations for $\mu$, $\nu$, and $f$ on $\mathbf{X}=\{0,1,2\}$, $a\in [\fr{1}{2},1)$.}\centering%
    \label{fig:mass-is-da-better}%
    \begin{subfigure}[b]{.5 \textwidth}%
    $$
\begin{matrix}
\mu&=&&(&\fr{1-a}{2}&\fr{1-a}{2}&a&)\\
\nu&=&&(&{1}/{2}&{1}/{2}&0&)\\
f&=&\vphantom{\fr{\sqrt{2}}{\sqrt{a + a^2}}}&(&1&-1&0&)
\end{matrix}
$$
\caption{`MH/DA better' case}
 \label{fig:mass-da-better}
    \end{subfigure}%
    \begin{subfigure}[b]{.5 \textwidth}
    $$
\begin{matrix}
  \mu&=&&(&1/3&1/3&1/3&)\\
\nu&=&&(&\fr{a}{2}&\fr{1-a}{2}&{1}/{2}&)\\
f&=&\fr{\sqrt{2}}{\sqrt{a + a^2}} &(&1&0&-a&)
\end{matrix}
$$
\caption{`IS better' case}
      \label{fig:mass-is-better}
    \end{subfigure}%
    \end{figure}
        Denote by $q^{(r)}$ the (reflected) random walk proposal on $\mathbf{X}$, given by $q^{(r)}_0(x) = \gd_1(x)$, $q^{(r)}_1(x) = \fr{1}{2}[ \gd_0(x) + \gd_2(x)]$, and $q^{(r)}_2(x) = \gd_1(x)$, and by $q^{(u)}_x(x')$ the uniform proposal on $\mathbf{X}$.  We set $K\defeq$MH$(q\ra \mu)$ and let $L$ be the MH or $\DApr$ kernels, using proposals $q^{(r)}$ or $q^{(u)}$, and targeting $\nu$.  We use a parameter $a\in [\fr{1}{2},1)$ to allow for continuous intensity shifts in the mass allocations in our examples.  Because $\mu$ is constant on the support of $\nu$, one can check that the MH and $\DApr$ kernels coincide for $a\in[\fr{1}{2},1)$.

\begin{table}[]
  \caption{Asymptotic variance as a function of $a\in [1/2,1)$}
    \label{table:asvar}
    \small
    \begin{tabular}{l ccc ccc}
      \toprule
     Proposal &  \multicolumn{1}{c}{$\Var(L,f)$} & $\le$ & $\Var(K,\cv f)$ & $\Var(L,f)$ & $\ge$& $\Var(K,\cv f)$\\
      \toprule
      RW $q^{(r)}$& $1$   && $\fr{1}{1-a}$   & $\fr{-1+8a+a^2}{a^2-1}$ && $\fr{9a}{1+a} $  \\
      \midrule
       uniform $q^{(u)}$& $2 $  && $\fr{1}{1-a}$  & $\fr{-1+10 a-a^2}{(1+a)^2} $ && $\fr{15 a}{ 4(1+a)}$
\\ \bottomrule
    \end{tabular}
\end{table}
\begin{figure}[!htb]
  \begin{tabular}{cc}
    $\underline{\Var(L,f) \le \Var(K,\cv f)}$ &   $\underline{\Var(L,f) \ge \Var(K,\cv f)}$
    \\
  \includegraphics[scale=.5]{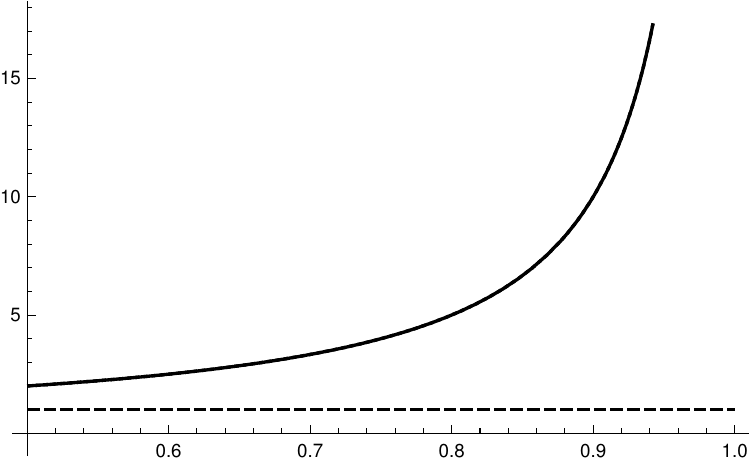}
&
  \includegraphics[scale=.5]{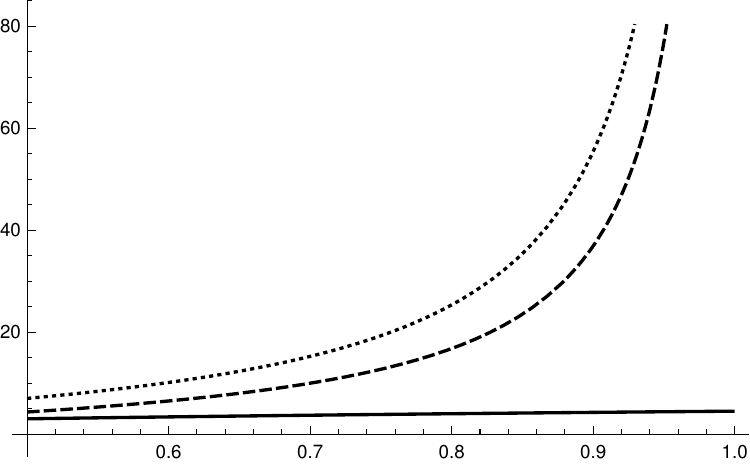}
\\
  \includegraphics[scale=.5]{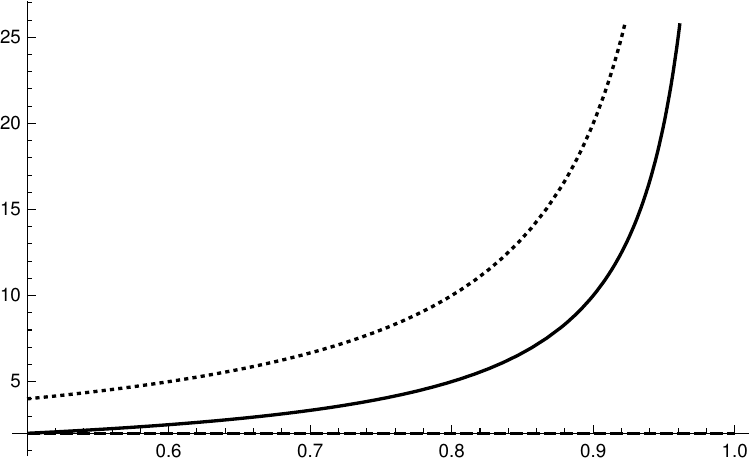}
&
  \includegraphics[scale=.5]{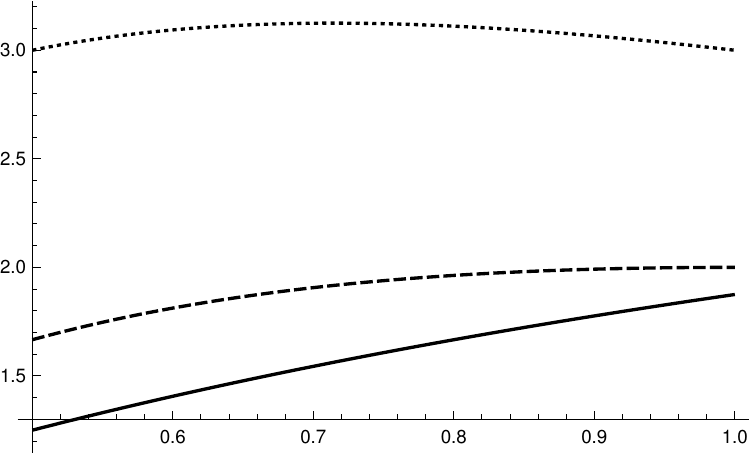}
     \end{tabular}
\caption{Plots from Table \ref{table:asvar}: $\Var(K,\cv f)$ `---', $\Var(L,f)$ `$--$', and $\mathrm{UB}_a(f)$ `$\cdots$', vs. $a\in [\fr{1}{2},1)$.  Here, in the top left, $\mathrm{UB}_a(f)$ exactly coincides with $\Var(K,\cv f)$.}
\label{fig:plots}
\end{figure}

      The resulting IS and MH/DA asymptotic variances, $\Var(K,\cv f)$ and $\Var(L,f)$, can be computed by linear algebra using \citep[see][Cor.~1.5]{kipnis-varadhan}.  They are listed in Table \ref{table:asvar}, and plotted in Figure \ref{fig:plots}.  Here,
\begin{equation}
\mathrm{UB}_a(f) \defeq \max(\cv) \Var(L,f) + \nu(f^2[\max(\cv) - \cv]).
\end{equation}
is the upper bound on $\Var(K,\cv f)$ from Corollary \ref{cor:intro}.




\bibliographystyle{abbrv}  


\begin{thebibliography}{10}

\bibitem{andrieu-scan}
C.~Andrieu.
\newblock On random- and systematic-scan samplers.
\newblock {\em Biometrika}, 103(3):719--726, 2016.

\bibitem{andrieu-doucet-holenstein}
C.~Andrieu, A.~Doucet, and R.~Holenstein.
\newblock Particle {M}arkov chain {M}onte {C}arlo methods.
\newblock {\em J. R. Stat. Soc. Ser. B Stat. Methodol.}, 72(3):269--342, 2010.
\newblock (with discussion).

\bibitem{andrieu-doucet-yildirim-chopin}
C.~Andrieu, A.~Doucet, S.~Y{\i}ld{\i}r{\i}m, and N.~Chopin.
\newblock On the utility of {M}etropolis-{H}astings with asymmetric acceptance
  ratio.
\newblock Preprint arXiv:1803.09527, 2018.

\bibitem{andrieu-lee-vihola}
C.~Andrieu, A.~Lee, and M.~Vihola.
\newblock Uniform ergodicity of the iterated conditional {SMC} and geometric
  ergodicity of particle {G}ibbs samplers.
\newblock {\em Bernoulli}, 24(2), 2018.

\bibitem{andrieu-moulines-ergodicity}
C.~Andrieu and {\'E}.~Moulines.
\newblock On the ergodicity properties of some adaptive {MCMC} algorithms.
\newblock {\em J. Appl. Probab.}, 16(3):1462--1505, 2006.

\bibitem{andrieu-roberts}
C.~Andrieu and G.~Roberts.
\newblock The pseudo-marginal approach for efficient {M}onte {C}arlo
  computations.
\newblock {\em Ann. Statist.}, 37(2):697--725, 2009.

\bibitem{andrieu-thoms}
C.~Andrieu and J.~Thoms.
\newblock A tutorial on adaptive {MCMC}.
\newblock {\em Statist. Comput.}, 18(4):343--373, 2008.

\bibitem{andrieu-vihola-convergence}
C.~Andrieu and M.~Vihola.
\newblock Convergence properties of pseudo-marginal {M}arkov chain {M}onte
  {C}arlo algorithms.
\newblock {\em Ann. Appl. Probab.}, 25(2):1030--1077, 04 2015.

\bibitem{andrieu-vihola-order}
C.~Andrieu and M.~Vihola.
\newblock Establishing some order amongst exact approximations of {MCMCs}.
\newblock {\em Ann. Appl. Probab.}, 2016.
\newblock arXiv:1404.6909.

\bibitem{banterle-grazian-lee-robert}
M.~Banterle, C.~Grazian, A.~Lee, and C.~Robert.
\newblock Accelerating {M}etropolis-{H}astings algorithms by delayed
  acceptance.
\newblock Preprint arXiv:1503.00996, 2015.

\bibitem{bardsley-solonen-haario-laine}
J.~Bardsley, A.~Solonen, H.~Haario, and M.~Laine.
\newblock Randomize-then-optimize: A method for sampling from posterior
  distributions in nonlinear inverse problems.
\newblock {\em SIAM J. Sci. Comput.}, 36(4):A1895--A1910, 2014.

\bibitem{bassetti-diaconis}
F.~Bassetti and P.~Diaconis.
\newblock Examples comparing importance sampling and the {M}etropolis
  algorithm.
\newblock {\em Illinois J. Math.}, 50(1-4):67--91, 2006.

\bibitem{baxendale}
P.~Baxendale.
\newblock Renewal theory and computable convergence rates for geometrically
  ergodic {M}arkov chains.
\newblock {\em Ann. Appl. Probab.}, 15(1B):700--738, 2005.

\bibitem{bellman}
R.~Bellman.
\newblock Some inequalities for the square root of a positive definite matrix.
\newblock {\em Linear Algebra Appl.}, 1(3):321--324, 1968.

\bibitem{caracciolo-pelissetto-sokal}
S.~Caracciolo, A.~Pelissetto, and A.~Sokal.
\newblock Nonlocal {M}onte {C}arlo algorithm for self-avoiding random walks
  with fixed endpoints.
\newblock {\em J. Stat. Phys.}, 60:1--53, 1990.

\bibitem{chopin-jacob-papaspiliopoulos}
N.~Chopin, P.~Jacob, and O.~Papaspiliopoulos.
\newblock {SMC\textsuperscript{2}}: A sequential {M}onte {C}arlo algorithm with
  particle {M}arkov chain {M}onte {C}arlo updates.
\newblock {\em J. R. Stat. Soc. Ser. B Stat. Methodol.}, 75(3):397--426, 2013.

\bibitem{christen-fox}
J.~Christen and C.~Fox.
\newblock Markov chain {M}onte {C}arlo using an approximation.
\newblock {\em J. Comput. Graph. Statist.}, 14(4), 2005.

\bibitem{cui-marzouk-willcox-scalable}
T.~Cui, Y.~Marzouk, and K.~Willcox.
\newblock Scalable posterior approximations for large-scale {B}ayesian inverse
  problems via likelihood-informed parameter and state reduction.
\newblock {\em J. Comput. Phys.}, 315:363--387, 2016.

\bibitem{deligiannidis-doucet-pitt-kohn}
G.~Deligiannidis, A.~Doucet, M.~K. Pitt, and R.~Kohn.
\newblock The correlated pseudo-marginal method.
\newblock Preprint arXiv:1511.04992, 2015.

\bibitem{doss}
H.~Doss.
\newblock Discussion: {M}arkov chains for exploring posterior distributions.
\newblock {\em Ann. Statist.}, 22(4):1728--1734, 1994.

\bibitem{doucMPS}
R.~Douc, E.~Moulines, P.~Priouret, and P.~Soulier.
\newblock {\em Markov chains}.
\newblock Springer, 2018.

\bibitem{douc-robert}
R.~Douc and C.~Robert.
\newblock A vanilla {R}ao-{B}lackwellization of {M}etropolis-{H}astings
  algorithms.
\newblock {\em Ann. Statist.}, 39(1):261--277, 2011.

\bibitem{doucet-pitt-deligiannidis-kohn}
A.~Doucet, M.~Pitt, G.~Deligiannidis, and R.~Kohn.
\newblock Efficient implementation of {M}arkov chain {M}onte {C}arlo when using
  an unbiased likelihood estimator.
\newblock {\em Biometrika}, 102(2):295--313, 2015.

\bibitem{gilks-roberts}
W.~Gilks and G.~Roberts.
\newblock Strategies for improving {MCMC}.
\newblock In {\em Markov chain {M}onte {C}arlo in practice}, volume~6, pages
  89--114. 1996.

\bibitem{glynn-iglehart}
P.~Glynn and D.~Iglehart.
\newblock Importance sampling for stochastic simulations.
\newblock {\em Management Sci.}, 35(11):1367--1392, 1989.

\bibitem{golightly-henderson-sherlock}
A.~Golightly, D.~Henderson, and C.~Sherlock.
\newblock Delayed acceptance particle {MCMC} for exact inference in stochastic
  kinetic models.
\newblock {\em Statist. Comput.}, 25, 2015.

\bibitem{hastings}
W.~Hastings.
\newblock {M}onte {C}arlo sampling methods using {M}arkov chains and their
  applications.
\newblock {\em Biometrika}, 57(1):97--109, Apr. 1970.

\bibitem{jarner-hansen}
S.~Jarner and E.~Hansen.
\newblock Geometric ergodicity of {M}etropolis algorithms.
\newblock {\em Stochastic Process. Appl.}, 85(2):341--361, 2000.

\bibitem{kipnis-varadhan}
C.~Kipnis and S.~Varadhan.
\newblock Central limit theorem for additive functionals of reversible {M}arkov
  processes and applications to simple exclusions.
\newblock {\em Comm. Math. Phys.}, 104(1):1--19, 1986.

\bibitem{lee-yau-giles-doucet-holmes}
A.~Lee, C.~Yau, M.~Giles, A.~Doucet, and C.~Holmes.
\newblock On the utility of graphics cards to perform massively parallel
  simulation of advanced {M}onte {C}arlo methods.
\newblock {\em J. Comput. Graph. Statist.}, 19(4):769--789, 2010.

\bibitem{levin-peres-wilmer}
D.~Levin, Y.~Peres, and E.~Wilmer.
\newblock {\em {M}arkov chains and mixing times}.
\newblock American Mathematical Society, 2009.

\bibitem{lin-liu-sloan}
L.~Lin, K.~Liu, and J.~Sloan.
\newblock A noisy {M}onte {C}arlo algorithm.
\newblock {\em Phys. Rev. D}, 61, 2000.

\bibitem{liu}
J.~Liu.
\newblock Metropolized independent sampling with comparisons to rejection
  sampling and importance sampling.
\newblock {\em Statist. Comput.}, 6(2):113--119, 1996.

\bibitem{liu-mc}
J.~Liu.
\newblock {\em {M}onte {C}arlo {S}trategies in {S}cientific {C}omputing}.
\newblock Springer, New York, 2003.

\bibitem{liuWK}
J.~S. Liu, W.~H. Wong, and A.~Kong.
\newblock {Covariance structure of the Gibbs sampler with applications to the
  comparisons of estimators and augmentation schemes}.
\newblock {\em Biometrika}, 81(1):27--40, 03 1994.

\bibitem{meyn-tweedie}
S.~Meyn and R.~Tweedie.
\newblock {\em Markov Chains and Stochastic Stability}.
\newblock Cambridge University Press, second edition, 2009.

\bibitem{mira-geyer-ordering}
A.~Mira and C.~Geyer.
\newblock Ordering {M}onte {C}arlo {M}arkov {C}hains.
\newblock Technical report, School of Statistics, University of Minnesota,
  1999.

\bibitem{mira-leisen-covariance}
A.~Mira and F.~Leisen.
\newblock Covariance ordering for discrete and continuous time {M}arkov chains.
\newblock {\em Statist. Sinica}, pages 651--666, 2009.

\bibitem{owen-zhou}
A.~Owen and Y.~Zhou.
\newblock Safe and effective importance sampling.
\newblock {\em J. Amer. Statist. Assoc.}, 95(449):135--143, 2000.

\bibitem{parpas-ustun-webster-tran}
P.~Parpas, B.~Ustun, M.~Webster, and Q.~K. Tran.
\newblock Importance sampling in stochastic programming: A {M}arkov chain
  {M}onte {C}arlo approach.
\newblock {\em INFORMS J. Comput.}, 27(2):358--377, 2015.

\bibitem{peskun}
P.~Peskun.
\newblock Optimum {M}onte-{C}arlo sampling using {M}arkov chains.
\newblock {\em Biometrika}, 60(3):607--612, 1973.

\bibitem{quiroz-tran-villani-kohn}
M.~Quiroz, M.-N. Tran, M.~Villani, and R.~Kohn.
\newblock Speeding up {MCMC} by delayed acceptance and data subsampling.
\newblock {\em J. Comput. Graph. Statist.}, 2017.
\newblock To appear.

\bibitem{roberts-rosenthal-harris}
G.~Roberts and J.~Rosenthal.
\newblock Harris recurrence of {M}etropolis-within-{G}ibbs and
  trans-dimensional {M}arkov chains.
\newblock {\em Ann. Appl. Probab.}, 16(4):2123--2139, 2006.

\bibitem{roberts-rosenthal-coupling}
G.~Roberts and J.~Rosenthal.
\newblock Coupling and ergodicity of adaptive {M}arkov chain {M}onte {C}arlo
  algorithms.
\newblock {\em J. Appl. Probab.}, 44(2):458--475, 2007.

\bibitem{roberts-tweedie}
G.~Roberts and R.~Tweedie.
\newblock Geometric convergence and central limit theorems for multidimensional
  {H}astings and {M}etropolis algorithms.
\newblock {\em Biometrika}, 83(1):95--110, 1996.

\bibitem{rudolf-error}
D.~Rudolf.
\newblock Explicit error bounds for {M}arkov chain {M}onte {C}arlo.
\newblock {\em Dissertationes Math.}, 485:93 pages, 08 2012.

\bibitem{sherlock-lee}
C.~Sherlock and A.~Lee.
\newblock Variance bounding of delayed-acceptance kernels.
\newblock Preprint arXiv:1706.02142, 2017.

\bibitem{sherlock-thiery-lee}
C.~Sherlock, A.~Thiery, and A.~Lee.
\newblock Pseudo-marginal {M}etropolis-{H}astings using averages of unbiased
  estimators.
\newblock Preprint arXiv:1610.09788, 2016.

\bibitem{tierney-note}
L.~Tierney.
\newblock A note on {M}etropolis-{H}astings kernels for general state spaces.
\newblock {\em Ann. Appl. Probab.}, 8(1):1--9, 1998.

\bibitem{tran-scharth-pitt-kohn}
M.-N. Tran, M.~Scharth, M.~Pitt, and R.~Kohn.
\newblock Importance sampling squared for {B}ayesian inference in latent
  variable models.
\newblock {\em arXiv:1309.3339v3}, 2014.

\bibitem{vihola-helske-franks}
M.~Vihola, J.~Helske, and J.~Franks.
\newblock Importance sampling type estimators based on approximate marginal
  {MCMC}.
\newblock Preprint arXiv:1609.02541v6, 2016.

\end{thebibliography}

\end{document}